\title{Notes on Computational Graph and \\ Jacobian Accumulation}
\author{Yichong Zhou \\ yichong.zhou@gmail.com}
\date{Dec 24, 2020}
\theoremstyle{definition}
\newtheorem{definition}{Definition}[section]
\newtheorem{lemma}{Lemma}%[section]
\newtheorem{proposition}{Proposition}%[section]
\numberwithin{figure}{section} %subsection
\numberwithin{equation}{section}
\begin{document}
	
\maketitle

\begin{abstract}
The optimal calculation order of a computational graph 
can be represented by a set of algebraic expressions. 
Computational graph and algebraic expression 
both have close relations and significant differences, 
this paper looks into these relations and differences, 
making plain their interconvertibility. 
By revealing different types of multiplication relations in algebraic expressions 
and their elimination dependencies in line-graph, 
we establish a theoretical limit on the efficiency of face elimination. 
\end{abstract}

\section{Face Elimination and D* Algorithm}

After a sparse computational graph or subgraph is linearized, 
i.e., all the derivatives denoted by edges
and vertices in the computational graph are evaluated at a point,
there are many elimination techniques to accumulate derivative values,  
including vertex elimination, edge elimination and face elimination.

Naumann states in his paper that all the elimination techniques mentioned above
``are based on the elimination of transitive dependences between variables in $F$. 
A variable $v_j$ depends transitively on $v_i$ via $v_k$ if $i\prec k\prec j$.
In general, there is no structural representation for eliminating such dependences 
in $G$; that is, it cannot be expressed by modifying either $V$ or $E$. 
A richer data structure is required, namely, a directed variant of the line graph of $G$.''
``All intermediate vertices belong to exactly two directed complete bipartite subgraphs
of $\tilde{G}$. They are minimal in one and maximal in the other.
Intermediate vertices in $G$ are mapped onto complete bipartite subgraphs (or bicliques)
$K_{\nu,\mu}$ of $\tilde{G}$'',
He continues to explain,
``The elimination of transitive dependences can be interpreted 
as the elimination of edges in $\tilde{G}$. 
The modification of the dual c-graph is referred to as face elimination 
in order to distinguish between edge elimination in $\tilde{G}$ and $G$.''
(\cite{naumann2003} 2003, p5-8)

Griewank and Walther state in their book that in edge elimination ``we are forced to 
simultaneously merge them with all their predecessor or successor edges, respectively. 
Sometimes we may wish to do that only with one of them. 
However, that desire cannot be realized by a simple modification
of the linearized computational graph. 
Instead we have to consider the so-called line-graph of 
the computational graph with a slight extension.'' 
They continue:
``Geometrically, we may interpret the edges of the line-graph as faces of the
extended computational graph. Therefore, we will refer to them as faces.'' 
(\cite{griewank2008} 2008, p204)
%\\

% \usepackage{graphicx}
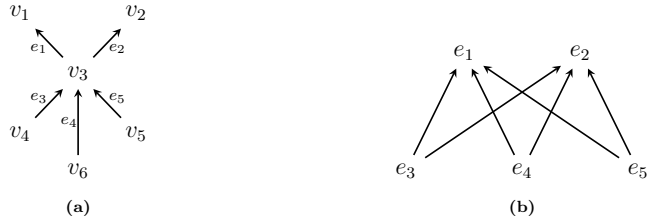
\begin{figure}[H] % [H]  \usepackage{float}
	\centering
	\resizebox{.7\textwidth}{!}{% <------ Don't forget this %
		\begin{subfigure}[b]{0.4\textwidth}
			\centering
			\begin{tikzpicture}[>=stealth,thick]
			\usetikzlibrary{calc}
			\tikzstyle{state}=[]
			\node[state] (q2) [xshift=-1cm] {$v_1$};
            \node[state] (q3) [xshift=1cm] {$v_2$};
			\node[state] (q4) [below  =0.5cm of q3, xshift=-1cm] {$v_3$};
			\node[state] (q5) [below  =0.5cm of q4, xshift=-1cm] {$v_4$};
			\node[state] (q6) [below  =0.5cm of q4, xshift=1cm] {$v_5$};
			\node[state] (q7) [below  =0.1cm of q6, xshift=-1cm] {$v_6$};
			
			\path[<-] 
			
			(q2) edge node[below=-4pt, xshift=-5pt, font=\scriptsize]{$e_1$} (q4)
			(q3) edge node[below=-4pt, xshift=5pt, font=\scriptsize]{$e_2$} (q4)
			(q4) edge node[above=-4pt, xshift=-5pt, font=\scriptsize]{$e_3$} (q5)
			edge node[above=-4pt, xshift=5pt, font=\scriptsize]{$e_5$} (q6)
			edge node[left=-4pt,font=\scriptsize]{$e_4$} (q7)
			;
			\end{tikzpicture}
			\caption{}
			%\label{fig:gull}
		\end{subfigure}
		\begin{subfigure}[b]{0.5\textwidth}
			\centering
			\begin{tikzpicture}[>=stealth,thick]
			\usetikzlibrary{calc}
			\tikzstyle{state}=[]
			
			%\node[state] (q0) {$v_0$};
			\node[state] (q1) [xshift=-1cm] {$e_1$};
			\node[state] (q2) [xshift=1cm] {$e_2$};
			\node[state] (q3) [below  =1.5cm of q1, xshift=-1cm] {$e_3$};
			\node[state] (q4) [below  =1.5cm of q2, xshift=-1cm] {$e_4$};
			\node[state] (q5) [below  =1.5cm of q2, xshift=1cm] {$e_5$};
			
			\path[<-] 
			(q1) edge (q3)
			edge (q4)
			edge (q5)
			(q2) edge (q3)
			edge (q4)
			edge (q5)
			;
			\end{tikzpicture}
			\caption{}
			%\label{fig:gull}
		\end{subfigure}
	}
	\caption{Line-Graph and Biclique}
	\label{fig:biclique}
\end{figure}
The transitive dependences Naumann talks about are actually 
the adjacent edge paths in computational graph, 
i.e., one in-going edge and one out-going edge of a vertex form a in-out vertex path. 
e.g., $(v_4,v_3,v_1)$ in \cref*{fig:biclique} (a) is an adjacent edge path.
As multiple paths could overlap on the same edge and vertex, 
elimination of one edge could result in elimination of multiple paths, 
e.g., adjacent edge path $(v_4,v_3,v_1)$ and $(v_4,v_3,v_2)$ overlap on edge $e_3$,
elimination of $e_3$ could result in elimination of $(v_4,v_3,v_1)$ and $(v_4,v_3,v_2)$. 
It seems impossible to eliminate transitive dependences 
or adjacent edge paths one by one in computational graph. 
However, adjacent edges connected by one vertex in $G$ form 
a complete bipartite subgraph (or biclique) in $\tilde{G}$. 
Since every adjacent edge path in $G$ has an edge in $\tilde{G}$ representing it, e.g., 
\cref*{fig:biclique} (b) is the line-graph of computational graph \cref*{fig:biclique} (a),
edge $(e_3,e_1)$ in \cref*{fig:biclique} (b) represents 
the adjacent edge path $(v_4,v_3,v_1)$ in \cref*{fig:biclique} (a),   
it can be eliminated in the corresponding line-graph, i.e., 
face elimination in $\tilde{G}$ is equivalent to adjacent edge path elimination in $G$. 
%\\

In an article uploaded in 2017, Naumann and Utke make a further explanation: 
``AD exploits the associativity of the chain rule to compute derivatives on the basis of 
the dependency information structurally represented by paths in the computational graph.
The derivative computation can be represented by elimination
steps in the computational graph.''
``Face elimination reduces the granularity of an elimination step 
to a single multiplication operation.''
``We define the optimal Jacobian accumulation (OJA) under chain rule arithmetic 
as a computation of Bauer's formula with minimal multiplication count.''
``Until now there has been no proof that one can attain the minimum with face elimination either.
Face elimination does not cover all possibilities of computing Bauer's formula, 
particularly the commutativity and distributivity of the multiplications.
We outline a proof showing that there is no exploit of chain rule arithmetic manipulations 
in Bauer's formula that can undercut an optimal face elimination, 
thereby establishing that face elimination can indeed solve OJA.''
(\cite{naumann-utke} p1-2)
\\ 

\noindent
As for the rule of face elimination, Naumann described as follows: \\
For all intermediate faces $(i,j) \in \tilde{E}_Z$, \\ ``
(1). If there exists a vertex $k \in \tilde{V}$ such that $P_k=P_i$ and $S_k=S_j$, 
then set $c_k=c_k+c_jc_i$ (absorption); 
else $\tilde{V}=\tilde{V}\cup \{k'\}$ with a new vertex $k'$ 
such that $P_{k'}=P_i$ and $S_{k'}=S_j$ (fillin) 
and labeled with $c_{k'}=c_jc_i$. \\ 
(2). Remove edge $(i,j)$ from $\tilde{E}$. \\ 
(3). Remove $i\in \tilde{V}$ if it is isolated. 
Otherwise, if there exists a vertex $i' \in \tilde{V}$ 
such that $P_{i'}=P_i$ and $S_{i'}=S_i$, 
then set $c_i=c_i+c_{i'}$ (merge); remove $i'$. \\
(4). Repeat Step (3) for $j\in \tilde{V}$. '' \\
(\cite{naumann2003} 2003, p8) \\

In step (1) of Naumann's rule, the absorption $c_k=c_k+c_jc_i$ means that 
we can perform this operation by modifying and reusing $k$ 
instead of creating a new vertex. 
On the principle of reusing vertices, 
when $|S_{i}|=1$, i.e. $S_{i}=\{j\}$, 
we can perform fillin operation by $c_i=c_jc_i$ and $S_{i}=S_j$;
when $|P_{j}|=1$, i.e. $P_{j}=\{i\}$, 
we can perform fillin operation by $c_j=c_jc_i$ and $P_{j}=P_i$; 
%\\

Griewank and Walther also presented their version of modification rules, 
they explained that these modifications will not alter the accumulated values $a_{ij}$,
``but hopefully simplifying their computation''.
In addition to the merge and remove operations similar to those in Naumann's description
they added the ``interior vertex split'' operation:
``make a copy of any interior vertex by assigning the same value and
the same predecessor or successor set while splitting the other set,
i.e., reattaching some of the incoming or some of the outgoing edges
to the new copy.'' 
They continued to explain:
``we can split and merge until every interior vertex $e$ is simply connected in that
it has exactly one incoming and one outgoing edge connecting it to the vertices
$o_j$ and $d_i$ for which then $c_e = a_{ij}$.''
``Applying the modifications described above, one may generate new directed
graphs that are no longer line-graphs of an underlying computational graph.''
(\cite{griewank2008} p205)
%\\

According to the splitting Griewank and Walther mentioned, 
in addition to those presented by Naumann, 
there are more operations that can be performed in line-graph: 
If $P_k=P_i$ and $S_k\subset S_j$, 
we can perform absorption by $c_k=c_k+c_jc_i$, $S_j=S_j-S_k$ without removing $(i,j)$. 
If $P_k\subset P_i$ and $S_k=S_j$, 
we can perform absorption by $c_k=c_k+c_jc_i$, $P_i=P_i-P_k$ without removing $(i,j)$. 
If $P_k=P_i$ and $S_k\supset S_j$, 
we can perform fillin operation 
by $\tilde{V}=\tilde{V}\cup \{k'\}$, $c_{k'}=c_jc_i+c_k$, $S_k=S_k-S_j$ 
if $|S_i|>1\cap|P_j|>1$; 
$c_i=c_jc_i+c_k$, $S_k=S_k-S_j$, $S_{i}=S_j$ 
if $|S_i|=1\cap|P_j|\ge1$; 
$c_j=c_jc_i+c_k$, $S_k=S_k-S_j$, $P_{j}=P_i$  
if $|S_i|\ge1\cap|P_j|=1$. 
If $P_k\supset P_i$ and $S_k=S_j$, 
we can perform fillin operation 
by $\tilde{V}=\tilde{V}\cup \{k'\}$, $c_{k'}=c_jc_i+c_k$, $P_k=P_k-P_i$ 
if $|S_i|>1\cap|P_j|>1$; 
$c_i=c_jc_i+c_k$, $P_k=P_k-P_i$, $S_{i}=S_j$ 
if $|S_i|=1\cap|P_j|\ge1$; 
$c_j=c_jc_i+c_k$, $P_k=P_k-P_i$, $P_{j}=P_i$ 
if $|S_i|\ge1\cap|P_j|=1$. 
If there exists a vertex $i' \in \tilde{V}$ 
such that $P_{i'}\supseteq P_i$ and $S_{i'}=S_i$, 
then we can perform merge operation by $c_i=c_i+c_{i'}$, $P_{i'}=P_{i'}-P_i$;  
or $P_{i'}=P_i$ and $S_{i'}\supseteq S_i$, 
then we can perform merge operation by $c_i=c_i+c_{i'}$, $S_{i'}=S_{i'}-S_i$.  
%\\

In an article published in 2002, 
Griewank and Naumann explain the disadvantages of face elimination:  
``it is clear that face elimination offers an even larger number of choices than edge elimination'' 
``a poor choice of edge or face elimination sequences might lead to an exponential growth of the intermediate graph structure and thus the overall cost. In other words the extra freedom may lead one astray, unless suitable safeguards can be found.''
``If implemented dynamically, it also requires examining and possibly reconnecting all predecessors of the faces origin and all successors of its destination. Concerning the ratio between the number of floating point operations and the number of memory accesses, dynamic face elimination is thus a true nightmare.''
``we may view edge and vertex eliminations as chunkings of face eliminations to achieve a better ratio between computation and communication operations.''
``in any case we must therefore strive for a static implementation, where all decisions regarding the elimination sequence and the necessary memory allocations as well as some of the address and pointer calculation are performed at compile time. They may then be hardwired into a code for evaluating the Jacobian $F'(x)$ efficiently at many different arguments $x$''.
(\cite{griewank2002naumann} 2002, p8-9)
%\\

Brian Guenter states in his paper that 
``D* computes efficient symbolic derivatives 
by symbolically executing the expression graph at compile time 
to eliminate common subexpressions and 
by exploiting the special nature of the graph that represents the derivative of a function.''
``factorable subgraphs are defined by a dominator or postdominator node at a branch in the graph. 
If a dominator node $b$ has more than one child, 
or if a post-dominator node b has more than one parent, then $b$ is a factor node. 
If $c$ is dominated by a factor node $b$ and has more than one parent, 
or $c$ is postdominated by $b$ and has more than one child, then $c$ is a factor base of $b$. 
A factor subgraph, $[b,c]$ consists of a factor node $b$, 
a factor base $c$ of $b$, and those nodes on any path from $b$ to $c$.''
He then describes a factoring algorithm to eliminate factor subgraph in the derivative graph 
and convert it into subgraph edge, adding that 
``factorization does not change the value of the sum of products expression
which represents the derivative so factor subgraphs can be factored in any order''. 
He says 
``The solution to the problem of common factor subgraphs is to 
count the number of times each factor subgraph $[i,j]$ appears in the $nm$ derivative graphs. 
The factor subgraph which appears most often is factored first. 
If factor subgraph $[k,l]$ disappears in some derivative graphs 
as a result of factorization then the count of $[k,l]$ is decremented. 
To determine if factorization has eliminated $[k,l]$ from some derivative graph $f_{ij}$ 
it is only necessary to count the children of a dominator node or the parents of a postdominator node.
If either is one the factor subgraph no longer exists. 
The counts of the $[k,l]$ are efficiently updated during factorization 
by observing if either node of a deleted edge is either a factor or factor base node.
Ranking of the $[k,l]$ can be done efficiently with a priority queue.''
(\cite{Guenter-2007a} 2007, p1-7)
\\

\section{Differentiation Graph}

We realize that, in $\frac{\mathrm{d} y}{\mathrm{d} x}$, 
the relation between $\mathrm{d} y$ and $\mathrm{d} x$ is a partial order 
$\mathrm{d}y \geq \mathrm{d}x$.
e.g., $y=f(g(x))$, 
if we write $\frac{\mathrm d f}{\mathrm d g}$ and $\frac{\mathrm d g}{\mathrm d x}$ 
as $\mathrm d f \to \mathrm d g$ and $\mathrm d g \to \mathrm d x$ respectively, 
the transitive relation of partial order 
between $\mathrm{d} f$ and $\mathrm{d} x$ will be denoted by chain rule. 

$$
\frac{\mathrm d f}{\mathrm d x}
=
\,
\frac{\mathrm d f}{\mathrm d g} \frac{\mathrm d g}{\mathrm d x}
\implies
\,
\textrm d f \to \textrm d x = \mathrm d f \to \mathrm d g \to \mathrm d x
$$
\\
$\textrm d f \to \textrm d x$ means $\textrm d f$ depends on $\textrm d x$  
or there is a path of dependence from $\textrm d f$ to $\textrm d x$.
\\
The relation between adjacent arrows in the path is multiplication.
To get the value of $\textrm d f \to \textrm d x$, 
we multiply all the derivatives of adjacent pairs along the path 
from $\mathrm d f$ to $\mathrm d x$, i.e.,

$$
\mathrm{d} f \to \mathrm{d} g_1 \to \mathrm{d} g_2 \to \cdots \to \mathrm{d} g_n \to \mathrm{d} x
\implies
\frac{\mathrm d f}{\mathrm d g_1} \frac{\mathrm d g_1}{\mathrm d g_2}
\cdots \frac{\mathrm d g_{n}}{\mathrm d x}
$$
\\
As for the multivariate chain rule, e.g., $y=f(g_1(x), g_2(x), g_3(x))$. 
The derivative $\frac{\mathrm d f}{\mathrm d x}$ would be like this: 

$$
\frac{\mathrm d f}{\mathrm d x}
=
\frac{\mathrm d f}{\mathrm d g_1} \frac{\mathrm d g_1}{\mathrm d x} +
\frac{\mathrm d f}{\mathrm d g_2} \frac{\mathrm d g_2}{\mathrm d x} +
\frac{\mathrm d f}{\mathrm d g_3} \frac{\mathrm d g_2}{\mathrm d x}
\implies 
\frac{\mathrm d f}{\mathrm d x} 
\,
=
\,
\mathrm d f
\begin{cases}
\to \mathrm d g_1  \to \mathrm d x        \\
\to \mathrm d g_2  \to \mathrm d x        \\
\to \mathrm d g_3  \to \mathrm d x
\end{cases}
$$
\\
Notice that $\mathrm d f$ is divided into $\mathrm d g_1$, $\mathrm d g_2$ and $\mathrm d g_3$ 
three different paths and we sum up all the products of different paths in the end. 
The relation between different paths is addition, i.e.,

$$
\mathrm d f
\begin{cases}
\to \mathrm d g_1  \to \mathrm d x        \\
\to \mathrm d g_2  \to \mathrm d x        \\
\;\vdots                                    \\
\to \mathrm d g_n  \to \mathrm d x
\end{cases}
\implies
\frac{\mathrm d f}{\mathrm d g_1} \frac{\mathrm d g_1}{\mathrm d x} +
\frac{\mathrm d f}{\mathrm d g_2} \frac{\mathrm d g_2}{\mathrm d x} +
\cdots +
\frac{\mathrm d f}{\mathrm d g_n} \frac{\mathrm d g_n}{\mathrm d x}
$$
\\
Suppose there are intermediate variables $\left[g_1,g_2,g_3\right]$ and $h$ between $y$ and $x$, e.g., \\ $y=f(g_1(h(x)), g_2(h(x)), g_3(h(x)))$, then

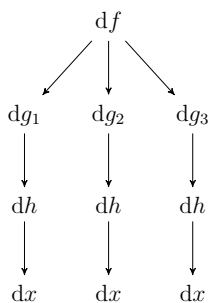
\begin{figure}[H] % [H]  \usepackage{float}
\centering
\resizebox{.4\textwidth}{!}{% <------ Don't forget this %
\begin{tikzpicture}[shorten >=1pt,node distance=2.5cm,>=stealth']
\usetikzlibrary{calc}

\tikzstyle{state}=[]

\node[state] (q0) {$\mathrm{d}f$};
\node[state] (q-1) [above  =0.3cm of q0]
{$\big\Downarrow$};
\node[state] (q-2) [above  =0.3cm of q-1]
{\large$
\frac{\mathrm d f}{\mathrm d x} 
\,
=
\,
\mathrm d f
\begin{cases}
\to \mathrm d g_1  \to \mathrm d h  \to \mathrm d x        \\
\to \mathrm d g_2  \to \mathrm d h  \to \mathrm d x        \\
\to \mathrm d g_3  \to \mathrm d h  \to \mathrm d x
\end{cases}
$};
\node[state] (q-3) [above  =0.3cm of q-2]
{$\big\Downarrow$};
\node[state] (q-4) [above  =0.3cm of q-3]
{\Large$
\frac{\mathrm d f}{\mathrm d x}
=
\frac{\mathrm d f}{\mathrm d g_1} \frac{\mathrm d g_1}{\mathrm d h} \frac{\mathrm d h}{\mathrm d x} +
\frac{\mathrm d f}{\mathrm d g_2} \frac{\mathrm d g_2}{\mathrm d h} \frac{\mathrm d h}{\mathrm d x} +
\frac{\mathrm d f}{\mathrm d g_3} \frac{\mathrm d g_3}{\mathrm d h} \frac{\mathrm d h}{\mathrm d x}
$};

\node[state] (q1) [below  =1cm of q0, xshift=-1.5cm] {$\mathrm{d}g_1$};
\node[state] (q2) [below  =1cm of q0] {$\mathrm{d}g_2$};
\node[state] (q3) [below  =1cm of q0, xshift=1.5cm] {$\mathrm{d}g_3$};
\node[state] (q4) [below  =1cm of q1] {$\mathrm{d}h$};
\node[state] (q5) [below  =1cm of q2] {$\mathrm{d}h$};
\node[state] (q6) [below  =1cm of q3] {$\mathrm{d}h$};
\node[state] (q7) [below  =1cm of q4] {$\mathrm{d}x$};
\node[state] (q8) [below  =1cm of q5] {$\mathrm{d}x$};
\node[state] (q9) [below  =1cm of q6] {$\mathrm{d}x$};

\path[->] 
(q0) edge (q1)
edge (q2)
edge (q3)
(q1) edge (q4)
(q2) edge (q5)
(q3) edge (q6)
(q4) edge (q7)
(q5) edge (q8)
(q6) edge (q9)

;

\end{tikzpicture}
}
\caption{The transformation of differentiation expression}\label{fig:diff-expr-transform}
\end{figure}

We notice that the above structure forms a rooted tree hierarchy.

In practice, instead of seeing $h$ in different branches 
as different subexpressions with the same structure, 
computing $\frac{\mathrm d h}{\mathrm d x}$ and multiplying them by the rest derivatives in each branch respectively, 
we would like to factor out the common factor, compute and multiply by it once for all:

$$
\frac{\mathrm d f}{\mathrm d g_1} \frac{\mathrm d g_1}{\mathrm d h} \frac{\mathrm d h}{\mathrm d x} +
\frac{\mathrm d f}{\mathrm d g_2} \frac{\mathrm d g_2}{\mathrm d h} \frac{\mathrm d h}{\mathrm d x} +
\frac{\mathrm d f}{\mathrm d g_3} \frac{\mathrm d g_3}{\mathrm d h} \frac{\mathrm d h}{\mathrm d x}
\implies
\left(
\frac{\mathrm d f}{\mathrm d g_1} \frac{\mathrm d g_1}{\mathrm d h} +
\frac{\mathrm d f}{\mathrm d g_2} \frac{\mathrm d g_2}{\mathrm d h} +
\frac{\mathrm d f}{\mathrm d g_3} \frac{\mathrm d g_3}{\mathrm d h}
\right)
\frac{\mathrm d h}{\mathrm d x}
$$
\\

Similar to the above factorization, 
when we factor out the common factor $\frac{\mathrm d h}{\mathrm d x}$, 
we merge the $\mathrm d h$ in the graph.

\begin{figure}[H] % [H]  \usepackage{float}
	\centering
	\resizebox{.40\textwidth}{!}{% <------ Don't forget this %
	\begin{tikzpicture}[>=stealth,thick]
\usetikzlibrary{calc}

\tikzstyle{state}=[]

\node[state] (q0) {$\mathrm{d}f$};
\node[state] (q1) [below  =1cm of q0, xshift=-1.5cm] {$\mathrm{d}g_1$};
\node[state] (q2) [below  =1cm of q0] {$\mathrm{d}g_2$};
\node[state] (q3) [below  =1cm of q0, xshift=1.5cm] {$\mathrm{d}g_3$};
\node[state] (q4) [below  =1cm of q1] {$\mathrm{d}h$};
\node[state] (q5) [below  =1cm of q2] {$\mathrm{d}h$};
\node[state] (q6) [below  =1cm of q3] {$\mathrm{d}h$};
\node[state] (q7) [below  =1cm of q4] {$\mathrm{d}x$};
\node[state] (q8) [below  =1cm of q5] {$\mathrm{d}x$};
\node[state] (q9) [below  =1cm of q6] {$\mathrm{d}x$};

\node[state] (arr1) [xshift=2.8cm, yshift=-1.5cm]{$\xRightarrow{\;\frac{\mathrm{d}h}{\mathrm{d}x}\,\;}$};

\node[state] (q20) [xshift=5.5cm] {$\mathrm{d}f$};
\node[state] (q21) [below  =1cm of q20, xshift=-1.5cm] {$\mathrm{d}g_1$};
\node[state] (q22) [below  =1cm of q20] {$\mathrm{d}g_2$};
\node[state] (q23) [below  =1cm of q20, xshift=1.5cm] {$\mathrm{d}g_3$};
\node[state] (q24) [below  =1cm of q22] {$\mathrm{d}h$};
\node[state] (q25) [below  =1cm of q24] {$\mathrm{d}x$};
\path[->] 
(q0) edge (q1)
edge (q2)
edge (q3)
(q1) edge (q4)
(q2) edge (q5)
(q3) edge (q6)
(q4) edge (q7)
(q5) edge (q8)
(q6) edge (q9)

(q20) edge (q21)
edge (q22)
edge (q23)
(q21) edge (q24)
(q22) edge (q24)
(q23) edge (q24)
(q24) edge (q25)
;

\end{tikzpicture}
	}
\caption{}\label{fig:graph-factoring}
\end{figure}

Note that we not only merged all the $\mathrm d h$ nodes, 
but also all the $\mathrm d x$ nodes and the edges between $\mathrm d h$ and $\mathrm d x$.

The same rule also applies to all the leaf nodes of same variable 
because $\frac{\mathrm d x}{\mathrm d x}=1$.

\begin{figure}[H] % [H]  \usepackage{float}
	\centering
	\resizebox{.7\textwidth}{!}{% <------ Don't forget this %
	\begin{tikzpicture}[>=stealth,thick]
\usetikzlibrary{calc}

\tikzstyle{state}=[]

\node[state] (eq) [above  =1.5cm of q20] 
{$
\frac{\mathrm d f}{\mathrm d g_1} \frac{\mathrm d g_1}{\mathrm d x} +
\frac{\mathrm d f}{\mathrm d g_2} \frac{\mathrm d g_2}{\mathrm d x} +
\frac{\mathrm d f}{\mathrm d g_3} \frac{\mathrm d g_2}{\mathrm d x}
\implies 
\left(
\frac{\mathrm d f}{\mathrm d g_1} \frac{\mathrm d g_1}{\mathrm d x} +
\frac{\mathrm d f}{\mathrm d g_2} \frac{\mathrm d g_2}{\mathrm d x} +
\frac{\mathrm d f}{\mathrm d g_3} \frac{\mathrm d g_2}{\mathrm d x}
\right)
\frac{\mathrm d x}{\mathrm d x}
\implies 
\left(
\frac{\mathrm d f}{\mathrm d g_1} \frac{\mathrm d g_1}{\mathrm d x} +
\frac{\mathrm d f}{\mathrm d g_2} \frac{\mathrm d g_2}{\mathrm d x} +
\frac{\mathrm d f}{\mathrm d g_3} \frac{\mathrm d g_2}{\mathrm d x}
\right)
$};

\node[state] (q0) {$\mathrm{d}f$};
\node[state] (q1) [below  =1cm of q0, xshift=-1.5cm] {$\mathrm{d}g_1$};
\node[state] (q2) [below  =1cm of q0] {$\mathrm{d}g_2$};
\node[state] (q3) [below  =1cm of q0, xshift=1.5cm] {$\mathrm{d}g_3$};
\node[state] (q4) [below  =1cm of q1] {$\mathrm{d}x$};
\node[state] (q5) [below  =1cm of q2] {$\mathrm{d}x$};
\node[state] (q6) [below  =1cm of q3] {$\mathrm{d}x$};

\node[state] (arr1) [xshift=2.8cm, yshift=-1.8cm]{$\xRightarrow[\frac{\mathrm{d}x}{\mathrm{d}x}\;]{\;\,(\;)\,\,\;}$};

\node[state] (q10) [xshift=5.5cm]{$\mathrm{d}f$};
\node[state] (q11) [below  =1cm of q10, xshift=-1.5cm] {$\mathrm{d}g_1$};
\node[state] (q12) [below  =1cm of q10] {$\mathrm{d}g_2$};
\node[state] (q13) [below  =1cm of q10, xshift=1.5cm] {$\mathrm{d}g_3$};
\node[state] (q14) [below  =1cm of q12] {$\mathrm{d}x$};

\node[state] (arr2) [xshift=8.3cm, yshift=-1.7cm]{$\xRightarrow[\,]{\;\,(\;)\,\,\;}$};

\node[state] (q20) [xshift=11cm]{$\mathrm{d}f$};
\node[state] (q21) [below  =1cm of q20, xshift=-1.5cm] {$\mathrm{d}g_1$};
\node[state] (q22) [below  =1cm of q20] {$\mathrm{d}g_2$};
\node[state] (q23) [below  =1cm of q20, xshift=1.5cm] {$\mathrm{d}g_3$};
\node[state] (q24) [below  =1cm of q22] {$\mathrm{d}x$};

\node[state] (ar1) [above  =0.4cm of q0]{$\big\Downarrow$};
\node[state] (ar2) [above  =0.4cm of q10]{$\big\Downarrow$};
\node[state] (ar3) [above  =0.4cm of q20]{$\big\Downarrow$};

\path[->] 
(q0) edge (q1)
edge (q2)
edge (q3)
(q1) edge (q4)
(q2) edge (q5)
(q3) edge (q6)

(q10) edge (q11)
edge (q12)
edge (q13)
(q11) edge (q14)
(q12) edge (q14)
(q13) edge (q14)
(q14) edge[loop below,font=\scriptsize] (q14)

(q20) edge (q21)
edge (q22)
edge (q23)
(q21) edge (q24)
(q22) edge (q24)
(q23) edge (q24)
;

\end{tikzpicture}
	}
\caption{}\label{fig:graph-factoring-terminal}
\end{figure}

Notice that we removed the $\frac{\mathrm d x}{\mathrm d x}$ self loop in the end.
The parentheses `()' operator, 
it sees the inside differentiation expression as a whole and 
evaluate it before any adjacent outside operation. 
Because we see $x$ as the same variable during the process of factoring out the common factor, 
naturally it merges all the $\mathrm d x$ nodes to evaluate.

We notice that there is close correspondence between
differentiation graph and differentiation expression
which can be converted to each other.
Although differentiation graph and differentiation expression basically
are the same thing in different form, there are some differences between them.
The most obvious one is the terminal variable.
In algebraic differentiation expression, 
we use the same symbol in different position to denote they are the same,
whereas in differentiation graph, 
the terminal variable can be in different positions, 
e.g., the trees in \cref*{fig:graph-factoring} and \cref*{fig:graph-factoring-terminal}, 
they can also be in the same position, 
e.g., the DAG's in \cref*{fig:graph-factoring} and \cref*{fig:graph-factoring-terminal}.

Note how we arrived at the rooted DAG in \cref*{fig:graph-factoring}, 
we see the rooted tree as an intermediate graph, 
just like how we manipulate the function expression: 
first construct the tree hierarchy, 
then factor out the common factor by merging vertices.
The rooted tree is more similar to the algebraic expression
in terms of using same variable in different position.
e.g., the trees in \cref*{fig:graph-factoring} and \cref*{fig:graph-factoring-terminal}
have the exactly same structures as the corresponding algebraic expression.

So what does $\mathrm{d}y \to \mathrm{d}x$ mean?
In fact, we can further simplify the notation by removing `$\mathrm{d}$': $y \to x$.
The arrow $\to$ denotes that variable $y$ depends on variable $x$, 
and we take the derivative of $y$ with respect to $x$.Therefore,

\begin{figure}[H] % [H]  \usepackage{float}
	\centering
	\resizebox{0.66\textwidth}{!}{% <------ Don't forget this % .62\textwidth
    \begin{subfigure}[b]{0.3\textwidth}
    	\centering
    	\begin{tikzpicture}[>=stealth,thick]
          \usetikzlibrary{calc}
          \tikzstyle{state}=[]
          \node[state] (q0) {$\mathrm{d}f$};
          \node[state] (q1) [below  =0.95cm of q0, xshift=-1.5cm] {$\mathrm{d}g_1$};
          \node[state] (q2) [below  =0.95cm of q0] {$\mathrm{d}g_2$};
          \node[state] (q3) [below  =0.95cm of q0, xshift=1.5cm] {$\mathrm{d}g_3$};
          \node[state] (q4) [below  =0.95cm of q2] {$\mathrm{d}h$};
          \node[state] (q5) [below  =0.95cm of q4] {$\mathrm{d}x$};
          \path[->] 
          (q0) edge[] (q1)
          edge (q2)
          edge (q3)
          (q1) edge (q4)
          (q2) edge (q4)
          (q3) edge (q4)
          (q4) edge (q5)
          ;
       \end{tikzpicture}
	   \caption{}
	   %\label{fig:gull}
    \end{subfigure}
    \begin{subfigure}[b]{0.039\textwidth}
    	\centering
    	\begin{tikzpicture}
    	   \node (q0) {};
    	   \node (q1)[above  =3.5cm of q0, xshift=0cm] {$\xRightarrow{\quad}$};
    	\end{tikzpicture}
    \end{subfigure}
    \begin{subfigure}[b]{0.3\textwidth}
        \centering
        	\begin{tikzpicture}[>=stealth,thick]
        	\usetikzlibrary{calc}
        	\tikzstyle{state}=[]
            \node[state] (q20) [xshift=0cm] {$f$};
            \node[state] (q21) [below  =1cm of q20, xshift=-1.5cm] {$g_1$};
            \node[state] (q22) [below  =1cm of q20] {$g_2$};
            \node[state] (q23) [below  =1cm of q20, xshift=1.5cm] {$g_3$};
            \node[state] (q24) [below  =1cm of q22] {$h$};
            \node[state] (q25) [below  =1cm of q24] {$x$};
        	\path[->] 
            (q20) edge node[auto,swap,font=\scriptsize]{$\frac{\partial f}{\partial g_1}$} (q21)
            edge node[font=\scriptsize]{$\frac{\partial f}{\partial g_2}$} (q22)
            edge node[auto,font=\scriptsize]{$\frac{\partial f}{\partial g_3}$} (q23)
            (q21) edge node[auto,swap,font=\scriptsize]{$\frac{\mathrm{d} g_1}{\mathrm{d} h}$} (q24)
            (q22) edge node[font=\scriptsize]{$\frac{\mathrm{d} g_2}{\mathrm{d} h}$} (q24)
            (q23) edge node[auto,font=\scriptsize]{$\frac{\mathrm{d} g_3}{\mathrm{d} h}$} (q24)
            (q24) edge node[font=\scriptsize]{$\frac{\mathrm{d} h}{\mathrm{d} x}$} (q25)
        	;
        	\end{tikzpicture}
    	\caption{}
    	%\label{fig:gull}
    \end{subfigure}
    \begin{subfigure}[b]{0.039\textwidth}
	  \centering
	  \begin{tikzpicture}
	    \node (q0) {};
	    \node (q1)[above  =3.5cm of q0, xshift=0cm] {$\xRightarrow{\quad}$};
	  \end{tikzpicture}
    \end{subfigure}
    \begin{subfigure}[b]{0.3\textwidth}
	\centering
	\begin{tikzpicture}[>=stealth,thick]
	\usetikzlibrary{calc}
	\tikzstyle{state}=[]
    \node[state] (q30) [xshift=0cm] {$f$};
    \node[state] (q31) [below  =1cm of q30, xshift=-1.5cm] {$g_1$};
    \node[state] (q32) [below  =1cm of q30] {$g_2$};
    \node[state] (q33) [below  =1cm of q30, xshift=1.5cm] {$g_3$};
    \node[state] (q34) [below  =1cm of q32] {$h$};
    \node[state] (q35) [below  =1cm of q34] {$x$};
	\path[->] 
    (q30) edge (q31)
          edge (q32)
          edge (q33)
    (q31) edge (q34)
    (q32) edge (q34)
    (q33) edge (q34)
    (q34) edge (q35)
	;
	\end{tikzpicture}
	\caption{}
	%\label{fig:gull}
    \end{subfigure}
	}
\caption{Remove `$\mathrm{d}$'}\label{fig:graph-remove-d}
\end{figure}

We find that the trace of multi-path chain rule 
is the trace of dependence in the function expression.
Therefore a differentiation graph is also a dependence graph of function expressions. 
For differentiation graph,
if we define the direction of edges in paths as vertical, 
the direction of aligning sibling paths is horizontal, 
according to multi-path chain rule, 
between edges in the vertical direction there are multiplication operators, 
between sibling paths aligned in the horizontal direction 
there are addition operators. 
However, we cannot compute the root value or intermediate value of function expression
using its dependence graph, as there is no operator in dependence graph.
 
Note that if we reverse the direction of all the edges in \cref*{fig:graph-remove-d} (c), 
it becomes a computational graph. 
Computational graph is more like a generalized term, 
which means it is not limited to the computation of derivatives. 
Differentiation graph is computational graph, 
just like we say differentiation expression is algebraic expression. 
As Parr and Howard have mentioned in their paper(\cite{parr2018matrix},2018), 
the opposite arrow direction denotes the direction of data flow. 
In the diagrams of following sections, 
we will still use the direction of dependence as the default direction. 
\\

\section{Notations and Terminology}

Let $G=(V,E)$ be a differentiation graph, $v_i\in V \ni v_j\cap v_i\prec v_j$. 
We use $P_{v_j\to v_i}$ to denote the set of all the paths from vertex $v_j$ to $v_i$ 
and $p_{v_j\to v_i} (\in P_{v_j\to v_i})$ to denote a path in $P_{v_j\to v_i}$.
$|P_{v_j\to v_i}|$ is the number of paths in $P_{v_j\to v_i}$, 
$|p_{v_j\to v_i}|$ is the length of path $p_{v_j\to v_i}$, i.e., the number of edges in the path;
Then the formula of multi-path chain rule or Bauer’s formula can be rewritten as:

$$
\frac{\partial v_y}{\partial v_x}=\sum_{p_j\in \mathrm{P}_{v_y\to v_x}}\prod_{e_i\in p_j}e_i
$$

We use $e_i$ to denote the $i$th edge in $E$, 
and $e_i^{-}$ and $e_i^{+}$ to denote 
the source node and destination node of directed edge $e_i$ respectively. 
We also use $\langle j,i\rangle$ or $\langle v_j,v_i\rangle$ to denote the directed edge from vertex $v_j$ to $v_i$. 

We use $P_{e_j\to e_i}$ to denote 
the set of all the paths from edge $e_j$ to $e_i$.
Apparently, $P_{e_j\to e_i}\equiv P_{e_j^{+}\to e_i^{-}}$.
We use $v_i^{-}$ and $v_i^{+}$ to denote 
the incoming edges and outgoing edges of intermediate vertex $v_i$ respectively.
Therefore the set of all adjacent edge paths between the incoming edges and outgoing edges of $v_i$
is $P_{v_i}\equiv P_{v_i^{-}\to v_i^{+}}$. 
$\forall p_k\in P_{v_i}, |p_k|=2$.
We use $v_i^{--}$ or $v_i^{=}$ to denote all the predecessors of $v_i$, 
and $v_i^{++}$ or $v_i^{\#}$ to denote all the successors of $v_i$. 
Let $V'\subseteq V$, 
then $V'^{=}$ is the set of all the predecessors of vertices in $V'$, 
$V'^{\#}$ is the set of all the successors of vertices in $V'$.

Let $Y$,$Z$,$X$ be the roots, intermediates and terminals respectively,
$Y\cup Z\cup X=V$.
Then $P_{Y\to X}$ is the set of all the paths from roots $\mathrm{Y}$ to terminals $X$. 
We use $Y^{+}$ to denote all the outgoing edges from $Y$, 
$Y^{\#}$ to denote all the successors of roots,
$X^{-}$ to denote all the incoming edges to $X$, 
$X^{=}$ to denote all the predecessors of terminals. 
$Y^{-}=\emptyset$, $Y^{=}=\emptyset$, $X^{+}=\emptyset$, $X^{\#}=\emptyset$. 
$\forall v_i\in V$: if $v_i^{-}=\emptyset \cup v_i^{=}=\emptyset$, then $v_i\in Y$; 
if $v_i^{+}=\emptyset \cup v_i^{\#}=\emptyset$, then $v_i\in X$;

We use $\mathrm{D}^{-}(v_i)$ to denote the in-degree of $v_i$, 
i.e., the number of incoming edges of $v_i$, 
$\mathrm{D}^{+}(v_i)$ to denote the out-degree of $v_i$, 
i.e., the number of outgoing edges of $v_i$. 
$\forall x_i\in X,\mathrm{D}^{+}(x_i)=0$, 
$\forall y_j\in Y,\mathrm{D}^{-}(y_j)=0$.
$\forall v_i\in V, \mathrm{D}^{-}(v_i)\ge 0, \mathrm{D}^{+}(v_i)\ge 0$. 
We use $\mathrm{D}_r^{-}(v_i)$ to denote the r-degree of $v_i$, i.e., 
the number of root vertices that have at least one path to $v_i$, 
and $\mathrm{D}_t^{+}(v_i)$ to denote the t-degree of $v_i$, i.e., 
the number of terminal vertices to which $v_i$ has at least one path. 
We use $X_i$ to denote the set of terminals to which $v_i$ has at least one path, 
$Y_i$ to denote the set of roots that have at least one path to $v_i$. 
Note that $X_i$ and $Y_i$ are the concepts of index domain and index range 
mentioned in \cite{griewank2008} (P147). 
$\mathrm{D}_r^{-}(v_i)=|Y_i|$,  $\mathrm{D}_t^{+}(v_i)=|X_i|$. 

Notice that there could be more than one path going through one edge. %\\
We use $\mathrm{O}_{P}(e_{k})$ to denote 
the overlapping degree of an edge $e_{k}\in E$ in a set of paths $P$,
i.e., the number of paths in $P$ going through edge $e_{k}$.
Therefore, $\forall v_i\in Z$,

$$
\mathrm{O}_{P_{v_i}}(e_{k})=
\begin{cases}
\mathrm{D}^{+}(v_i)  &\quad (e_{k}\in v_i^{-}) \\
\mathrm{D}^{-}(v_i)  &\quad (e_{k}\in v_i^{+})
\end{cases}
$$

$\forall e_k\in E, \mathrm{O}_{P}(e_{k})\ge 0$.

\begin{definition}\label{def:block}
In a differentiation graph $G=(V,E)$, 
if there are at least two separate paths between vertex $v_i$ and $v_j$ 
that do not share any intermediate vertex at the same graph location, 
and all the intermediate vertices in $P_{v_i\to v_j}$ have no outgoing edge to 
or incoming edge from vertices that are not in $P_{v_i\to v_j}$, 
then all the vertices(including $v_i$ and $v_j$) and edges 
in all those paths from $v_i$ to $v_j$ form an block $b\langle v_i,v_j\rangle$. 
$v_i$ is the source of $b\langle v_i,v_j\rangle$, 
$v_j$ is the sink of $b\langle v_i,v_j\rangle$. 
\end{definition}

If there is a block between $v_i$ and $v_j$, 
it is a differentiation graph of $\frac{\mathrm{d}v_i}{\mathrm{d}v_j}$. 

\begin{definition}\label{def:simple-chain}
Assume that there are two different vertices $v_i$ and $v_j$ in differentiation graph $G=(V,E)$. 
A path $p_{v_i\to v_j}$ is a direct chain or direct simple chain $e'\langle v_i,v_j\rangle$ 
if it has at least 1 intermediate vertex and the in-degree 
and out-degree of every intermediate vertex in $p_{v_i\to v_j}$ are both 1. 
If after every block whose intermediate vertices do not include $v_i$ or $v_j$ 
is replaced by a substitute edge in $G$, 
path $p_{v_i\to v_j}$ is a direct simple chain that contains at least 1 substitute edge, 
then all the non-substitute edges and all the vertices in $p_{v_i\to v_j}$ 
and all the vertices and edges in the corresponding blocks of those substitute edges 
in $p_{v_i\to v_j}$ form an indirect chain $c\langle v_i,v_j\rangle$. 
Both direct chain and indirect chain are called chain. 
If all the corresponding blocks of those substitute edges in $p_{v_i\to v_j}$ are simple blocks, 
then $c\langle v_i,v_j\rangle$ is an indirect simple chain $e'\langle v_i,v_j\rangle$. 
Both direct simple chain and indirect simple chain are called simple chain. 
If a chain $c\langle v_i,v_j\rangle$ is not an simple chain, then it is a complex chain. 
$v_i$ is the source of $c\langle v_i,v_j\rangle$ and $e'\langle v_i,v_j\rangle$, 
$v_j$ is the sink of $c\langle v_i,v_j\rangle$ and $e'\langle v_i,v_j\rangle$. 
\end{definition}

Note that an edge could be a path of length 1, 
but here in our definition it is not a chain, 
and an indirect chain has more than one path. 

\begin{definition}\label{def:simple-block} 
In a differentiation graph $G=(V,E)$, 
if every outgoing edge of vertex $v_i$ belongs to a simple chain to $v_j$ or connects to $v_j$, 
and there are at least 2 outgoing edge of $v_i$ are not in the same simple chain, 
then all the vertices(including $v_i$ and $v_j$) and edges in all those paths from $v_i$ to $v_j$ 
form an simple block $e\langle v_i,v_j\rangle$. 
$v_i$ is the source of $e\langle v_i,v_j\rangle$, 
$v_j$ is the sink of $e\langle v_i,v_j\rangle$. 
If all the simple chains in $e\langle v_i,v_j\rangle$ are direct simple chains, then
$e\langle v_i,v_j\rangle$ is a direct simple block, otherwise it is an indirect simple block.
If a block $b\langle v_i,v_j\rangle$ is not an simple block, then it is a complex block.
\end{definition}

\newpage
\begin{definition}\label{def:vertex-depth-level} 
In a differentiation graph $G=(V,E)$, 
all the root vertices are in the same first level of depth, 
and all the terminal vertices are in the same last level of depth. 
The depth level of an intermediate vertex is 
the length of the longest path of all the paths from root vertices to that vertex. 
We use $L(v_i)$ to denote the depth level of vertex $v_i$. 
If vertex $v_i$ has an directed edge $e$ connected to vertex $v_j$ and $L(v_j)-L(v_i)>1$, 
then $e$ is a cross-level edge. 
We use $L(Y)$ to denote the first level of depth, 
$L(X)$ to denote the last level of depth, 
and $L(G)$ to denote the depth of $G$. 
$L(Y)=0$, $L(G)=L(X)$. 
\end{definition} 

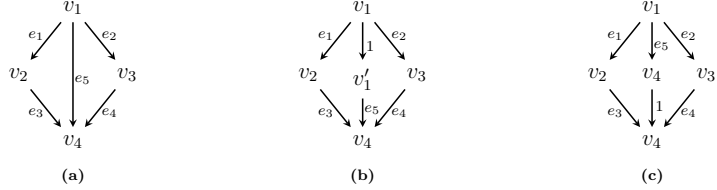
\begin{figure}[H] % [H]  \usepackage{float}
	\centering
	\resizebox{.22\textwidth}{!}{% <------ Don't forget this %
		\begin{subfigure}[b]{0.3\textwidth}
			\centering
			\begin{tikzpicture}[>=stealth,thick]
			\usetikzlibrary{calc}
			\tikzstyle{state}=[]
			\node[state] (q1) [below  =0.7cm of q0, xshift=0cm] {$v_1$};
			\node[state] (q2) [below  =0.7cm of q1, xshift=-1cm] {$v_2$};
			\node[state] (q3) [below  =0.7cm of q1, xshift=1cm] {$v_3$};
			\node[state] (q4) [below  =0.7cm of q3, xshift=-1cm] {$v_4$};
			
			\path[->] 
			
			(q1) edge node[above=-4pt, xshift=-5pt, font=\scriptsize]{$e_1$} (q2)
			edge node[above=-4pt, xshift=5pt, font=\scriptsize]{$e_2$} (q3)
			edge node[below=-4pt, xshift=5pt, font=\scriptsize]{$e_5$} (q4)
			(q2) edge node[below=-4pt, xshift=-5pt, font=\scriptsize]{$e_3$} (q4)
			(q3) edge node[below=-4pt, xshift=5pt, font=\scriptsize]{$e_4$} (q4)
			;
			\end{tikzpicture}
			\caption{}
			%\label{fig:gull}
		\end{subfigure}
	}
	\resizebox{.22\textwidth}{!}{% <------ Don't forget this %
		\begin{subfigure}[b]{0.3\textwidth}
			\centering
			\begin{tikzpicture}[>=stealth,thick]
			\usetikzlibrary{calc}
			\tikzstyle{state}=[]
			\node[state] (q1) [below  =0.7cm of q0, xshift=0cm] {$v_1$};
			\node[state] (q1-1) [below  =0.7cm of q1, xshift=0cm] {$v_1'$};
			\node[state] (q2) [below  =0.7cm of q1, xshift=-1cm] {$v_2$};
			\node[state] (q3) [below  =0.7cm of q1, xshift=1cm] {$v_3$};
			\node[state] (q4) [below  =0.7cm of q3, xshift=-1cm] {$v_4$};
			
			\path[->] 
			
			(q1) edge node[above=-4pt, xshift=-5pt, font=\scriptsize]{$e_1$} (q2)
			edge node[above=-4pt, xshift=5pt, font=\scriptsize]{$e_2$} (q3)
			edge node[below=-4pt, xshift=4pt, font=\scriptsize]{$1$} (q1-1)
			(q1-1) edge node[below=-8pt, xshift=5pt, font=\scriptsize]{$e_5$} (q4)
			(q2) edge node[below=-4pt, xshift=-5pt, font=\scriptsize]{$e_3$} (q4)
			(q3) edge node[below=-4pt, xshift=5pt, font=\scriptsize]{$e_4$} (q4)
			;
			\end{tikzpicture}
			\caption{}
			%\label{fig:gull}
		\end{subfigure}
	}
	\resizebox{.22\textwidth}{!}{% <------ Don't forget this %
		\begin{subfigure}[b]{0.3\textwidth}
			\centering
			\begin{tikzpicture}[>=stealth,thick]
			\usetikzlibrary{calc}
			\tikzstyle{state}=[]
			\node[state] (q1) [below  =0.7cm of q0, xshift=0cm] {$v_1$};
			\node[state] (q1-1) [below  =0.7cm of q1, xshift=0cm] {$v_4$};
			\node[state] (q2) [below  =0.7cm of q1, xshift=-1cm] {$v_2$};
			\node[state] (q3) [below  =0.7cm of q1, xshift=1cm] {$v_3$};
			\node[state] (q4) [below  =0.7cm of q3, xshift=-1cm] {$v_4$};
			
			\path[->] 
			
			(q1) edge node[above=-4pt, xshift=-5pt, font=\scriptsize]{$e_1$} (q2)
			edge node[above=-4pt, xshift=5pt, font=\scriptsize]{$e_2$} (q3)
			edge node[below=-4pt, xshift=5pt, font=\scriptsize]{$e_5$} (q1-1)
			(q1-1) edge node[below=-8pt, xshift=4pt, font=\scriptsize]{$1$} (q4)
			(q2) edge node[below=-4pt, xshift=-5pt, font=\scriptsize]{$e_3$} (q4)
			(q3) edge node[below=-4pt, xshift=5pt, font=\scriptsize]{$e_4$} (q4)
			;
			\end{tikzpicture}
			\caption{}
			%\label{fig:gull}
		\end{subfigure}
	}
	\caption{Cross-level edge segmentation}\label{fig:cross-level-edge-segmentation}
\end{figure}

We can segment a cross-level edge by filling the positions of depth levels 
crossed by it with vertices derived from its head or tail vertex, 
turning it into a direct simple chain. 
e.g., \cref*{fig:cross-level-edge-segmentation} (b) or (c) is the segmentation result of 
cross-level edge $\langle v_1,v_4\rangle$ in \cref*{fig:cross-level-edge-segmentation} (a). 
In a differentiation graph in which all the cross-level edges are segmented, 
every two adjacent depth levels form a bipartite subgraph or local Jacobian. 
Note that the calculation of $1\cdot e_5$ in $(e_1e_3+1\cdot e_5+e_2e_4)$ is unnecessary, 
just like the calculation of zero entries in local Jacobian multiplication. 

We denote the set of all vertices in depth level $l$ by $V^l$, 
e.g., $Y=V^0$, $X=V^h$, where $h$ is the depth level of terminals. 
Let $V^a$, $V^b$ and $V^c$($a<b<c$) be the sets of vertices in different depth levels, 
then for $v_i\in V^b$, 
the set of vertices in $V^c$ to which $v_i$ has at least one path is $V^c_i$ or $X^c_i$, 
the set of vertices in $V^a$ that have at least one path to $v_i$ is $V^a_i$ or $Y^a_i$, 
where $X_i=V^h_i=X^h_i$, $Y_i=V^0_i=Y^0_i$. 

We use $J^i_k$ to denote the Jacobian between depth level $i$ and $k$, 
e.g., the global Jacobian is $J^0_{L(x)}$. 
However, local Jacobians are not limited to two different depth levels. 
We use $J\langle S_1|S_2\rangle$ to denote the Jacobian between two sets of vertices, 
e.g., $J\langle Y|X\rangle$ is the global Jacobian. 
We use $J\langle v_a,v_b|v_c,v_d\rangle$ to denote the local Jacobian 
between $v_a,v_b$ and $v_c,v_d$, 
or $J\langle v_i|v_i^{\#}\rangle$ to denote the local Jacobian 
between $v_i$ and $v_i^{\#}$.

\begin{definition} 
Let $G'=(U,V,E')$ be a bipartite graph or subgraph 
with directed edges from $U$ to $V$, 
after we remove all the directions of directed edges in $E'$, 
it becomes an undirected graph $G=(V,E)$. 
If $\exists v_i\in G$, 
$\forall v_j\in G\cap v_j\ne v_i, P_{v_i\to v_j}\ne\emptyset$, 
then $G'$ is an undivided bipartite graph or subgraph. 
\end{definition}

\section{Factorization and Conversion}

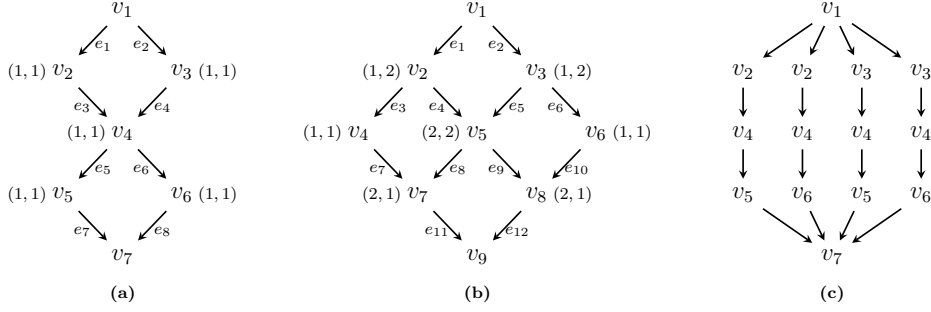
\begin{figure}[H] % [H]  \usepackage{float}
	\centering
	\resizebox{.8\textwidth}{!}{% <------ Don't forget this %
		\begin{subfigure}[b]{0.3\textwidth}
			\centering
			\begin{tikzpicture}[>=stealth,thick]
			\usetikzlibrary{calc}
			\tikzstyle{state}=[]
			
			\node[state] (q1) [below  =0.7cm of q0, xshift=0cm] {$v_1$};
			\node[state] (q2) [below  =0.5cm of q1, xshift=-1cm] {$v_2$};
			\node[state] (lq2) [left  =-6pt of q2, xshift=-0pt, font=\scriptsize] {$(1,1)$};
			\node[state] (q3) [below  =0.5cm of q1, xshift=1cm] {$v_3$};
			\node[state] (lq3) [right  =-6pt of q3, xshift=-0pt, font=\scriptsize] {$(1,1)$};
			\node[state] (q4) [below  =0.5cm of q3, xshift=-1cm] {$v_4$};
			\node[state] (lq4) [left  =-6pt of q4, xshift=-0pt, font=\scriptsize] {$(1,1)$};
			\node[state] (q5) [below  =0.5cm of q4, xshift=-1cm] {$v_5$};
			\node[state] (lq5) [left  =-6pt of q5, xshift=-0pt, font=\scriptsize] {$(1,1)$};
			\node[state] (q6) [below  =0.5cm of q4, xshift=1cm] {$v_6$};
			\node[state] (lq6) [right  =-6pt of q6, xshift=-0pt, font=\scriptsize] {$(1,1)$};
			\node[state] (q7) [below  =0.5cm of q6, xshift=-1cm] {$v_7$};
			
			\path[->] 
			
			(q1) edge node[below=-4pt, xshift=5pt, font=\scriptsize]{$e_1$} (q2)
			     edge node[below=-4pt, xshift=-5pt, font=\scriptsize]{$e_2$} (q3)
			(q2) edge node[below=-4pt, xshift=-5pt, font=\scriptsize]{$e_3$} (q4)
			(q3) edge node[below=-4pt, xshift=5pt, font=\scriptsize]{$e_4$} (q4)
			(q4) edge node[below=-4pt, xshift=5pt, font=\scriptsize]{$e_5$} (q5)
			     edge node[below=-4pt, xshift=-5pt, font=\scriptsize]{$e_6$} (q6)
			(q5) edge node[below=-4pt, xshift=-5pt, font=\scriptsize]{$e_7$} (q7)
			(q6) edge node[below=-4pt, xshift=5pt, font=\scriptsize]{$e_8$} (q7)
			;
			\end{tikzpicture}
			\caption{}
			%\label{fig:gull}
		\end{subfigure}
		\begin{subfigure}[b]{0.4\textwidth}
			\centering
			\begin{tikzpicture}[>=stealth,thick]
			\usetikzlibrary{calc}
			\tikzstyle{state}=[]
			
			\node[state] (q1) [below  =0.7cm of q0, xshift=0cm] {$v_1$};
			\node[state] (q2) [below  =0.5cm of q1, xshift=-1cm] {$v_2$};
			\node[state] (lq2) [left  =-6pt of q2, xshift=-0pt, font=\scriptsize] {$(1,2)$};
			\node[state] (q3) [below  =0.5cm of q1, xshift=1cm] {$v_3$};
			\node[state] (lq3) [right  =-6pt of q3, xshift=-0pt, font=\scriptsize] {$(1,2)$};
			\node[state] (q4) [below  =0.5cm of q2, xshift=-1cm] {$v_4$};
			\node[state] (lq4) [left  =-6pt of q4, xshift=-0pt, font=\scriptsize] {$(1,1)$};
			\node[state] (q5) [below  =0.5cm of q3, xshift=-1cm] {$v_5$};
			\node[state] (lq5) [left  =-6pt of q5, xshift=-0pt, font=\scriptsize] {$(2,2)$};
			\node[state] (q6) [below  =0.5cm of q3, xshift=1cm] {$v_6$};
			\node[state] (lq6) [right  =-6pt of q6, xshift=-0pt, font=\scriptsize] {$(1,1)$};
			\node[state] (q7) [below  =0.5cm of q5, xshift=-1cm] {$v_7$};
			\node[state] (lq7) [left  =-6pt of q7, xshift=-0pt, font=\scriptsize] {$(2,1)$};
			\node[state] (q8) [below  =0.5cm of q5, xshift=1cm] {$v_8$};
			\node[state] (lq8) [right  =-6pt of q8, xshift=-0pt, font=\scriptsize] {$(2,1)$};
			\node[state] (q9) [below  =0.5cm of q8, xshift=-1cm] {$v_9$};
			
			\path[->] 
			(q1) edge node[below=-4pt, xshift=5pt, font=\scriptsize]{$e_1$} (q2)
			     edge node[below=-4pt, xshift=-5pt, font=\scriptsize]{$e_2$} (q3)
			(q2) edge node[below=-4pt, xshift=5pt, font=\scriptsize]{$e_3$} (q4)
			     edge node[below=-4pt, xshift=-5pt, font=\scriptsize]{$e_4$} (q5)
			(q3) edge node[below=-4pt, xshift=5pt, font=\scriptsize]{$e_5$} (q5)
			     edge node[below=-4pt, xshift=-5pt, font=\scriptsize]{$e_6$} (q6)
			(q4) edge node[below=-4pt, xshift=-5pt, font=\scriptsize]{$e_7$} (q7)
			(q5) edge node[below=-4pt, xshift=5pt, font=\scriptsize]{$e_8$} (q7)
			     edge node[below=-4pt, xshift=-5pt, font=\scriptsize]{$e_9$} (q8)
			(q6) edge node[below=-4pt, xshift=5pt, font=\scriptsize]{$e_{10}$} (q8)
			(q7) edge node[below=-4pt, xshift=-5pt, font=\scriptsize]{$e_{11}$} (q9)
			(q8) edge node[below=-4pt, xshift=5pt, font=\scriptsize]{$e_{12}$} (q9)
			;
			\end{tikzpicture}
			\caption{}
			%\label{fig:gull}
		\end{subfigure}
		\begin{subfigure}[b]{0.3\textwidth}
			\centering
			\begin{tikzpicture}[>=stealth,thick]
			\usetikzlibrary{calc}
			\tikzstyle{state}=[]
			
			\node[state] (q1) [below  =0.7cm of q0, xshift=0cm] {$v_1$};
			\node[state] (q2-1) [below  =0.5cm of q1, xshift=-1.5cm] {$v_2$};
			\node[state] (q2-2) [below  =0.5cm of q1, xshift=-.5cm] {$v_2$};
			\node[state] (q3-1) [below  =0.5cm of q1, xshift=.5cm] {$v_3$};
			\node[state] (q3-2) [below  =0.5cm of q1, xshift=1.5cm] {$v_3$};
			
			\node[state] (q4-1) [below  =0.5cm of q2-1, xshift=0cm] {$v_4$};
			\node[state] (q4-2) [below  =0.5cm of q2-2, xshift=0cm] {$v_4$};
			\node[state] (q4-3) [below  =0.5cm of q3-1, xshift=0cm] {$v_4$};
			\node[state] (q4-4) [below  =0.5cm of q3-2, xshift=0cm] {$v_4$};
			
			\node[state] (q5-1) [below  =0.5cm of q4-1, xshift=0cm] {$v_5$};
			\node[state] (q5-2) [below  =0.5cm of q4-3, xshift=0cm] {$v_5$};
			\node[state] (q6-1) [below  =0.5cm of q4-2, xshift=0cm] {$v_6$};
			\node[state] (q6-2) [below  =0.5cm of q4-4, xshift=0cm] {$v_6$};
			
			\node[state] (q7) [below  =0.5cm of q5-2, xshift=-.5cm] {$v_7$};
			
			\path[->] 
			(q1) edge (q2-1)
			edge (q2-2)
			edge (q3-1)
			edge (q3-2)
			
			(q2-1) edge (q4-1)
			(q2-2) edge (q4-2)
			(q3-1) edge (q4-3)
			(q3-2) edge (q4-4)
			
			(q4-1) edge (q5-1)
			(q4-2) edge (q6-1)
			(q4-3) edge (q5-2)
			(q4-4) edge (q6-2)
			
			(q5-1) edge (q7)
			(q5-2) edge (q7)
			(q6-1) edge (q7)
			(q6-2) edge (q7)
			
			;
			\end{tikzpicture}
			\caption{}
			%\label{fig:complete-tree}
		\end{subfigure}

	}
	\caption{Simple Block and Complex Block}
	\label{fig:complex-dag-and-complete-path-dag}
\end{figure}

The conversion between differentiation graph and algebraic expression
is obvious when the graph is simple. 
e.g., in \cref*{fig:complex-dag-and-complete-path-dag} (a)
there are two simple blocks $e\langle v_1,v_4\rangle$ and $e\langle v_4,v_7\rangle$ chained together, 
we can immediately write down its algebraic expression:

\begin{equation}\label{eq-a-simple}
\frac{\mathrm{d} v_1}{\mathrm{d} v_7}=
(e_1e_3+e_2e_4)(e_5e_7+e_6e_8) 
\end{equation}
\\
\noindent
where the edge label $e$ denotes the derivative of its source node 
with respect to its destination node, 
e.g., $e_1=\frac{\partial v_1}{\partial v_2}$.
We notice that a simple block in differentiation graph can be converted to 
a subexpression block with parenthesis in algebraic expression.
%\\

%\indent
However, it is difficult to get an algebraic expression in 
\cref*{fig:complex-dag-and-complete-path-dag} (b) at one glance,
because it is a complex block.
Differentiation graph is more flexible 
when it comes to the same intermediate variables(vertices) or derivatives(edges), 
it can always merge them, whereas in a single algebraic expression, 
due to its limitation of expressivity, 
there is no direct representation or corresponding counterpart of complex block.

One solution for converting a complex differentiation graph to an algebraic expression
is to transform it into a simple differentiation graph with simple chains and simple blocks,
which can be converted to algebraic expression directly.
e.g., \cref*{fig:complex-dag-and-complete-path-dag} (c) 
is a direct simple block that lists all paths from root to terminal
in \cref*{fig:complex-dag-and-complete-path-dag} (a).
Its algebraic expression would be like this:
\\
\begin{equation}\label{eq-a-full}
\frac{\mathrm{d} v_1}{\mathrm{d} v_7}=
\left(
e_1e_3e_5e_7+
e_1e_3e_6e_8+
e_2e_4e_5e_7+
e_2e_4e_6e_8
\right)
\end{equation}
\\
\noindent
But there is a difference between the result algebraic expressions.
Note that it takes 12 fma's to compute \cref*{eq-a-full}
while only 5 fma's for \cref*{eq-a-simple}.

There are two ways to transform \cref*{fig:complex-dag-and-complete-path-dag} (b).
One way is that first transform it to a direct simple block 
that lists all the paths from root to terminal,
then merge these vertices and edges again, 
making sure only simple blocks or simple chains are generated.
The other way is to break down the overly merged graph into 
a more convertible one that has no complex blocks.
e.g., \cref*{fig:2-step-factorization} demonstrates how
we factorize \cref*{fig:complex-dag-and-complete-path-dag} (b).

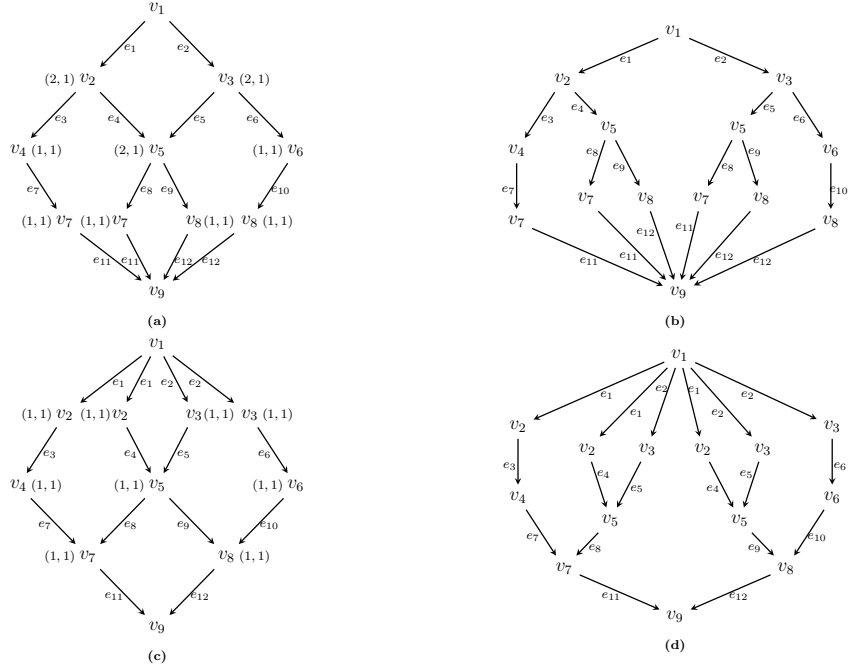
\begin{figure}[H] % [H]  \usepackage{float}
	\centering
        \resizebox{0.4\textwidth}{!}{% <------ Don't forget this %
		\begin{subfigure}[b]{0.65\textwidth}
			\centering
			\begin{tikzpicture}[>=stealth,thick]
			\usetikzlibrary{calc}
			\tikzstyle{state}=[]
			
			\node[state] (q1) [xshift=0cm] {$v_1$};
			\node[state] (q2) [below  =1cm of q1, xshift=-1.5cm] {$v_2$};
			\node[state] (lq2) [left  =-6pt of q2, xshift=-0pt, font=\scriptsize] {$(2,1)$};
			\node[state] (q3) [below  =1cm of q1, xshift=1.5cm] {$v_3$};
			\node[state] (lq3) [right  =-6pt of q3, xshift=-0pt, font=\scriptsize] {$(2,1)$};
			\node[state] (q4) [below  =1cm of q2, xshift=-1.5cm] {$v_4$};
			\node[state] (lq4) [right  =-6pt of q4, xshift=-0pt, font=\scriptsize] {$(1,1)$};
			\node[state] (q5) [below  =1cm of q3, xshift=-1.5cm] {$v_5$};
			\node[state] (lq5) [left  =-6pt of q5, xshift=-0pt, font=\scriptsize] {$(2,1)$};
			\node[state] (q6) [below  =1cm of q3, xshift=1.5cm] {$v_6$};
			\node[state] (lq6) [left  =-6pt of q6, xshift=-0pt, font=\scriptsize] {$(1,1)$};
			\node[state] (q7-1) [below  =1cm of q5, xshift=-2.0cm] {$v_7$};
			\node[state] (lq7-1) [left  =-6pt of q7-1, xshift=-0pt, font=\scriptsize] {$(1,1)$};
			\node[state] (q7-2) [below  =1cm of q5, xshift=-.8cm] {$v_7$};
			\node[state] (lq7-2) [left  =-8pt of q7-2, xshift=-0pt, font=\scriptsize] {$(1,1)$};
			\node[state] (q8-1) [below  =1cm of q5, xshift=2.0cm] {$v_8$};
			\node[state] (lq8-1) [right  =-6pt of q8-1, xshift=-0pt, font=\scriptsize] {$(1,1)$};
			\node[state] (q8-2) [below  =1cm of q5, xshift=.8cm] {$v_8$};
			\node[state] (lq8-2) [right  =-8pt of q8-2, xshift=-0pt, font=\scriptsize] {$(1,1)$};
			\node[state] (q9) [below  =1cm of q8-1, xshift=-2.0cm] {$v_9$};
			
			\path[->] 
			(q1) edge node[below=-4pt, xshift=5pt, font=\scriptsize]{$e_1$} (q2)
			     edge node[below=-4pt, xshift=-5pt, font=\scriptsize]{$e_2$} (q3)
			(q2) edge node[below=-4pt, xshift=5pt, font=\scriptsize]{$e_3$} (q4)
			     edge node[below=-4pt, xshift=-5pt, font=\scriptsize]{$e_4$} (q5)
			(q3) edge node[below=-4pt, xshift=5pt, font=\scriptsize]{$e_5$} (q5)
			     edge node[below=-4pt, xshift=-5pt, font=\scriptsize]{$e_6$} (q6)
			(q4) edge node[below=-4pt, xshift=-5pt, font=\scriptsize]{$e_7$} (q7-1)
			(q5) edge node[below=-4pt, xshift=5pt, font=\scriptsize]{$e_8$} (q7-2)
			     edge node[below=-4pt, xshift=-5pt, font=\scriptsize]{$e_9$} (q8-2)
			(q6) edge node[below=-4pt, xshift=5pt, font=\scriptsize]{$e_{10}$} (q8-1)
			(q7-1) edge node[below=-4pt, xshift=-5pt, font=\scriptsize]{$e_{11}$} (q9)
			(q8-1) edge node[below=-4pt, xshift=5pt, font=\scriptsize]{$e_{12}$} (q9)
			(q7-2) edge node[below=-4pt, xshift=-5pt, font=\scriptsize]{$e_{11}$} (q9)
            (q8-2) edge node[below=-4pt, xshift=5pt, font=\scriptsize]{$e_{12}$} (q9)
			;
			\end{tikzpicture}
            
			\caption{}
			%\label{fig:gull}
		\end{subfigure}
	    }
        \resizebox{0.4\textwidth}{!}{% <------ Don't forget this %
		\begin{subfigure}[b]{0.65\textwidth}
			\centering
			\begin{tikzpicture}[>=stealth,thick]
			\usetikzlibrary{calc}
			\tikzstyle{state}=[]
			
			\node[state] (q1) [xshift=0cm] {$v_1$};
			\node[state] (q2) [below  =0.5cm of q1, xshift=-2.4cm] {$v_2$};
			\node[state] (q3) [below  =0.5cm of q1, xshift=2.4cm] {$v_3$};
			\node[state] (q4) [below  =1cm of q2, xshift=-1cm] {$v_4$};
			
			\node[state] (q5-1) [below  =0.5cm of q2, xshift=1cm] {$v_5$};
			\node[state] (q7-1) [below  =1.5cm of q5-1, xshift=-2cm] {$v_7$};
			\node[state] (q7-3) [below  =1cm of q5-1, xshift=-0.5cm] {$v_7$};
			\node[state] (q8-1) [below  =1cm of q5-1, xshift=0.8cm] {$v_8$};
			\node[state] (q9) [below  =1.5cm of q8-1, xshift=0.7cm] {$v_9$};
			
			\node[state] (q6) [below  =1cm of q3, xshift=1cm] {$v_6$};
			\node[state] (q5-2) [below  =0.5cm of q3, xshift=-1cm] {$v_5$};
			\node[state] (q7-2) [below  =1cm of q5-2, xshift=-0.8cm] {$v_7$};
			\node[state] (q8-2) [below  =1.5cm of q5-2, xshift=2cm] {$v_8$};
			\node[state] (q8-3) [below  =1cm of q5-2, xshift=0.5cm] {$v_8$};
			
			\path[->] 
			(q1) edge node[below=-4pt, xshift=5pt, font=\scriptsize]{$e_1$} (q2)
			     edge node[below=-4pt, xshift=-5pt, font=\scriptsize]{$e_2$} (q3)
			(q2) edge node[below=-4pt, xshift=5pt, font=\scriptsize]{$e_3$} (q4)
			     edge node[below=-4pt, xshift=-5pt, font=\scriptsize]{$e_4$} (q5-1)
			
			(q3) edge node[below=-4pt, xshift=-5pt, font=\scriptsize]{$e_6$} (q6)
			     edge node[below=-4pt, xshift=5pt, font=\scriptsize]{$e_5$} (q5-2)
			(q4) edge node[below=-4pt, xshift=-5pt, font=\scriptsize]{$e_7$} (q7-1)
			(q5-1) edge node[above=-0pt, xshift=-3pt, font=\scriptsize]{$e_8$} (q7-3)
			       edge node[below=-4pt, xshift=-5pt, font=\scriptsize]{$e_9$} (q8-1)
			(q7-1) edge node[below=-4pt, xshift=-5pt, font=\scriptsize]{$e_{11}$} (q9)
			(q7-3) edge node[below=-0pt, xshift=-2pt, font=\scriptsize]{$e_{11}$} (q9)
			(q8-1) edge node[below=-15pt, xshift=-10pt, font=\scriptsize]{$e_{12}$} (q9)
			
			(q6) edge node[below=-4pt, xshift=5pt, font=\scriptsize]{$e_{10}$} (q8-2)
			(q5-2) edge node[below=-4pt, xshift=5pt, font=\scriptsize]{$e_8$} (q7-2)
			       edge node[above=-0pt, xshift=3pt, font=\scriptsize]{$e_9$} (q8-3)
			(q7-2) edge node[above=5pt, xshift=-3pt, font=\scriptsize]{$e_{11}$} (q9)
			(q8-2) edge node[below=-4pt, xshift=5pt, font=\scriptsize]{$e_{12}$} (q9)
			(q8-3) edge node[below=-0pt, xshift=3pt, font=\scriptsize]{$e_{12}$} (q9)
			;
			\end{tikzpicture}
			\caption{}
			%\label{fig:gull}
		\end{subfigure}
	    }
		\resizebox{0.4\textwidth}{!}{% <------ Don't forget this %
    	\begin{subfigure}[b!]{0.65\textwidth}
	    	\centering
	    	\begin{tikzpicture}[>=stealth,thick]
	    	\usetikzlibrary{calc}
	    	\tikzstyle{state}=[]
	    	
	    	\node[state] (q1) [xshift=0cm] {$v_1$};
	    	\node[state] (q2-1) [below  =1cm of q1, xshift=-2cm] {$v_2$};
	    	\node[state] (lq2-1) [left  =-6pt of q2-1, xshift=-0pt, font=\scriptsize] {$(1,1)$};
	    	\node[state] (q3-1) [below  =1cm of q1, xshift=2cm] {$v_3$};
	    	\node[state] (lq3-1) [right  =-6pt of q3-1, xshift=-0pt, font=\scriptsize] {$(1,1)$};
	    	\node[state] (q2-2) [below  =1cm of q1, xshift=-0.8cm] {$v_2$};
	    	\node[state] (lq2-2) [left  =-8pt of q2-2, xshift=-0pt, font=\scriptsize] {$(1,1)$};
	    	\node[state] (q3-2) [below  =1cm of q1, xshift=0.8cm] {$v_3$};
	    	\node[state] (lq3-2) [right  =-8pt of q3-2, xshift=-0pt, font=\scriptsize] {$(1,1)$};
	    	\node[state] (q4) [below  =1cm of q2-1, xshift=-1cm] {$v_4$};
	    	\node[state] (lq4) [right  =-6pt of q4, xshift=-0pt, font=\scriptsize] {$(1,1)$};
	    	\node[state] (q5) [below  =1cm of q3-1, xshift=-2cm] {$v_5$};
	    	\node[state] (lq5) [left  =-6pt of q5, xshift=-0pt, font=\scriptsize] {$(1,1)$};
	    	\node[state] (q6) [below  =1cm of q3-1, xshift=1cm] {$v_6$};
	    	\node[state] (lq6) [left  =-6pt of q6, xshift=-0pt, font=\scriptsize] {$(1,1)$};
	    	\node[state] (q7) [below  =1cm of q5, xshift=-1.5cm] {$v_7$};
	    	\node[state] (lq7) [left  =-6pt of q7, xshift=-0pt, font=\scriptsize] {$(1,1)$};
	    	\node[state] (q8) [below  =1cm of q5, xshift=1.5cm] {$v_8$};
	    	\node[state] (lq8) [right  =-6pt of q8, xshift=-0pt, font=\scriptsize] {$(1,1)$};
	    	\node[state] (q9) [below  =1cm of q8, xshift=-1.5cm] {$v_9$};
	    	
	    	\path[->] 
	    	%(q0) edge (q1)
	    	(q1) edge node[below=-4pt, xshift=5pt, font=\scriptsize]{$e_1$} (q2-1)
	    	     edge node[below=-4pt, xshift=-5pt, font=\scriptsize]{$e_2$} (q3-1)
	    	     edge node[below=-4pt, xshift=5pt, font=\scriptsize]{$e_1$} (q2-2)
	    	     edge node[below=-4pt, xshift=-5pt, font=\scriptsize]{$e_2$} (q3-2)
	    	     
	    	(q2-1) edge node[below=-4pt, xshift=5pt, font=\scriptsize]{$e_3$} (q4)
	    	(q2-2) edge node[below=-4pt, xshift=-5pt, font=\scriptsize]{$e_4$} (q5)
	    	(q3-2) edge node[below=-4pt, xshift=5pt, font=\scriptsize]{$e_5$} (q5)
	    	(q3-1) edge node[below=-4pt, xshift=-5pt, font=\scriptsize]{$e_6$} (q6)
	    	(q4) edge node[below=-4pt, xshift=-5pt, font=\scriptsize]{$e_7$} (q7)
	    	(q5) edge node[below=-4pt, xshift=5pt, font=\scriptsize]{$e_8$} (q7)
	    	     edge node[below=-4pt, xshift=-5pt, font=\scriptsize]{$e_9$} (q8)
	    	(q6) edge node[below=-4pt, xshift=5pt, font=\scriptsize]{$e_{10}$} (q8)
	    	(q7) edge node[below=-4pt, xshift=-5pt, font=\scriptsize]{$e_{11}$} (q9)
	    	(q8) edge node[below=-4pt, xshift=5pt, font=\scriptsize]{$e_{12}$} (q9)
	    	;
	    	\end{tikzpicture}
	    	\caption{}
	    	%\label{fig:gull}
	    \end{subfigure}
        }
        \resizebox{0.4\textwidth}{!}{% <------ Don't forget this %
        \begin{subfigure}[b!]{0.65\textwidth}
        	\centering
        	\begin{tikzpicture}[>=stealth,thick]
        	\usetikzlibrary{calc}
        	\tikzstyle{state}=[]
        	
        	%\node[state] (q0) {$v_0$};
        	\node[state] (q1) [xshift=0cm] {$v_9$};
        	\node[state] (q2) [above  =0.5cm of q1, xshift=-2.4cm] {$v_7$};
        	\node[state] (q3) [above  =0.5cm of q1, xshift=2.4cm] {$v_8$};
        	\node[state] (q4) [above  =1cm of q2, xshift=-1cm] {$v_4$};
        	
        	\node[state] (q5-1) [above  =0.5cm of q2, xshift=1cm] {$v_5$};
        	\node[state] (q7-1) [above  =1.5cm of q5-1, xshift=-2cm] {$v_2$};
        	\node[state] (q7-3) [above  =1cm of q5-1, xshift=-0.5cm] {$v_2$};
        	\node[state] (q8-1) [above  =1cm of q5-1, xshift=0.8cm] {$v_3$};
        	\node[state] (q9) [above  =1.5cm of q8-1, xshift=0.7cm] {$v_1$};
        	
        	\node[state] (q6) [above  =1cm of q3, xshift=1cm] {$v_6$};
        	\node[state] (q5-2) [above  =0.5cm of q3, xshift=-1cm] {$v_5$};
        	\node[state] (q7-2) [above  =1cm of q5-2, xshift=-0.8cm] {$v_2$};
        	\node[state] (q8-2) [above  =1.5cm of q5-2, xshift=2cm] {$v_3$};
        	\node[state] (q8-3) [above  =1cm of q5-2, xshift=0.5cm] {$v_3$};
        	
        	\path[<-] 
        	(q1) edge node[below=-4pt, xshift=-5pt, font=\scriptsize]{$e_{11}$} (q2)
        	     edge node[below=-4pt, xshift=5pt, font=\scriptsize]{$e_{12}$} (q3)
        	(q2) edge node[below=-4pt, xshift=-5pt, font=\scriptsize]{$e_{7}$} (q4)
        	     edge node[below=-4pt, xshift=5pt, font=\scriptsize]{$e_{8}$} (q5-1)
        	
        	(q3) edge node[below=-4pt, xshift=5pt, font=\scriptsize]{$e_{10}$} (q6)
        	     edge node[below=-4pt, xshift=-5pt, font=\scriptsize]{$e_{9}$} (q5-2)
        	(q4) edge node[below=-4pt, xshift=-5pt, font=\scriptsize]{$e_{3}$} (q7-1)
        	(q5-1) edge node[above=-0pt, xshift=3pt, font=\scriptsize]{$e_{4}$} (q7-3)
        	       edge node[below=-4pt, xshift=5pt, font=\scriptsize]{$e_{5}$} (q8-1)
        	(q7-1) edge node[below=-4pt, xshift=5pt, font=\scriptsize]{$e_{1}$} (q9)
        	(q7-3) edge node[below=-0pt, xshift=2pt, font=\scriptsize]{$e_{1}$} (q9)
        	(q8-1) edge node[below=-15pt, xshift=-1pt, font=\scriptsize]{$e_{2}$} (q9)
        	
        	(q6) edge node[below=-4pt, xshift=5pt, font=\scriptsize]{$e_{6}$} (q8-2)
        	(q5-2) edge node[below=-4pt, xshift=-5pt, font=\scriptsize]{$e_{4}$} (q7-2)
        	       edge node[above=-0pt, xshift=-3pt, font=\scriptsize]{$e_{5}$} (q8-3)
        	(q7-2) edge node[above=1pt, xshift=2pt, font=\scriptsize]{$e_{1}$} (q9)
        	(q8-2) edge node[below=-4pt, xshift=-5pt, font=\scriptsize]{$e_{2}$} (q9)
        	(q8-3) edge node[below=-0pt, xshift=-2pt, font=\scriptsize]{$e_{2}$} (q9)
        	;
        	\end{tikzpicture}

        	\caption{}
        	%\label{fig:gull}
        \end{subfigure}
        }
%	}
	\caption{Complex Block Factorization}\label{fig:2-step-factorization}
\end{figure}

There are two methods to factorize a complex block: one is forward factorization, 
the other is backward factorization. 

To apply backward factorization, we first compute the in-degree of 
every intermediate vertex in the block, treating every simple block 
whose intermediate vertices do not include the vertex being computed as an edge. 
e.g., in \cref*{fig:complex-dag-and-complete-path-dag} (a), 
the in-degree of vertex $v_4$ is $1$. 
Then, from bottom to top, we search for vertices whose in-degrees are greater than 1, 
split from behind the outgoing edges, 
reducing the overlapping degrees of outgoing edges to 1. 
e.g., in \cref*{fig:complex-dag-and-complete-path-dag} (b), 
$v_7$ and $v_8$ are the predecessors of sink $v_9$, their in-degrees are both 2, 
i.e., $O_{P_{v_7}}(e_{11})=2$ and $O_{P_{v_8}}(e_{12})=2$, 
which means there are 2 paths overlapped on $e_{11}$ and $e_{12}$ respectively, 
we split from behind $e_{11}$ and $e_{12}$, 
by creating another duplicated $v_7$ and $v_8$, $e_{11}$ and $e_{12}$ in locations 
that are different from the original one, 
connecting $e_8$ and $e_9$ to the new $v_7$ and new $v_8$ respectively, 
as shown in \cref*{fig:2-step-factorization} (a). 
After that, we adjust the in-degrees of vertices and continue to search for next target vertex 
from among the predecessor vertices of current position, 
e.g., the in-degree of $v_5$ in \cref*{fig:2-step-factorization} (a) is 2. 
Since $e\langle v_5,v_9\rangle$ is a simple block, we regard it as one edge, 
split it just like how we split $e_{11}$ and $e_{12}$ in the previous step, 
as shown in \cref*{fig:2-step-factorization} (b). 

Forward factorization is basically the same as backward factorization 
except in the opposite direction.
Instead of tracking in-degrees of intermediate vertices, 
searching for target vertices from bottom up and splitting from behind an edge, 
it tracks out-degrees, 
searches for target vertices from top down 
and splits from the head of an edge. 
\cref*{fig:2-step-factorization} (c) and (d) illustrate the process of forward factorization.

We can quickly convert a simple block to an algebraic expression.
The algebraic expression of \cref*{fig:2-step-factorization} (b) is:

\begin{equation}\label{eq-b-simple}
\frac{\mathrm{d} v_1}{\mathrm{d} v_7}=
e_1(e_3e_7e_{11}+e_4(e_8e_{11}+e_9e_{12}))+
e_2(e_6e_{10}e_{12}+e_5(e_8e_{11}+e_9e_{12}))
\end{equation}
\\
The algebraic expression of \cref*{fig:2-step-factorization} (d) is:

\begin{equation}\label{eq-d-simple}
\frac{\mathrm{d} v_1}{\mathrm{d} v_7}=
(e_1e_3e_7+(e_1e_4+e_2e_5)e_8)e_{11}+
(e_2e_6e_{10}+(e_1e_4+e_2e_5)e_9)e_{12}
\end{equation}
\\
Note that in \cref*{eq-b-simple} and \cref*{eq-d-simple},
the same intermediate variable or subexpression $(e_8e_{11}+e_9e_{12})$ and $(e_1e_4+e_2e_5)$
cannot be factored out despite they can be merged in the graph. 
In practice, we would like to use some intermediate variables to offset 
the limitation of a single algebraic expression.
e.g., let $s_1=e_8e_{11}+e_9e_{12}$, \cref*{eq-b-simple} becomes: 

\begin{equation}\label{eq-b-simple-t-var}
\frac{\mathrm{d} v_1}{\mathrm{d} v_7}=
e_1(e_3e_7e_{11}+e_4s_1)+
e_2(e_6e_{10}e_{12}+e_5s_1)
\end{equation}

If we put $(e_8e_{11}+e_9e_{12})$ under the \cref*{eq-b-simple-t-var}, 
and draw two directed edges from the bottom of $s_1$'s in different locations
to the top of it, denoting `$=$' or reference, it becomes a DAG. 
We notice that multiple algebraic expressions connected by intermediate variables is 
effectively a DAG of references. 
During the process of complex block factorization, 
in order to take advantage of the reference structure of multiple algebraic expressions, 
for every simple block that splits like an edge, 
we can create a new reference edge representing it, 
and replace it with that edge before splitting. 
e.g., in \cref*{fig:factorization-with-reference}, 
a new edge $s_1$ is created to replace $e\langle v_5,v_9\rangle$.

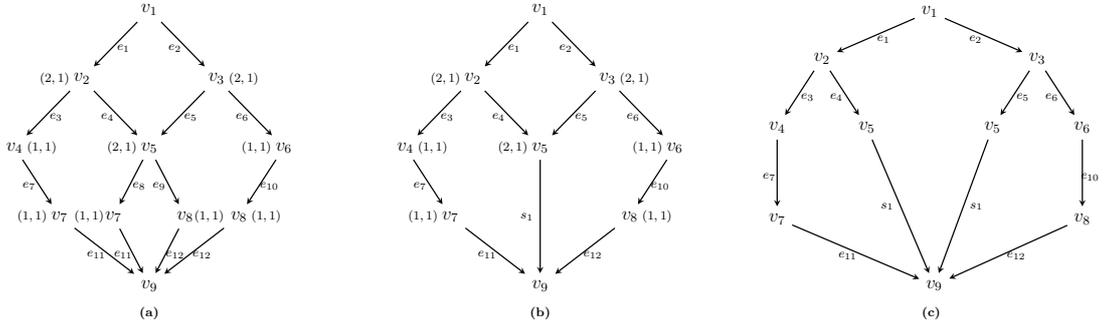
\begin{figure}[H] % [H]  \usepackage{float}
	\centering
	\resizebox{0.3\textwidth}{!}{% <------ Don't forget this %
		\begin{subfigure}[b]{0.5\textwidth}
			\centering
			\begin{tikzpicture}[>=stealth,thick]
			\usetikzlibrary{calc}
			\tikzstyle{state}=[]
			
			\node[state] (q1) [xshift=0cm] {$v_1$};
			\node[state] (q2) [below  =1cm of q1, xshift=-1.5cm] {$v_2$};
			\node[state] (lq2) [left  =-6pt of q2, xshift=-0pt, font=\scriptsize] {$(2,1)$};
			\node[state] (q3) [below  =1cm of q1, xshift=1.5cm] {$v_3$};
			\node[state] (lq3) [right  =-6pt of q3, xshift=-0pt, font=\scriptsize] {$(2,1)$};
			\node[state] (q4) [below  =1cm of q2, xshift=-1.5cm] {$v_4$};
			\node[state] (lq4) [right  =-6pt of q4, xshift=-0pt, font=\scriptsize] {$(1,1)$};
			\node[state] (q5) [below  =1cm of q3, xshift=-1.5cm] {$v_5$};
			\node[state] (lq5) [left  =-6pt of q5, xshift=-0pt, font=\scriptsize] {$(2,1)$};
			\node[state] (q6) [below  =1cm of q3, xshift=1.5cm] {$v_6$};
			\node[state] (lq6) [left  =-6pt of q6, xshift=-0pt, font=\scriptsize] {$(1,1)$};
			\node[state] (q7-1) [below  =1cm of q5, xshift=-2.0cm] {$v_7$};
			\node[state] (lq7-1) [left  =-6pt of q7-1, xshift=-0pt, font=\scriptsize] {$(1,1)$};
			\node[state] (q7-2) [below  =1cm of q5, xshift=-.8cm] {$v_7$};
			\node[state] (lq7-2) [left  =-8pt of q7-2, xshift=-0pt, font=\scriptsize] {$(1,1)$};
			\node[state] (q8-1) [below  =1cm of q5, xshift=2.0cm] {$v_8$};
			\node[state] (lq8-1) [right  =-6pt of q8-1, xshift=-0pt, font=\scriptsize] {$(1,1)$};
			\node[state] (q8-2) [below  =1cm of q5, xshift=.8cm] {$v_8$};
			\node[state] (lq8-2) [right  =-8pt of q8-2, xshift=-0pt, font=\scriptsize] {$(1,1)$};
			\node[state] (q9) [below  =1cm of q8-1, xshift=-2.0cm] {$v_9$};
			
			\path[->] 
			(q1) edge node[below=-4pt, xshift=5pt, font=\scriptsize]{$e_1$} (q2)
			edge node[below=-4pt, xshift=-5pt, font=\scriptsize]{$e_2$} (q3)
			(q2) edge node[below=-4pt, xshift=5pt, font=\scriptsize]{$e_3$} (q4)
			edge node[below=-4pt, xshift=-5pt, font=\scriptsize]{$e_4$} (q5)
			(q3) edge node[below=-4pt, xshift=5pt, font=\scriptsize]{$e_5$} (q5)
			edge node[below=-4pt, xshift=-5pt, font=\scriptsize]{$e_6$} (q6)
			(q4) edge node[below=-4pt, xshift=-5pt, font=\scriptsize]{$e_7$} (q7-1)
			(q5) edge node[below=-4pt, xshift=5pt, font=\scriptsize]{$e_8$} (q7-2)
			edge node[below=-4pt, xshift=-5pt, font=\scriptsize]{$e_9$} (q8-2)
			(q6) edge node[below=-4pt, xshift=5pt, font=\scriptsize]{$e_{10}$} (q8-1)
			(q7-1) edge node[below=-4pt, xshift=-5pt, font=\scriptsize]{$e_{11}$} (q9)
			(q8-1) edge node[below=-4pt, xshift=5pt, font=\scriptsize]{$e_{12}$} (q9)
			(q7-2) edge node[below=-4pt, xshift=-5pt, font=\scriptsize]{$e_{11}$} (q9)
			(q8-2) edge node[below=-4pt, xshift=5pt, font=\scriptsize]{$e_{12}$} (q9)
			;
			\end{tikzpicture}
			
			\caption{}
			%\label{fig:gull}
		\end{subfigure}
	}
	\resizebox{0.3\textwidth}{!}{% <------ Don't forget this %
	\begin{subfigure}[b]{0.5\textwidth}
		\centering
		\begin{tikzpicture}[>=stealth,thick]
		\usetikzlibrary{calc}
		\tikzstyle{state}=[]
		\node[state] (q1) [xshift=0cm] {$v_1$};
		\node[state] (q2) [below  =1cm of q1, xshift=-1.5cm] {$v_2$};
		\node[state] (lq2) [left  =-6pt of q2, xshift=-0pt, font=\scriptsize] {$(2,1)$};
		\node[state] (q3) [below  =1cm of q1, xshift=1.5cm] {$v_3$};
		\node[state] (lq3) [right  =-6pt of q3, xshift=-0pt, font=\scriptsize] {$(2,1)$};
		\node[state] (q4) [below  =1cm of q2, xshift=-1.5cm] {$v_4$};
		\node[state] (lq4) [right  =-6pt of q4, xshift=-0pt, font=\scriptsize] {$(1,1)$};
		\node[state] (q5) [below  =1cm of q3, xshift=-1.5cm] {$v_5$};
		\node[state] (lq5) [left  =-6pt of q5, xshift=-0pt, font=\scriptsize] {$(2,1)$};
		\node[state] (q6) [below  =1cm of q3, xshift=1.5cm] {$v_6$};
		\node[state] (lq6) [left  =-6pt of q6, xshift=-0pt, font=\scriptsize] {$(1,1)$};
		\node[state] (q7-1) [below  =1cm of q5, xshift=-2.0cm] {$v_7$};
		\node[state] (lq7-1) [left  =-6pt of q7-1, xshift=-0pt, font=\scriptsize] {$(1,1)$};

		\node[state] (q8-1) [below  =1cm of q5, xshift=2.0cm] {$v_8$};
		\node[state] (lq8-1) [right  =-6pt of q8-1, xshift=-0pt, font=\scriptsize] {$(1,1)$};

		\node[state] (q9) [below  =1cm of q8-1, xshift=-2.0cm] {$v_9$};
		
		\path[->] 
		(q1) edge node[below=-4pt, xshift=5pt, font=\scriptsize]{$e_1$} (q2)
		     edge node[below=-4pt, xshift=-5pt, font=\scriptsize]{$e_2$} (q3)
		(q2) edge node[below=-4pt, xshift=5pt, font=\scriptsize]{$e_3$} (q4)
		     edge node[below=-4pt, xshift=-5pt, font=\scriptsize]{$e_4$} (q5)
		(q3) edge node[below=-4pt, xshift=5pt, font=\scriptsize]{$e_5$} (q5)
		     edge node[below=-4pt, xshift=-5pt, font=\scriptsize]{$e_6$} (q6)
		(q4) edge node[below=-4pt, xshift=-5pt, font=\scriptsize]{$e_7$} (q7-1)
		(q5) edge node[left=-0pt, xshift=0pt, font=\scriptsize]{$s_1$} (q9)
		(q6) edge node[below=-4pt, xshift=5pt, font=\scriptsize]{$e_{10}$} (q8-1)
		(q7-1) edge node[below=-4pt, xshift=-5pt, font=\scriptsize]{$e_{11}$} (q9)
		(q8-1) edge node[below=-4pt, xshift=5pt, font=\scriptsize]{$e_{12}$} (q9)

		;
		\end{tikzpicture}
		
		\caption{}
		%\label{fig:gull}
	\end{subfigure}
    }
	\resizebox{0.3\textwidth}{!}{% <------ Don't forget this %
		\begin{subfigure}[b]{0.5\textwidth}
			\centering
			\begin{tikzpicture}[>=stealth,thick]
			\usetikzlibrary{calc}
			\tikzstyle{state}=[]
			
			\node[state] (q1) [xshift=0cm] {$v_1$};
			\node[state] (q2) [below  =0.5cm of q1, xshift=-2.4cm] {$v_2$};
			\node[state] (q3) [below  =0.5cm of q1, xshift=2.4cm] {$v_3$};
			\node[state] (q4) [below  =1cm of q2, xshift=-1cm] {$v_4$};
			
			\node[state] (q5-1) [below  =1cm of q2, xshift=1cm] {$v_5$};
			\node[state] (q5-2) [below  =1cm of q3, xshift=-1cm] {$v_5$};
			\node[state] (q7-1) [below  =1.5cm of q5-1, xshift=-2cm] {$v_7$};
			\node[state] (q8-2) [below  =1.5cm of q5-2, xshift=2cm] {$v_8$};

			\node[state] (q9) [below  =3cm of q5-1, xshift=1.5cm] {$v_9$};
			
			\node[state] (q6) [below  =1cm of q3, xshift=1cm] {$v_6$};

			\path[->] 
			(q1) edge node[below=-4pt, xshift=5pt, font=\scriptsize]{$e_1$} (q2)
			     edge node[below=-4pt, xshift=-5pt, font=\scriptsize]{$e_2$} (q3)
			(q2) edge node[below=-4pt, xshift=5pt, font=\scriptsize]{$e_3$} (q4)
		   	     edge node[below=-4pt, xshift=-5pt, font=\scriptsize]{$e_4$} (q5-1)
			(q3) edge node[below=-4pt, xshift=-5pt, font=\scriptsize]{$e_6$} (q6)
			     edge node[below=-4pt, xshift=5pt, font=\scriptsize]{$e_5$} (q5-2)
			(q4) edge node[below=-4pt, xshift=-5pt, font=\scriptsize]{$e_7$} (q7-1)
			(q5-1) edge node[left=-0pt, xshift=0pt, font=\scriptsize]{$s_1$} (q9)
			(q7-1) edge node[below=-4pt, xshift=-5pt, font=\scriptsize]{$e_{11}$} (q9)
			(q6) edge node[below=-4pt, xshift=5pt, font=\scriptsize]{$e_{10}$} (q8-2)
			(q5-2) edge node[right=0pt, xshift=0pt, font=\scriptsize]{$s_1$} (q9)
			(q8-2) edge node[below=-4pt, xshift=5pt, font=\scriptsize]{$e_{12}$} (q9)

			;
			\end{tikzpicture}
			\caption{}
			%\label{fig:gull}
		\end{subfigure}
	}

	%	}
	\caption{Factorization with Reference}\label{fig:factorization-with-reference}
\end{figure}

Differentiation graph is path-oriented, edges and vertices are the carrier of paths, 
multiple paths could overlap on the same edges and vertices. 
In a direct simple chain, all the edges and vertices form only one path, 
therefore elimination of path is equivalent to elimination of corresponding edges and vertices.
We notice that simple chains and simple blocks could be nested structures 
like algebraic expressions. 
In an optimized arithmetic expression, the optimal calculation order is from inside out, 
so are the simple blocks and simple chains. 
When the most inner direct simple chains and direct simple blocks are eliminated 
and replaced by shortcut edges, 
the second most inner indirect simple chains and indirect simple blocks 
become direct simple chains and direct simple blocks. 
We can repeat this all the way to the outermost simple chains and simple blocks.
Thus, for the simple blocks and simple chains, 
elimination of paths is equivalent to elimination of corresponding edges and vertices
as long as it follows the optimal calculation order. 
Notice that in \cref*{fig:factorization-with-reference} (a), 
$e\langle v_5,v_9\rangle$ and $(e_8e_{11}+e_9e_{12})$ are equivalent, 
$(e_8e_{11}+e_9e_{12})$ is the algebraic form, $e\langle v_5,v_9\rangle$ is the graph form. 
We can easily convert a simple block or a simple chain to its corresponding algebraic form, 
and then deduce it back again from the algebraic expression we have, 
they are interconvertible. 
It is worth noting that the commutativity of multiplication relation in algebraic expression, 
e.g., $e_ie_j$ and $e_je_i$, makes sense 
while in differentiation graph makes no sense 
because $e_i\succ e_j$ or $e_i^{-}\succ e_i^{+}=e_j^{-}\succ e_j^{+}$. 
Therefore, the precondition of their interconvertibility is 
to impose noncommutativity restriction of multiplication relation 
on the algebraic expressions converted from differentiation graph. 

\begin{equation}\label{eq-jaconbian-accumulation}
\underbrace{
\left[
e_1, 
e_2
\right]
}_A
\times
\underbrace{
\left[
\begin{matrix}
e_3 & e_4 & 0 \\
0 & e_5 & e_6
\end{matrix}
\right]
}_B
\times
\underbrace{
\left[
\begin{matrix}
e_7 & 0 \\
e_8 & e_9 \\
0 & e_{10}
\end{matrix}
\right]
}_C
\times
\underbrace{
\left[
\begin{matrix}
e_{11} \\
e_{12} 
\end{matrix}
\right]
}_D
\end{equation} 

Notice that we can convert \cref*{fig:complex-dag-and-complete-path-dag} (b) 
to the form of local Jacobians, as shown in \cref*{eq-jaconbian-accumulation}. 
If we use $(AB)$ to denote the matrix product of $A$ and $B$, 
then $(AB)CD$ is the \cref*{fig:2-step-factorization} (c), 
$((AB)C)D$ is the \cref*{fig:2-step-factorization} (d), 
$AB(CD)$ is the \cref*{fig:2-step-factorization} (a), 
$A(B(CD))$ is the \cref*{fig:2-step-factorization} (b). 
The accumulation process of $ABCD$ is 
the factorization process of differentiation graph 
in \cref*{fig:complex-dag-and-complete-path-dag} (b). 
If we accumulate it from left to right, i.e. $(((AB)C)D)$, the result is \cref*{eq-d-simple}; 
from right to left, i.e. $(A(B(CD)))$, the result is \cref*{eq-b-simple}. 
Similar to the reference edge in \cref*{fig:factorization-with-reference}, 
we can also use reference variables to replace the entries in $(AB)$. 
It is worth noting that there are more than just two accumulation orders we discussed above, 
we can accumulate local Jacobians from both sides or from the middle, 
e.g., $((AB)(CD))$ or $((A(BC))D)$, 
which means we can also factorize complex blocks from both sides or from the middle. 
Different accumulation orders result in different algebraic expressions with different structures, 
our goal is to find an algebraic expression that requires less fma's to compute. 
Note that local Jacobian accumulation merges or eliminates depth levels 
through matrix multiplications.  
Despite it can automatically factorize complex blocks 
by treating every entry of the result local Jacobian 
as an independent simple block or simple chain 
and packing their accumulations into one atomic operation, 
it ignores the optimal calculation order in the simple chains and simple blocks, 
which means it is up to us to find a better accumulation order. 
e.g., in \cref*{fig:complex-dag-and-complete-path-dag} (a), 
we prefer $(J^0_1J^1_2)(J^2_3J^3_4)$ to $J^0_1(J^1_2J^2_3)J^3_4$, 
because the latter splits $v_4$ into 4 vertices, 
resulting in \cref*{fig:complex-dag-and-complete-path-dag} (c), 
which requires more fma's to compute. 

\begin{figure}[H] % [H]  \usepackage{float}
	\centering
	\resizebox{.25\textwidth}{!}{% <------ Don't forget this %
		\begin{subfigure}[b]{0.38\textwidth}
			\centering
			\begin{tikzpicture}[>=stealth,thick]
			\usetikzlibrary{calc}
			\tikzstyle{state}=[]
			
			\node[state] (q1) [below  =0.7cm of q0, xshift=0cm] {$v_1$};
			\node[state] (q2) [below  =0.5cm of q1, xshift=-1cm] {$v_2$};
			\node[state] (q3) [below  =0.5cm of q1, xshift=1cm] {$v_3$};
			\node[state] (q4) [below  =0.5cm of q3, xshift=-1cm] {$v_4$};
			\node[state] (q5) [below  =0.5cm of q4, xshift=-1cm] {$v_5$};
			\node[state] (q6) [below  =0.5cm of q4, xshift=1cm] {$v_6$};
			\node[state] (q7) [below  =0.5cm of q6, xshift=-1cm] {$v_7$};
			\node[state] (q8) [below  =0.5cm of q7, xshift=0cm] {$v_8$};
			
			\node[state] (q10) [below  =0.7cm of q0, xshift=3cm] {$v_{10}$};
			\node[state] (q11) [below  =0.5cm of q10, xshift=0cm] {$v_{11}$};
			\node[state] (q12) [below  =0.5cm of q11, xshift=-1cm] {$v_{12}$};
			\node[state] (q13) [below  =0.5cm of q11, xshift=1cm] {$v_{13}$};
			\node[state] (q14) [below  =0.5cm of q13, xshift=-1cm] {$v_{14}$};
			\node[state] (q15) [below  =0.5cm of q14, xshift=-1cm] {$v_{15}$};
			\node[state] (q16) [below  =0.5cm of q14, xshift=1cm] {$v_{16}$};
			\node[state] (q17) [below  =0.5cm of q16, xshift=-1cm] {$v_{17}$};
			
			\path[->] 
			
			(q1) edge node[below=-4pt, xshift=5pt, font=\scriptsize]{$e_1$} (q2)
			     edge node[below=-4pt, xshift=-5pt, font=\scriptsize]{$e_2$} (q3)
			(q2) edge node[below=-4pt, xshift=-5pt, font=\scriptsize]{$e_3$} (q4)
			(q3) edge node[below=-4pt, xshift=5pt, font=\scriptsize]{$e_4$} (q4)
			(q4) edge node[below=-4pt, xshift=5pt, font=\scriptsize]{$e_5$} (q5)
			     edge node[below=-4pt, xshift=-5pt, font=\scriptsize]{$e_6$} (q6)
			(q5) edge node[below=-4pt, xshift=-5pt, font=\scriptsize]{$e_7$} (q7)
			(q6) edge node[below=-4pt, xshift=5pt, font=\scriptsize]{$e_8$} (q7)
			(q7) edge node[below=-7pt, xshift=5pt, font=\scriptsize]{$e_9$} (q8)
			
			(q10) edge node[below=-7pt, xshift=7pt, font=\scriptsize]{$e_{10}$} (q11)
			(q11) edge node[below=-4pt, xshift=5pt, font=\scriptsize]{$e_{11}$} (q12)
			      edge node[below=-4pt, xshift=-5pt, font=\scriptsize]{$e_{12}$} (q13)
			(q12) edge node[below=-4pt, xshift=-5pt, font=\scriptsize]{$e_{13}$} (q14)
			(q13) edge node[below=-4pt, xshift=5pt, font=\scriptsize]{$e_{14}$} (q14)
			(q14) edge node[below=-4pt, xshift=5pt, font=\scriptsize]{$e_{15}$} (q15)
			      edge node[below=-4pt, xshift=-5pt, font=\scriptsize]{$e_{16}$} (q16)
			(q15) edge node[below=-4pt, xshift=-5pt, font=\scriptsize]{$e_{17}$} (q17)
			(q16) edge node[below=-4pt, xshift=5pt, font=\scriptsize]{$e_{18}$} (q17)
			;
			\end{tikzpicture}
			\caption{}
			%\label{fig:gull}
       \end{subfigure}
	}
    \resizebox{.25\textwidth}{!}{% <------ Don't forget this %
    	\begin{subfigure}[b]{0.38\textwidth}
    		\centering
    		\begin{tikzpicture}[>=stealth,thick]
    		\usetikzlibrary{calc}
    		\tikzstyle{state}=[]
    		
    		\node[state] (q1) [below  =0.7cm of q0, xshift=0cm] {$v_1$};
    		\node[state] (q2) [below  =0.5cm of q1, xshift=-1cm] {};%$v_2$
    		\node[state] (q3) [below  =0.5cm of q1, xshift=1cm] {};%$v_3$
    		\node[state] (q4) [below  =1.6cm of q1, xshift=0cm] {$v_4$};
    		\node[state] (q5) [below  =0.5cm of q4, xshift=-1cm] {$v_5$};
    		\node[state] (q6) [below  =0.5cm of q4, xshift=1cm] {$v_6$};
    		\node[state] (q7) [below  =0.5cm of q6, xshift=-1cm] {$v_7$};
    		\node[state] (q8) [below  =0.5cm of q7, xshift=0cm] {$v_8$};
    		
    		\node[state] (q10) [below  =0.7cm of q0, xshift=3cm] {$v_{10}$};
    		\node[state] (q11) [below  =0.5cm of q10, xshift=0cm] {$v_{11}$};
    		\node[state] (q12) [below  =0.5cm of q11, xshift=-1cm] {};%$v_{12}$
    		\node[state] (q13) [below  =0.5cm of q11, xshift=1cm] {};%$v_{13}$
    		\node[state] (q14) [below  =1.6cm of q11, xshift=0cm] {$v_{14}$};
    		\node[state] (q15) [below  =0.5cm of q14, xshift=-1cm] {$v_{15}$};
    		\node[state] (q16) [below  =0.5cm of q14, xshift=1cm] {$v_{16}$};
    		\node[state] (q17) [below  =0.5cm of q16, xshift=-1cm] {$v_{17}$};
    		
    		\path[->] 
    		
            (q1) edge node[below=-4pt, xshift=6pt, font=\scriptsize]{$s_1$} (q4)
    		(q4) edge node[below=-4pt, xshift=5pt, font=\scriptsize]{$e_5$} (q5)
    		     edge node[below=-4pt, xshift=-5pt, font=\scriptsize]{$e_6$} (q6)
    		(q5) edge node[below=-4pt, xshift=-5pt, font=\scriptsize]{$e_7$} (q7)
    		(q6) edge node[below=-4pt, xshift=5pt, font=\scriptsize]{$e_8$} (q7)
    		(q7) edge node[below=-7pt, xshift=5pt, font=\scriptsize]{$e_9$} (q8)
    		
    		(q10) edge node[below=-7pt, xshift=7pt, font=\scriptsize]{$e_{10}$} (q11)
            (q11) edge node[below=-4pt, xshift=6pt, font=\scriptsize]{$s_{2}$} (q14)
    		(q14) edge node[below=-4pt, xshift=5pt, font=\scriptsize]{$e_{15}$} (q15)
    		      edge node[below=-4pt, xshift=-5pt, font=\scriptsize]{$e_{16}$} (q16)
    		(q15) edge node[below=-4pt, xshift=-5pt, font=\scriptsize]{$e_{17}$} (q17)
    		(q16) edge node[below=-4pt, xshift=5pt, font=\scriptsize]{$e_{18}$} (q17)
    		;
    		\end{tikzpicture}
    		\caption{}
    		%\label{fig:gull}
    	\end{subfigure}
    }

	\caption{Conflicting depth accumulation order}
	\label{fig:conflicting-depth-accumulation-order}
\end{figure}
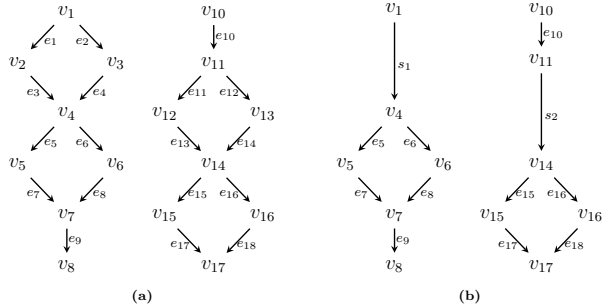

When accumulating local Jacobians of different depth levels, 
sometime different subgraphs may have conflicting optimal accumulation orders, 
e.g., in \cref*{fig:conflicting-depth-accumulation-order} (a), 
the optimal accumulation order for the left subgraph is $(J^0_1J^1_2)(J^2_3J^3_4)J^4_5$, 
while for the right subgraph is $J^0_1(J^1_2J^2_3)(J^3_4J^4_5)$. 
Although depth local Jacobian multiplication can be performed separately 
in unrelated sub-graphs that span the same depth levels, 
local Jacobians are not limited to two different depth levels, 
we can also apply non-depth local Jacobian accumulation, e.g., the result of 
$J\langle v_1,v_{11}|v_2,v_3,v_{12},v_{13}\rangle\times 
J\langle v_2,v_3,v_{12},v_{13}|v_4,v_{14}\rangle=
J\langle v_1,v_{11}|v_4,v_{14}\rangle$
is \cref*{fig:conflicting-depth-accumulation-order} (b), 
where $s_1=e_1e_3+e_2e_4$ and $s_2=e_{11}e_{13}+e_{12}e_{14}$. 
In fact, 
we can first eliminate all the simple chains and simple blocks 
in a differentiation graph using reference edge, 
and eliminate depth levels 
by performing depth local Jacobian multiplication 
when accumulating layered dense graph or subgraph.

\begin{figure}[H] % [H]  \usepackage{float}
	\centering
	\resizebox{0.3\textwidth}{!}{% <------ Don't forget this %  aaa
		\begin{subfigure}[b]{0.45\textwidth}
			\centering
			\begin{tikzpicture}[>=stealth,thick]
			\usetikzlibrary{calc}
			\tikzstyle{state}=[]
			
			\node[state] (y1) {$y_1$};
			\node[state] (e1) [below  =0.7cm of y1, xshift=-2cm] {$e_1$};
			\node[state] (e2) [below  =0.7cm of y1, xshift=2cm] {$e_2$};
			
			\node[state] (e3) [below  =0.7cm of e1, xshift=-1cm] {$e_3$};
			\node[state] (e4) [below  =0.7cm of e1, xshift=1cm] {$e_4$};
			\node[state] (e5) [below  =0.7cm of e2, xshift=-1cm] {$e_5$};
			\node[state] (e6) [below  =0.7cm of e2, xshift=1cm] {$e_6$};
			
			\node[state] (e7) [below  =0.7cm of e3, xshift=0cm] {$e_7$};
            \node[state] (e8) [below  =0.7cm of e4, xshift=0cm] {$e_8$};
            \node[state] (e9) [below  =0.7cm of e5, xshift=0cm] {$e_9$};
            \node[state] (e10) [below  =0.7cm of e6, xshift=0cm] {$e_{10}$};
            
            \node[state] (e11) [below  =0.7cm of e7, xshift=1cm] {$e_{11}$};
            \node[state] (e12) [below  =0.7cm of e10, xshift=-1cm] {$e_{12}$};
            
            \node[state] (x9) [below  =0.7cm of e11, xshift=2cm] {$x_{9}$};

			\path[->] 
			(y1) edge (e1)
			     edge (e2)
			(e1) edge (e3)
			     edge (e4)
			(e2) edge (e5)
			     edge (e6)
			(e4) edge (e8)
			     edge (e9)
			(e5) edge (e8)
			     edge (e9)
			(e3) edge (e7)
			(e6) edge (e10)
			(e8) edge (e11)
            (e9) edge (e12)
			(e7) edge (e11)
            (e10) edge (e12)
			(e11) edge (x9)
            (e12) edge (x9)
			;
            \end{tikzpicture}
            \caption{}
            % \label{fig:gull}
        \end{subfigure}
    }
	\resizebox{0.3\textwidth}{!}{% <------ Don't forget this %  bbb
	\begin{subfigure}[b]{0.45\textwidth}
		\centering
		\begin{tikzpicture}[>=stealth,thick]
		\usetikzlibrary{calc}
		\tikzstyle{state}=[]
		
		\node[state] (y1) {$y_1$};
		\node[state] (e1) [below  =0.7cm of y1, xshift=-2cm] {$e_1$};
		\node[state] (e2) [below  =0.7cm of y1, xshift=2cm] {$e_2$};
		
		\node[state] (e3) [below  =0.7cm of e1, xshift=-1cm] {$e_3$};
		\node[state] (e4) [below  =0.7cm of e1, xshift=1cm] {$e_4$};
		\node[state] (e5) [below  =0.7cm of e2, xshift=-1cm] {$e_5$};
		\node[state] (e6) [below  =0.7cm of e2, xshift=1cm] {$e_6$};
		
		\node[state] (e7) [below  =0.7cm of e3, xshift=0cm] {$e_7$};
		\node[state] (e8) [below  =0.7cm of e4, xshift=0cm] {$e_8$};
		\node[state] (e9) [below  =0.7cm of e5, xshift=0cm] {$e_9$};
		\node[state] (e10) [below  =0.7cm of e6, xshift=0cm] {$e_{10}$};
		
		\node[state] (e11) [below  =0.7cm of e7, xshift=1cm] {$e_{11}$};
		\node[state] (e11-1) [below  =0.7cm of e8, xshift=0.25cm] {$e_{11}$};
		\node[state] (e12) [below  =0.7cm of e10, xshift=-1cm] {$e_{12}$};
		\node[state] (e12-1) [below  =0.7cm of e9, xshift=-0.25cm] {$e_{12}$};
		
		\node[state] (x9) [below  =0.7cm of e11, xshift=2cm] {$x_{9}$};
		
		\path[->] 
		(y1) edge (e1)
		     edge (e2)
		(e1) edge (e3)
		     edge (e4)
		(e2) edge (e5)
		     edge (e6)
		(e4) edge (e8)
		     edge (e9)
		(e5) edge (e8)
		     edge (e9)
		(e3) edge (e7)
        (e6) edge (e10)
		(e8) edge (e11-1)
		(e9) edge (e12-1)
		(e7) edge (e11)
		(e10) edge (e12)
		(e11) edge (x9)
		(e12) edge (x9)
		(e11-1) edge (x9)
        (e12-1) edge (x9)
		
		;
		\end{tikzpicture}
		\caption{}
		% \label{fig:gull}
	\end{subfigure}
    }
    \resizebox{0.3\textwidth}{!}{% <------ Don't forget this % ccc
    	\begin{subfigure}[b]{0.45\textwidth}
    		\centering
    		\begin{tikzpicture}[>=stealth,thick]
    		\usetikzlibrary{calc}
    		\tikzstyle{state}=[]
    		
    		\node[state] (y1) {$y_1$};
    		\node[state] (e1) [below  =0.7cm of y1, xshift=-2cm] {$e_1$};
    		\node[state] (e2) [below  =0.7cm of y1, xshift=2cm] {$e_2$};
    		
    		\node[state] (e3) [below  =0.7cm of e1, xshift=-1cm] {$e_3$};
    		\node[state] (e4) [below  =0.7cm of e1, xshift=1cm] {$e_4$};
    		\node[state] (e5) [below  =0.7cm of e2, xshift=-1cm] {$e_5$};
    		\node[state] (e6) [below  =0.7cm of e2, xshift=1cm] {$e_6$};
    		
    		\node[state] (s1) [below  =0.7cm of e4, xshift=1cm] {$s_1$};
    		
    		\node[state] (e7) [below  =0.7cm of e3, xshift=0cm] {$e_7$};
    		\node[state] (e10) [below  =0.7cm of e6, xshift=0cm] {$e_{10}$};
    		
    		\node[state] (e11) [below  =0.7cm of e7, xshift=1cm] {$e_{11}$};
    		\node[state] (e12) [below  =0.7cm of e10, xshift=-1cm] {$e_{12}$};
    		
    		\node[state] (x9) [below  =0.7cm of e11, xshift=2cm] {$x_{9}$};
    		
    		\path[->] 
    		(y1) edge (e1)
    		edge (e2)
    		(e1) edge (e3)
    		edge (e4)
    		(e2) edge (e5)
    		edge (e6)

    		(e4) edge (s1)
    		(e5) edge (s1)
    		(e3) edge (e7)
    		(e6) edge (e10)
    		(e7) edge (e11)
    		(e10) edge (e12)
    		(e11) edge (x9)
    		(e12) edge (x9)
     		(s1) edge (x9)
    		
    		;
    		\end{tikzpicture}
    		\caption{}
    		% \label{fig:gull}
    	\end{subfigure}
    }
	\resizebox{0.3\textwidth}{!}{% <------ Don't forget this %  ggg
	\begin{subfigure}[b]{0.45\textwidth}
		\centering
		\begin{tikzpicture}[>=stealth,thick]
		\usetikzlibrary{calc}
		\tikzstyle{state}=[]
		
		\node[state] (y1) {$y_1$};
		\node[state] (e1) [below  =0.7cm of y1, xshift=-2cm] {$e_1$};
		\node[state] (e2) [below  =0.7cm of y1, xshift=2cm] {$e_2$};
		
		\node[state] (e3) [below  =0.7cm of e1, xshift=-1cm] {}; % $e_3$
		\node[state] (e4) [below  =0.7cm of e1, xshift=1cm] {$e_4$};
		\node[state] (e5) [below  =0.7cm of e2, xshift=-1cm] {$e_5$};
		\node[state] (e6) [below  =0.7cm of e2, xshift=1cm] {};% $e_6$
		
		\node[state] (e3e7) [below  =0.7cm of e3, xshift=0cm] {$e_3e_7$};
		\node[state] (e8) [below  =0.7cm of e4, xshift=0cm] {$e_8$};
		\node[state] (e9) [below  =0.7cm of e5, xshift=0cm] {$e_9$};
		\node[state] (e6e10) [below  =0.7cm of e6, xshift=0cm] {$e_6e_{10}$};
		
		\node[state] (e11) [below  =0.7cm of e7, xshift=1cm] {$e_{11}$};
		\node[state] (e12) [below  =0.7cm of e10, xshift=-1cm] {$e_{12}$};
		
		\node[state] (x9) [below  =0.7cm of e11, xshift=2cm] {$x_{9}$};

		\path[->] 
		(y1) edge (e1)
		     edge (e2)
		(e1) edge (e3e7)
		     edge (e4)
		(e2) edge (e5)
		     edge (e6e10)
		(e4) edge (e8)
		     edge (e9)
		(e5) edge (e8)
		     edge (e9)
		(e8) edge (e11)
		(e9) edge (e12)
		(e3e7) edge (e11)
		(e6e10) edge (e12)
		(e11) edge (x9)
		(e12) edge (x9)
		
		;
		\end{tikzpicture}
		\caption{}
		% \label{fig:gull}
	\end{subfigure}
    }
	\resizebox{0.3\textwidth}{!}{% <------ Don't forget this %  hhh
	\begin{subfigure}[b]{0.45\textwidth}
		\centering
		\begin{tikzpicture}[>=stealth,thick]
		\usetikzlibrary{calc}
		\tikzstyle{state}=[]
		
		\node[state] (y1) {$y_1$}; 
		\node[state] (e1) [below  =0.7cm of y1, xshift=-2cm] {$e_1$};
		\node[state] (e2) [below  =0.7cm of y1, xshift=2cm] {$e_2$};
		
		\node[state] (e3) [below  =0.7cm of e1, xshift=-1cm] {}; % $e_3$
		\node[state] (e4) [below  =0.7cm of e1, xshift=1cm] {$e_4$};
		\node[state] (e5) [below  =0.7cm of e2, xshift=-1cm] {$e_5$};
		\node[state] (e6) [below  =0.7cm of e2, xshift=1cm] {}; % $e_6$
		
		\node[state] (e3e7e11) [below  =0.7cm of e3, xshift=0cm] {$e_3e_7e_{11}$};
		\node[state] (e8) [below  =0.7cm of e4, xshift=0cm] {}; % $e_8$
		\node[state] (e9) [below  =0.7cm of e5, xshift=0cm] {$e_9$};
		\node[state] (e6e10) [below  =0.7cm of e6, xshift=0cm] {$e_6e_{10}$};
		
		\node[state] (e8e11) [below  =0.7cm of e8, xshift=0.5cm] {$e_8e_{11}$};
		\node[state] (e12) [below  =0.7cm of e10, xshift=-1cm] {$e_{12}$};
		
		\node[state] (x9) [below  =0.7cm of e11, xshift=2cm] {$x_{9}$};
		
		\path[->] 
		(y1) edge (e1)
		     edge (e2)
		(e1) edge (e3e7)
	      	 edge (e4)
		(e2) edge (e5)
		     edge (e6e10)
		(e4) edge (e8e11)
		     edge (e9)
		(e5) edge (e8e11)
		     edge (e9)
		(e9) edge (e12)
		(e6e10) edge (e12)
		(e3e7e11) edge (x9)
		(e12) edge (x9)
		(e8e11) edge (x9)
		
		;
		\end{tikzpicture}
		\caption{}
		% \label{fig:gull}
	\end{subfigure}
    }
	\resizebox{0.3\textwidth}{!}{% <------ Don't forget this %  iii
	\begin{subfigure}[b]{0.45\textwidth}
		\centering
		\begin{tikzpicture}[>=stealth,thick]
		\usetikzlibrary{calc}
		\tikzstyle{state}=[]
		
		\node[state] (y1) {$y_1$}; 
		\node[state] (e1) [below  =0.7cm of y1, xshift=-2cm] {$e_1$};
		\node[state] (e2) [below  =0.7cm of y1, xshift=2cm] {$e_2$};
		
		\node[state] (e3) [below  =0.7cm of e1, xshift=-1cm] {};
		\node[state] (e4) [below  =0.7cm of e1, xshift=1cm] {$e_4$};
		\node[state] (e5) [below  =0.7cm of e2, xshift=-1cm] {$e_5$};
		\node[state] (e6) [below  =0.7cm of e2, xshift=1cm] {};
		
		\node[state] (e3e7e11) [below  =0.7cm of e3, xshift=0cm] {$e_3e_7e_{11}$};
		\node[state] (e8) [below  =0.7cm of e4, xshift=0cm] {}; % $e_8$
		\node[state] (e6e10e12) [below  =0.7cm of e6, xshift=0cm] {$e_6e_{10}e_{12}$};
		
		\node[state] (e11) [below  =0.7cm of e7, xshift=1cm] {};
		\node[state] (s1) [below  =1.0cm of e4, xshift=1.0cm] {$s_1$};
		\node[state] (e12) [below  =0.7cm of e10, xshift=-1cm] {};
		
		\node[state] (x9) [below  =0.7cm of e11, xshift=2cm] {$x_{9}$};
		
		\path[->] 
		(y1) edge (e1)
		edge (e2)
		(e1) edge (e3e7e11)
		     edge (e4)
		(e2) edge (e5)
		     edge (e6e10e12)
		(e4) edge (s1)
		(e5) edge (s1)

		(e3e7e11) edge (x9)
		(e6e10e12) edge (x9)
		(s1) edge (x9)
		;
		\end{tikzpicture}
		\caption{}
		% \label{fig:gull}
	\end{subfigure}
    }
    \caption{Chain-Block Elimination in the Line-Graph}\label{fig:chain-block-elimination}
    \end{figure}
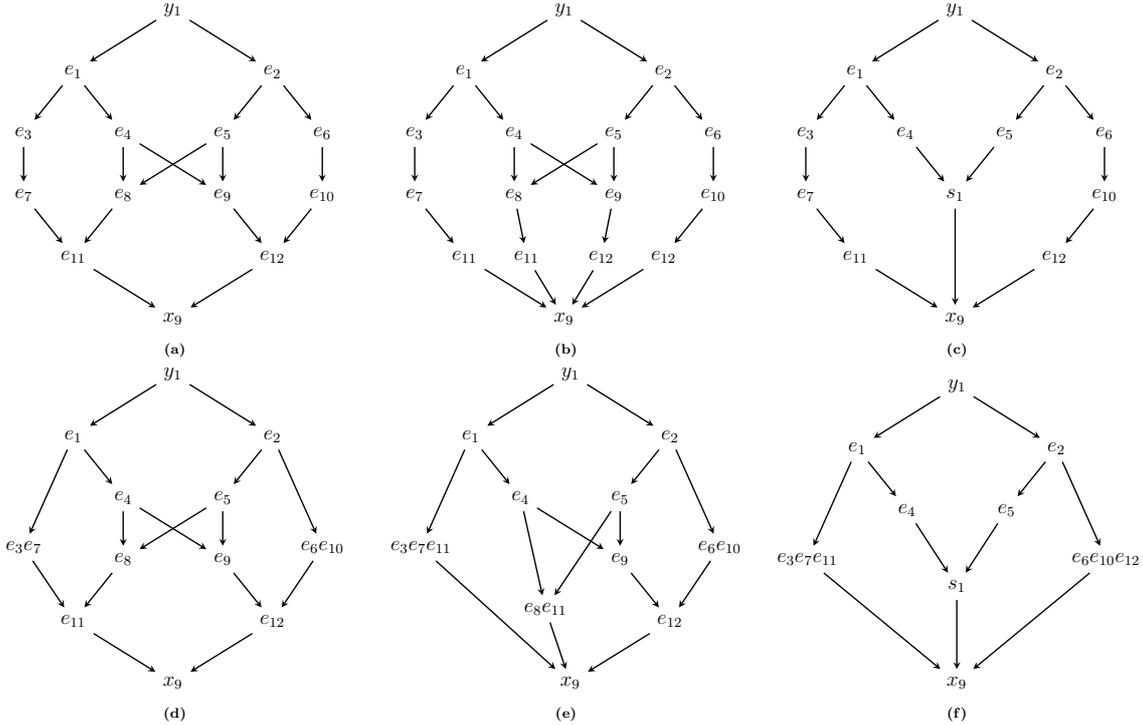

Note that we can perform face elimination in the line-graph 
in the same way how we factorize the differentiation graph, 
splitting vertices explicitly, then eliminating edges, 
as shown in \cref*{fig:chain-block-elimination} (a) (b) (c). 
We can also perform face elimination using the rules presented by Naumann, 
splitting vertices implicitly, 
as shown in \cref*{fig:chain-block-elimination} (d) (e) (f). 
The first way is more similar to the corresponding algebraic expression, 
while the second way costs less when implementing the process in a program, 
because it creates less new vertices. 
\\

\section{Multi-Root or Multi-Terminal}

So far we have only considered the situation of $(|Y|=1)\cap (|X|=1)$. 
When $|Y|>1$ or $|X|>1$,  
multiple blocks or chains are intertwined together, 
the subgraph of every root-terminal pair is a chain or block. 
Then, how do we factorize a graph with multiple root vertices or terminal vertices?
As Griewank and Walther state in their book that 
``...a large number of additive terms that have many common multiplicative subexpressions.
Hence, the challenge is now to organize the computation such that 
a large number of these subexpressions is identified 
and reused not only for a single pair $(j,i)$ but across all nonzero partials $a_{ij}$.''
(\cite{griewank2008} 2008, p196)
We need to identify the common sub-chains or sub-blocks 
nested in root-terminal chains or blocks, 
and replace them with reference edges. 

The purpose of multiple blocks or chains being intertwined together is 
to merge the same common sub-chains or sub-blocks. 
If the outmost blocks or chains have no nested common sub-chains or sub-blocks, 
there is no need to put them together to compute, 
they can be easily separated from each other. 
Just like local Jacobian accumulation 
sees every entry of the result local Jacobian independently, 
another solution is to 
simply list or completely split all the sub-graphs of root-terminal pairs 
from the differentiation graph 
as the entries of root-terminal Jacobian, 
since they do not share any common sub-chain or sub-block that needs be computed first. 
e.g., we can list all the root-terminal pairs in \cref*{fig:biclique} (a). 
In other words, if we know the optimal common sub-chain or sub-block combination set, 
we can compute them separately. 
The upper bound is that we ignore the common sub-chains or sub-blocks, 
separating root-terminal pairs directly and optimizing them individually. 

We notice that different combination set of root-terminal chains or blocks may have 
different set of common sub-chains or sub-blocks.
Let $A=\{(v_i,v_j)|v_i\in Y,v_j\in X\}$, 
$B=\{S|S\subseteq A\}$, $C=\{S|S\subseteq B\}$, 
then $|B|=2^{|A|}=2^{|X|\cdot|Y|}$, $|C|=2^{|B|}=2^{2^{|X|\cdot|Y|}}$. 
When $|X|$ and $|Y|$ increase, 
$|B|$ increases exponentially and $|C|$ increases doubly exponentially, 
it becomes extremely difficult to conduct a brute-force search for a optimal combination 
that requires minimum number of fma's to compute the root-terminal Jacobian. 
However, the combinations in $B$ are not equally important, 
we need to narrow down the searching space, 
e.g., find the largest common sub-chains or sub-blocks overlapped by most root-terminal pairs, 
where `largest' means largest number of fma's needed to compute it. 
Another problem is that when a solution is found, how do we prove that it is optimal? 
What if there is another solution that requires less fma's to calculate?
It is hard to verify if the solution is optimal, especially when $|X|\cdot|Y|$ is large. 
What we can do is to find a solution that requires as less fma's as possible. 

Note that the optimal dynamic value accumulation in linearized differentiation graph 
is not the optimal calculation order. 
If the cost of searching for an optimal calculation order 
exceeds the calculation cost saved by the optimal calculation order, 
we would rather not do it. 
However, for a static symbolic differentiation graph that calculates a lot of different inputs, 
the optimal calculation order matters.

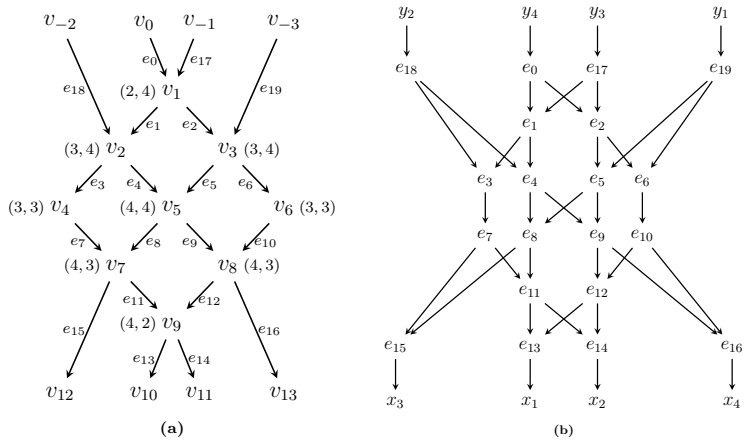
\begin{figure}[H] % [H]  \usepackage{float}
	\centering
	\resizebox{0.3\textwidth}{!}{% <------ Don't forget this %  aaa
	\begin{subfigure}[b]{0.4\textwidth}
		\centering
		\begin{tikzpicture}[>=stealth,thick]
		\usetikzlibrary{calc}
		\tikzstyle{state}=[]
		
		\node[state] (q0) [xshift=0cm] {$v_0$};
		\node[state] (q-1) [xshift=1cm] {$v_{-1}$};
		\node[state] (q-2) [xshift=-1.5cm] {$v_{-2}$};
		\node[state] (q-3) [xshift=2.5cm] {$v_{-3}$};
		
		\node[state] (q1) [below  =0.7cm of q0, xshift=0.5cm] {$v_1$};
		\node[state] (lq1) [left  =-6pt of q1, xshift=-0pt, font=\scriptsize] {$(2,4)$};
		\node[state] (q2) [below  =0.5cm of q1, xshift=-1cm] {$v_2$};
		\node[state] (lq2) [left  =-6pt of q2, xshift=-0pt, font=\scriptsize] {$(3,4)$};
		\node[state] (q3) [below  =0.5cm of q1, xshift=1cm] {$v_3$};
		\node[state] (lq3) [right  =-6pt of q3, xshift=-0pt, font=\scriptsize] {$(3,4)$};
		\node[state] (q4) [below  =0.5cm of q2, xshift=-1cm] {$v_4$};
		\node[state] (lq4) [left  =-6pt of q4, xshift=-0pt, font=\scriptsize] {$(3,3)$};
		\node[state] (q5) [below  =0.5cm of q3, xshift=-1cm] {$v_5$};
		\node[state] (lq5) [left  =-6pt of q5, xshift=-0pt, font=\scriptsize] {$(4,4)$};
		\node[state] (q6) [below  =0.5cm of q3, xshift=1cm] {$v_6$};
		\node[state] (lq6) [right  =-6pt of q6, xshift=-0pt, font=\scriptsize] {$(3,3)$};
		\node[state] (q7) [below  =0.5cm of q5, xshift=-1cm] {$v_7$};
		\node[state] (lq7) [left  =-6pt of q7, xshift=-0pt, font=\scriptsize] {$(4,3)$};
		\node[state] (q8) [below  =0.5cm of q5, xshift=1cm] {$v_8$};
		\node[state] (lq8) [right  =-6pt of q8, xshift=-0pt, font=\scriptsize] {$(4,3)$};
		\node[state] (q9) [below  =0.5cm of q8, xshift=-1cm] {$v_9$};
		\node[state] (lq9) [left  =-6pt of q9, xshift=-0pt, font=\scriptsize] {$(4,2)$};
		
		\node[state] (q10) [below  =0.7cm of q9, xshift=-0.5cm] {$v_{10}$};
		\node[state] (q11) [below  =0.7cm of q9, xshift=0.5cm] {$v_{11}$};
		\node[state] (q12) [below  =0.7cm of q9, xshift=-2cm] {$v_{12}$};
		\node[state] (q13) [below  =0.7cm of q9, xshift=2cm] {$v_{13}$};
		
		\path[->] 
		(q0) edge node[below=-5pt, xshift=-4pt, font=\scriptsize]{$e_0$} (q1)
		(q-1) edge node[below=-5pt, xshift=8pt, font=\scriptsize]{$e_{17}$} (q1)
		(q-2) edge node[below=-5pt, xshift=-7pt, font=\scriptsize]{$e_{18}$} (q2)
		(q-3) edge node[below=-5pt, xshift=8pt, font=\scriptsize]{$e_{19}$} (q3)
		
		(q1) edge node[below=-4pt, xshift=5pt, font=\scriptsize]{$e_1$} (q2)
		edge node[below=-4pt, xshift=-5pt, font=\scriptsize]{$e_2$} (q3)
		(q2) edge node[below=-4pt, xshift=5pt, font=\scriptsize]{$e_3$} (q4)
		edge node[below=-4pt, xshift=-5pt, font=\scriptsize]{$e_4$} (q5)
		(q3) edge node[below=-4pt, xshift=5pt, font=\scriptsize]{$e_5$} (q5)
		edge node[below=-4pt, xshift=-5pt, font=\scriptsize]{$e_6$} (q6)
		(q4) edge node[below=-4pt, xshift=-5pt, font=\scriptsize]{$e_7$} (q7)
		(q5) edge node[below=-4pt, xshift=5pt, font=\scriptsize]{$e_8$} (q7)
		edge node[below=-4pt, xshift=-5pt, font=\scriptsize]{$e_9$} (q8)
		(q6) edge node[below=-4pt, xshift=5pt, font=\scriptsize]{$e_{10}$} (q8)
		(q7) edge node[below=-4pt, xshift=-5pt, font=\scriptsize]{$e_{11}$} (q9)
		(q8) edge node[below=-4pt, xshift=5pt, font=\scriptsize]{$e_{12}$} (q9)
		
		(q9) edge node[below=-6pt, xshift=-7pt, font=\scriptsize]{$e_{13}$} (q10)
		edge node[below=-6pt, xshift=7pt, font=\scriptsize]{$e_{14}$} (q11)
		(q7) edge node[below=-6pt, xshift=-7pt, font=\scriptsize]{$e_{15}$} (q12)
		(q8) edge node[below=-6pt, xshift=7pt, font=\scriptsize]{$e_{16}$} (q13)
		;
		\end{tikzpicture}
		\caption{}
		%\label{fig:gull}
	\end{subfigure}
    }
	\resizebox{.3\textwidth}{!}{% <------ Don't forget this %
		\begin{subfigure}[b]{0.5\textwidth}
			\centering
			\begin{tikzpicture}[>=stealth,thick]
			\usetikzlibrary{calc}
			\tikzstyle{state}=[]
			
			\node[state] (y2) [xshift=0cm] {$y_2$};
			\node[state] (y4) [xshift=2.75cm] {$y_4$};
			\node[state] (y3) [xshift=4.25cm] {$y_3$};
			\node[state] (y1) [xshift=7cm] {$y_1$};
			
			\node[state] (e18) [below  =0.7cm of y2, xshift=-0cm] {$e_{18}$};
			\node[state] (e0) [below  =0.7cm of y4, xshift=-0cm] {$e_{0}$};
			\node[state] (e17) [below  =0.7cm of y3, xshift=0cm] {$e_{17}$};
			\node[state] (e19) [below  =0.7cm of y1, xshift=0cm] {$e_{19}$};
			
			\node[state] (e1) [below  =0.7cm of e0, xshift=-0cm] {$e_{1}$};
			\node[state] (e2) [below  =0.7cm of e17, xshift=0cm] {$e_{2}$};
			
			\node[state] (e3) [below  =0.7cm of e1, xshift=-1cm] {$e_{3}$};
			\node[state] (e4) [below  =0.7cm of e1, xshift=-0cm] {$e_{4}$};
			\node[state] (e5) [below  =0.7cm of e2, xshift=0cm] {$e_{5}$};
			\node[state] (e6) [below  =0.7cm of e2, xshift=1cm] {$e_{6}$};
			
			\node[state] (e7) [below  =0.7cm of e3, xshift=-0cm] {$e_{7}$};
			\node[state] (e8) [below  =0.7cm of e4, xshift=-0cm] {$e_{8}$};
			\node[state] (e9) [below  =0.7cm of e5, xshift=0cm] {$e_{9}$};
			\node[state] (e10) [below  =0.7cm of e6, xshift=0cm] {$e_{10}$};
			
			\node[state] (e11) [below  =0.7cm of e8, xshift=0.0cm] {$e_{11}$};
			\node[state] (e12) [below  =0.7cm of e9, xshift=-0.0cm] {$e_{12}$};
			
			\node[state] (e15) [below  =0.7cm of e11, xshift=-3cm] {$e_{15}$};
			\node[state] (e13) [below  =0.7cm of e11, xshift=-0cm] {$e_{13}$};
			\node[state] (e14) [below  =0.7cm of e12, xshift=0cm] {$e_{14}$};
			\node[state] (e16) [below  =0.7cm of e12, xshift=3cm] {$e_{16}$};
			
			\node[state] (x3) [below  =0.7cm of e15, xshift=-0cm] {$x_{3}$};
			\node[state] (x1) [below  =0.7cm of e13, xshift=-0cm] {$x_{1}$};
			\node[state] (x2) [below  =0.7cm of e14, xshift=0cm] {$x_{2}$};
			\node[state] (x4) [below  =0.7cm of e16, xshift=0cm] {$x_{4}$};
			
			\path[->] 
			(y2) edge (e18)
			(y4) edge (e0)
			(y3) edge (e17)
			(y1) edge (e19)
			
			(e18) edge (e3)
			edge (e4)
			(e0) edge (e1)
			edge (e2)
			(e17) edge (e1)
			edge (e2)
			(e19) edge (e5)
			edge (e6)
			
			(e1) edge (e3)
			edge (e4)
			(e2) edge (e5)
			edge (e6)
			
			(e3) edge (e7)
			(e6) edge (e10)
			(e4) edge (e8)
			edge (e9)
			(e5) edge (e8)
			edge (e9)
			
			(e7) edge (e11)
			edge (e15)
			(e8) edge (e11)
			edge (e15)
			(e9) edge (e12)
			edge (e16)
			(e10) edge (e12)
			edge (e16)
			
			(e11) edge (e13)
			      edge (e14)
			(e12) edge (e13)
			      edge (e14)
			
			(e15) edge (x3)
			(e13) edge (x1)
			(e14) edge (x2)
			(e16) edge (x4)
			
			;
			\end{tikzpicture}
			\caption{}
			%\label{fig:1}
		\end{subfigure}
	}
	\caption{Multi-root and Multi-terminal Differentiation Graph and Line Graph}
	\label{fig:multi-root-terminal-graph}
\end{figure}

For example, \cref*{fig:multi-root-terminal-graph} (a) is 
a multi-root and multi-terminal differentiation graph. 
We notice that if we ignore all the vertices and edges not in $P_{v_{-2}\to v_{12}}$, 
i.e., all the vertices and edges that are irrelevant to the calculation of 
$\frac{\partial v_{-2}}{\partial v_{12}}$, 
then all the relevant vertices and edges in $P_{v_{-2}\to v_{12}}$ 
form an indirect simple chain $e'\langle v_{-2},v_{12}\rangle$. 
We can quickly write down its algebraic form: $e_{18}(e_3e_7+e_4e_8)e_{15}$. 
However after finishing the calculation or conversion, we find ourselves in an awkward situation:   
it cannot be eliminated directly, 
because $\forall e_{k}\in P_{v_{-2}\to v_{12}}, \mathrm{O}_{P_{Y\to X}}(e_{k})>1$. 
Similarly, in its corresponding line-graph \cref*{fig:multi-root-terminal-graph} (b), 
it cannot be eliminated directly either, 
because $\forall e_{k}\in P_{y_{2}\to x_{3}}, \mathrm{O}_{P_{\tilde{Y}\to \tilde{X}}}(e_{k})>1$.

\begin{figure}[H] % [H]  \usepackage{float}
	\centering
	\resizebox{0.3\textwidth}{!}{% <------ Don't forget this %  aaa
	\begin{subfigure}[b]{0.4\textwidth}
		\centering
		\begin{tikzpicture}[>=stealth,thick]
		\usetikzlibrary{calc}
		\tikzstyle{state}=[]
		
		\node[state] (q0) [xshift=0cm] {$v_0$};
		\node[state] (q-1) [xshift=1cm] {$v_{-1}$};
		\node[state] (q-2) [xshift=-1.5cm] {$v_{-2}$};
		\node[state] (q-3) [xshift=2.5cm] {$v_{-3}$};
		\node[state] (q-4) [xshift=4cm] {$v_{-4}$};
		\node[state] (q14) [below  =2cm of q-4, xshift=0cm] {$v_{14}$};
		\node[state] (lq14) [left  =-6pt of q14, xshift=-0pt, font=\scriptsize] {$(1,1)$};
		
		\node[state] (q1) [below  =0.7cm of q0, xshift=0.5cm] {$v_1$};
		\node[state] (lq1) [left  =-6pt of q1, xshift=-0pt, font=\scriptsize] {$(3,4)$};
		\node[state] (q2) [below  =0.5cm of q1, xshift=-1cm] {$v_2$};
		\node[state] (lq2) [left  =-6pt of q2, xshift=-0pt, font=\scriptsize] {$(3,4)$};
		\node[state] (q3) [below  =0.5cm of q1, xshift=1cm] {$v_3$};
		\node[state] (lq3) [right  =-6pt of q3, xshift=-0pt, font=\scriptsize] {$(3,4)$};
		\node[state] (q4) [below  =0.5cm of q2, xshift=-1cm] {$v_4$};
		\node[state] (lq4) [left  =-6pt of q4, xshift=-0pt, font=\scriptsize] {$(3,3)$};
		\node[state] (q5) [below  =0.5cm of q3, xshift=-1cm] {$v_5$};
		\node[state] (lq5) [left  =-6pt of q5, xshift=-0pt, font=\scriptsize] {$(4,4)$};
		\node[state] (q6) [below  =0.5cm of q3, xshift=1cm] {$v_6$};
		\node[state] (lq6) [right  =-6pt of q6, xshift=-0pt, font=\scriptsize] {$(3,3)$};
		\node[state] (q7) [below  =0.5cm of q5, xshift=-1cm] {$v_7$};
		\node[state] (lq7) [left  =-6pt of q7, xshift=-0pt, font=\scriptsize] {$(4,3)$};
		\node[state] (q8) [below  =0.5cm of q5, xshift=1cm] {$v_8$};
		\node[state] (lq8) [right  =-6pt of q8, xshift=-0pt, font=\scriptsize] {$(4,3)$};
		\node[state] (q9) [below  =0.5cm of q8, xshift=-1cm] {$v_9$};
		\node[state] (lq9) [left  =-6pt of q9, xshift=-0pt, font=\scriptsize] {$(4,2)$};
		
		\node[state] (q10) [below  =0.7cm of q9, xshift=-0.5cm] {$v_{10}$};
		\node[state] (q11) [below  =0.7cm of q9, xshift=0.5cm] {$v_{11}$};
		\node[state] (q12) [below  =0.7cm of q9, xshift=-2cm] {$v_{12}$};
		\node[state] (q13) [below  =0.7cm of q9, xshift=2cm] {$v_{13}$};
		
		\path[->] 
		(q0) edge node[below=-5pt, xshift=-4pt, font=\scriptsize]{$e_0$} (q1)
		(q-1) edge node[below=-5pt, xshift=8pt, font=\scriptsize]{$e_{17}$} (q1)
		(q-2) edge node[below=-5pt, xshift=-7pt, font=\scriptsize]{$e_{18}$} (q2)
		      edge node[below=-5pt, xshift=-7pt, font=\scriptsize]{$e_{20}$} (q1)
		(q-3) edge node[below=-5pt, xshift=8pt, font=\scriptsize]{$e_{19}$} (q3)
		(q-4) edge node[below=-5pt, xshift=-7pt, font=\scriptsize]{$e_{21}$} (q14)
		
		(q1) edge node[below=-4pt, xshift=5pt, font=\scriptsize]{$e_1$} (q2)
		edge node[below=-4pt, xshift=-5pt, font=\scriptsize]{$e_2$} (q3)
		(q2) edge node[below=-4pt, xshift=5pt, font=\scriptsize]{$e_3$} (q4)
		     edge node[below=-4pt, xshift=-5pt, font=\scriptsize]{$e_4$} (q5)
		(q3) edge node[below=-4pt, xshift=5pt, font=\scriptsize]{$e_5$} (q5)
		edge node[below=-4pt, xshift=-5pt, font=\scriptsize]{$e_6$} (q6)
		(q4) edge node[below=-4pt, xshift=-5pt, font=\scriptsize]{$e_7$} (q7)
		(q5) edge node[below=-4pt, xshift=5pt, font=\scriptsize]{$e_8$} (q7)
		edge node[below=-4pt, xshift=-5pt, font=\scriptsize]{$e_9$} (q8)
		(q6) edge node[below=-4pt, xshift=5pt, font=\scriptsize]{$e_{10}$} (q8)
		(q7) edge node[below=-4pt, xshift=-5pt, font=\scriptsize]{$e_{11}$} (q9)
		(q8) edge node[below=-4pt, xshift=5pt, font=\scriptsize]{$e_{12}$} (q9)
		
		(q9) edge node[below=-6pt, xshift=-7pt, font=\scriptsize]{$e_{13}$} (q10)
		edge node[below=-6pt, xshift=7pt, font=\scriptsize]{$e_{14}$} (q11)
		(q7) edge node[below=-6pt, xshift=-7pt, font=\scriptsize]{$e_{15}$} (q12)
		(q8) edge node[below=-6pt, xshift=7pt, font=\scriptsize]{$e_{16}$} (q13)
		(q14) edge node[below=-5pt, xshift=8pt, font=\scriptsize]{$e_{22}$} (q13)
		;
		\end{tikzpicture}
		\caption{}
		%\label{fig:gull}
	\end{subfigure}
    }
	\resizebox{0.3\textwidth}{!}{% <------ Don't forget this %  bbb
	\begin{subfigure}[b]{0.4\textwidth}
		\centering
		\begin{tikzpicture}[>=stealth,thick]
		\usetikzlibrary{calc}
		\tikzstyle{state}=[]
		
		\node[state] (q0) [xshift=0cm] {$v_0$};
		\node[state] (q-1) [xshift=1cm] {$v_{-1}$};
		\node[state] (q-2) [xshift=-1.5cm] {$v_{-2}$};
		\node[state] (q-3) [xshift=2.5cm] {$v_{-3}$};
		\node[state] (q-4) [xshift=4cm] {$v_{-4}$};
		
		\node[state] (q1) [below  =0.7cm of q0, xshift=0.5cm] {$v_1$};
		\node[state] (lq1) [left  =-6pt of q1, xshift=-0pt, font=\scriptsize] {$(3,4)$};
		\node[state] (q2) [below  =0.5cm of q1, xshift=-1cm] {$v_2$};
		\node[state] (lq2) [left  =-6pt of q2, xshift=-0pt, font=\scriptsize] {$(3,4)$};
		\node[state] (q3) [below  =0.5cm of q1, xshift=1cm] {$v_3$};
		\node[state] (lq3) [right  =-6pt of q3, xshift=-0pt, font=\scriptsize] {$(3,4)$};
		\node[state] (q5) [below  =0.5cm of q3, xshift=-1cm] {$v_5$};
		\node[state] (lq5) [left  =-9pt of q5, xshift=-0pt, font=\scriptsize] {$(4,4)$};
		\node[state] (q7) [below  =0.5cm of q5, xshift=-1cm] {$v_7$};
		\node[state] (lq7) [left  =-6pt of q7, xshift=-0pt, font=\scriptsize] {$(4,3)$};
		\node[state] (q8) [below  =0.5cm of q5, xshift=1cm] {$v_8$};
		\node[state] (lq8) [right  =-6pt of q8, xshift=-0pt, font=\scriptsize] {$(4,3)$};
		\node[state] (q9) [below  =0.5cm of q8, xshift=-1cm] {$v_9$};
		\node[state] (lq9) [left  =-6pt of q9, xshift=-0pt, font=\scriptsize] {$(4,2)$};
		
		\node[state] (q10) [below  =0.7cm of q9, xshift=-0.5cm] {$v_{10}$};
		\node[state] (q11) [below  =0.7cm of q9, xshift=0.5cm] {$v_{11}$};
		\node[state] (q12) [below  =0.7cm of q9, xshift=-2cm] {$v_{12}$};
		\node[state] (q13) [below  =0.7cm of q9, xshift=2cm] {$v_{13}$};
		
		\path[->] 
		(q0) edge node[below=-5pt, xshift=-4pt, font=\scriptsize]{$e_0$} (q1)
		(q-1) edge node[below=-5pt, xshift=8pt, font=\scriptsize]{$e_{17}$} (q1)
		(q-2) edge node[below=-5pt, xshift=-7pt, font=\scriptsize]{$e_{18}$} (q2)
		      edge node[below=-5pt, xshift=-7pt, font=\scriptsize]{$e_{20}$} (q1)
		(q-3) edge node[below=-5pt, xshift=8pt, font=\scriptsize]{$e_{19}$} (q3)
		(q-4) edge node[below=-5pt, xshift=-7pt, font=\scriptsize]{$s_{3}$} (q13)
		
		(q1) edge node[below=-4pt, xshift=5pt, font=\scriptsize]{$e_1$} (q2)
		     edge node[below=-4pt, xshift=-5pt, font=\scriptsize]{$e_2$} (q3)
		(q2) edge node[left=2pt, xshift=5pt, font=\scriptsize]{$s_1$} (q7)
		     edge node[below=-4pt, xshift=-5pt, font=\scriptsize]{$e_4$} (q5)
		(q3) edge node[below=-4pt, xshift=5pt, font=\scriptsize]{$e_5$} (q5)
		     edge node[right=2pt, xshift=-5pt, font=\scriptsize]{$s_2$} (q8)
		(q5) edge node[below=-4pt, xshift=5pt, font=\scriptsize]{$e_8$} (q7)
		     edge node[below=-4pt, xshift=-5pt, font=\scriptsize]{$e_9$} (q8)
		(q7) edge node[below=-4pt, xshift=-5pt, font=\scriptsize]{$e_{11}$} (q9)
		(q8) edge node[below=-4pt, xshift=5pt, font=\scriptsize]{$e_{12}$} (q9)
		
		(q9) edge node[below=-6pt, xshift=-7pt, font=\scriptsize]{$e_{13}$} (q10)
		     edge node[below=-6pt, xshift=7pt, font=\scriptsize]{$e_{14}$} (q11)
		(q7) edge node[below=-6pt, xshift=-7pt, font=\scriptsize]{$e_{15}$} (q12)
		(q8) edge node[below=-6pt, xshift=7pt, font=\scriptsize]{$e_{16}$} (q13)
		;
		\end{tikzpicture}
		\caption{}
		%\label{fig:gull}
	\end{subfigure}
    }
	\resizebox{0.3\textwidth}{!}{% <------ Don't forget this %  ccc
	\begin{subfigure}[b]{0.4\textwidth}
		\centering
		\begin{tikzpicture}[>=stealth,thick]
		\usetikzlibrary{calc}
		\tikzstyle{state}=[]
		
		\node[state] (q0) [xshift=0cm] {$v_0$};
		\node[state] (q-1) [xshift=1cm] {$v_{-1}$};
		\node[state] (q-2) [xshift=-1.5cm] {$v_{-2}$};
		\node[state] (q-3) [xshift=2.5cm] {$v_{-3}$};
		
		\node[state] (q1) [below  =0.7cm of q0, xshift=0.5cm] {$v_1$};
		\node[state] (lq1) [left  =-6pt of q1, xshift=-0pt, font=\scriptsize] {$(3,4)$};
		\node[state] (q2) [below  =0.5cm of q1, xshift=-1cm] {$v_2$};
		\node[state] (lq2) [left  =-6pt of q2, xshift=-0pt, font=\scriptsize] {$(3,4)$};
		\node[state] (q3) [below  =0.5cm of q1, xshift=1cm] {$v_3$};
		\node[state] (lq3) [right  =-6pt of q3, xshift=-0pt, font=\scriptsize] {$(3,4)$};
		\node[state] (q5) [below  =0.5cm of q3, xshift=-1cm] {$v_5$};
		\node[state] (lq5) [left  =-9pt of q5, xshift=-0pt, font=\scriptsize] {$(4,4)$};
		\node[state] (q7) [below  =0.5cm of q5, xshift=-1cm] {$v_7$};
		\node[state] (lq7) [left  =-6pt of q7, xshift=-0pt, font=\scriptsize] {$(4,3)$};
		\node[state] (q8) [below  =0.5cm of q5, xshift=1cm] {$v_8$};
		\node[state] (lq8) [right  =-6pt of q8, xshift=-0pt, font=\scriptsize] {$(4,3)$};
		\node[state] (q9) [below  =0.5cm of q8, xshift=-1cm] {$v_9$};
		\node[state] (lq9) [left  =-6pt of q9, xshift=-0pt, font=\scriptsize] {$(4,2)$};
		
		\node[state] (q10) [below  =0.7cm of q9, xshift=-0.5cm] {$v_{10}$};
		\node[state] (q11) [below  =0.7cm of q9, xshift=0.5cm] {$v_{11}$};
		\node[state] (q12) [below  =0.7cm of q9, xshift=-2cm] {$v_{12}$};
		\node[state] (q13) [below  =0.7cm of q9, xshift=2cm] {$v_{13}$};
		
		\path[->] 
		(q0) edge node[below=-5pt, xshift=-4pt, font=\scriptsize]{$e_0$} (q1)
		(q-1) edge node[below=-5pt, xshift=8pt, font=\scriptsize]{$e_{17}$} (q1)
		(q-2) edge node[below=-5pt, xshift=-7pt, font=\scriptsize]{$e_{18}$} (q2)
		edge node[below=-5pt, xshift=-7pt, font=\scriptsize]{$e_{20}$} (q1)
		(q-3) edge node[below=-5pt, xshift=8pt, font=\scriptsize]{$e_{19}$} (q3)
		
		(q1) edge node[below=-4pt, xshift=5pt, font=\scriptsize]{$e_1$} (q2)
		edge node[below=-4pt, xshift=-5pt, font=\scriptsize]{$e_2$} (q3)
		(q2) edge node[left=2pt, xshift=5pt, font=\scriptsize]{$s_1$} (q7)
		edge node[below=-4pt, xshift=-5pt, font=\scriptsize]{$e_4$} (q5)
		(q3) edge node[below=-4pt, xshift=5pt, font=\scriptsize]{$e_5$} (q5)
		edge node[right=2pt, xshift=-5pt, font=\scriptsize]{$s_2$} (q8)
		(q5) edge node[below=-4pt, xshift=5pt, font=\scriptsize]{$e_8$} (q7)
		edge node[below=-4pt, xshift=-5pt, font=\scriptsize]{$e_9$} (q8)
		(q7) edge node[below=-4pt, xshift=-5pt, font=\scriptsize]{$e_{11}$} (q9)
		(q8) edge node[below=-4pt, xshift=5pt, font=\scriptsize]{$e_{12}$} (q9)
		
		(q9) edge node[below=-6pt, xshift=-7pt, font=\scriptsize]{$e_{13}$} (q10)
		edge node[below=-6pt, xshift=7pt, font=\scriptsize]{$e_{14}$} (q11)
		(q7) edge node[below=-6pt, xshift=-7pt, font=\scriptsize]{$e_{15}$} (q12)
		(q8) edge node[below=-6pt, xshift=7pt, font=\scriptsize]{$e_{16}$} (q13)
		;
		\end{tikzpicture}
		\caption{}
		%\label{fig:gull}
	\end{subfigure}
    }
	\resizebox{0.3\textwidth}{!}{% <------ Don't forget this %  ddd
	\begin{subfigure}[b]{0.4\textwidth}
		\centering
		\begin{tikzpicture}[>=stealth,thick]
			\usetikzlibrary{calc}
			\tikzstyle{state}=[]
			
			\node[state] (q0) [xshift=0cm] {$v_0$};
			\node[state] (q-1) [xshift=1cm] {$v_{-1}$};
			\node[state] (q-2) [xshift=-1.5cm] {$v_{-2}$};
			
			\node[state] (q1) [below  =0.7cm of q0, xshift=0.5cm] {$v_1$};
			\node[state] (lq1) [left  =-6pt of q1, xshift=-0pt, font=\scriptsize] {$(3,4)$};
			\node[state] (q2) [below  =0.5cm of q1, xshift=-1cm] {$v_2$};
			\node[state] (lq2) [left  =-6pt of q2, xshift=-0pt, font=\scriptsize] {$(3,4)$};
			\node[state] (q3) [below  =0.5cm of q1, xshift=1cm] {$v_3$};
			\node[state] (lq3) [right  =-6pt of q3, xshift=-0pt, font=\scriptsize] {$(3,4)$};
			\node[state] (q5) [below  =0.5cm of q3, xshift=-1cm] {$v_5$};
			\node[state] (lq5) [left  =-8pt of q5, xshift=-0pt, font=\scriptsize] {$(3,4)$};
			\node[state] (q7) [below  =0.5cm of q5, xshift=-1cm] {$v_7$};
			\node[state] (lq7) [left  =-6pt of q7, xshift=-0pt, font=\scriptsize] {$(3,3)$};
			\node[state] (q8) [below  =0.5cm of q5, xshift=1cm] {$v_8$};
			\node[state] (lq8) [right  =-6pt of q8, xshift=-0pt, font=\scriptsize] {$(3,3)$};
			\node[state] (q9) [below  =0.5cm of q8, xshift=-1cm] {$v_9$};
			\node[state] (lq9) [left  =-6pt of q9, xshift=-0pt, font=\scriptsize] {$(3,2)$};
			
			\node[state] (q10) [below  =0.7cm of q9, xshift=-0.5cm] {$v_{10}$};
			\node[state] (q11) [below  =0.7cm of q9, xshift=0.5cm] {$v_{11}$};
			\node[state] (q12) [below  =0.7cm of q9, xshift=-2cm] {$v_{12}$};
			\node[state] (q13) [below  =0.7cm of q9, xshift=2cm] {$v_{13}$};
			
			\path[->] 
			
		    (q0) edge node[below=-5pt, xshift=-4pt, font=\scriptsize]{$e_0$} (q1)
            (q-1) edge node[below=-5pt, xshift=8pt, font=\scriptsize]{$e_{17}$} (q1)
            (q-2) edge node[below=-5pt, xshift=-7pt, font=\scriptsize]{$e_{20}$} (q1)

			(q1) edge node[below=-4pt, xshift=5pt, font=\scriptsize]{$e_1$} (q2)
			     edge node[below=-4pt, xshift=-5pt, font=\scriptsize]{$e_2$} (q3)
			(q2) edge node[left=2pt, xshift=5pt, font=\scriptsize]{$s_1$} (q7)
			     edge node[below=-4pt, xshift=-5pt, font=\scriptsize]{$e_4$} (q5)
			(q3) edge node[below=-4pt, xshift=5pt, font=\scriptsize]{$e_5$} (q5)
			     edge node[right=2pt, xshift=-5pt, font=\scriptsize]{$s_2$} (q8)
			(q5) edge node[below=-4pt, xshift=5pt, font=\scriptsize]{$e_8$} (q7)
			     edge node[below=-4pt, xshift=-5pt, font=\scriptsize]{$e_9$} (q8)
			(q7) edge node[below=-4pt, xshift=-5pt, font=\scriptsize]{$e_{11}$} (q9)
			(q8) edge node[below=-4pt, xshift=5pt, font=\scriptsize]{$e_{12}$} (q9)

		    (q9) edge node[below=-6pt, xshift=-7pt, font=\scriptsize]{$e_{13}$} (q10)
                 edge node[below=-6pt, xshift=7pt, font=\scriptsize]{$e_{14}$} (q11)
            (q7) edge node[below=-6pt, xshift=-7pt, font=\scriptsize]{$e_{15}$} (q12)
            (q8) edge node[below=-6pt, xshift=7pt, font=\scriptsize]{$e_{16}$} (q13)
			
			;
		\end{tikzpicture}
		\caption{}
		%\label{fig:gull}
    \end{subfigure}

	}
	%%%%%%%%%%%%%%%%%%
	\resizebox{0.3\textwidth}{!}{% <------ Don't forget this %  eee
		\begin{subfigure}[b]{0.4\textwidth}
			\centering
			\begin{tikzpicture}[>=stealth,thick]
			\usetikzlibrary{calc}
			\tikzstyle{state}=[]
			
			\node[state] (q-2) [xshift=-1.5cm] {$v_{-2}$};
			\node[state] (q-3) [xshift=2.5cm] {$v_{-3}$};
			
			\node[state] (q1) [below  =0.7cm of q0, xshift=0.5cm] {};
			\node[state] (q2) [below  =0.5cm of q1, xshift=-1cm] {$v_2$};
			\node[state] (lq2) [left  =-6pt of q2, xshift=-0pt, font=\scriptsize] {$(1,4)$};
			\node[state] (q3) [below  =0.5cm of q1, xshift=1cm] {$v_3$};
			\node[state] (lq3) [right  =-6pt of q3, xshift=-0pt, font=\scriptsize] {$(1,4)$};
			\node[state] (q5) [below  =0.5cm of q3, xshift=-1cm] {$v_5$};
			\node[state] (lq5) [left  =-8pt of q5, xshift=-0pt, font=\scriptsize] {$(2,4)$};
			\node[state] (q7) [below  =0.5cm of q5, xshift=-1cm] {$v_7$};
			\node[state] (lq7) [left  =-6pt of q7, xshift=-0pt, font=\scriptsize] {$(2,3)$};
			\node[state] (q8) [below  =0.5cm of q5, xshift=1cm] {$v_8$};
			\node[state] (lq8) [right  =-6pt of q8, xshift=-0pt, font=\scriptsize] {$(2,3)$};
			\node[state] (q9) [below  =0.5cm of q8, xshift=-1cm] {$v_9$};
			\node[state] (lq9) [left  =-6pt of q9, xshift=-0pt, font=\scriptsize] {$(2,2)$};
			
			\node[state] (q10) [below  =0.7cm of q9, xshift=-0.5cm] {$v_{10}$};
			\node[state] (q11) [below  =0.7cm of q9, xshift=0.5cm] {$v_{11}$};
			\node[state] (q12) [below  =0.7cm of q9, xshift=-2cm] {$v_{12}$};
			\node[state] (q13) [below  =0.7cm of q9, xshift=2cm] {$v_{13}$};
			
			\path[->] 
			(q-2) edge (q2)
			(q-3) edge (q3)
			
			(q2) edge node[left=2pt, xshift=5pt, font=\scriptsize]{$s_1$} (q7)
			     edge node[below=-4pt, xshift=-5pt, font=\scriptsize]{$e_4$} (q5)
			(q3) edge node[below=-4pt, xshift=5pt, font=\scriptsize]{$e_5$} (q5)
			     edge node[right=2pt, xshift=-5pt, font=\scriptsize]{$s_2$} (q8)
			(q5) edge node[below=-4pt, xshift=5pt, font=\scriptsize]{$e_8$} (q7)
			     edge node[below=-4pt, xshift=-5pt, font=\scriptsize]{$e_9$} (q8)
			(q7) edge node[below=-4pt, xshift=-5pt, font=\scriptsize]{$e_{11}$} (q9)
			(q8) edge node[below=-4pt, xshift=5pt, font=\scriptsize]{$e_{12}$} (q9)
			
			(q9) edge (q10)
			     edge (q11)
			(q7) edge (q12)
			(q8) edge (q13)
			;
			\end{tikzpicture}
			\caption{}
			%\label{fig:gull}
		\end{subfigure}
	}
    \resizebox{0.3\textwidth}{!}{% <------ Don't forget this %  fff
    	\begin{subfigure}[b]{0.4\textwidth}
    		\centering
    		\begin{tikzpicture}[>=stealth,thick]
    		\usetikzlibrary{calc}
    		\tikzstyle{state}=[]
    		
    		\node[state] (q0) [xshift=0cm] {$v_0$};
    		\node[state] (q-1) [xshift=1cm] {$v_{-1}$};
    		\node[state] (q-2) [xshift=-1.5cm] {$v_{-2}$};
    		
    		\node[state] (q1) [below  =0.7cm of q0, xshift=0.5cm] {$v_1$};
    		\node[state] (lq1) [left  =-6pt of q1, xshift=-0pt, font=\scriptsize] {$(3,4)$};
    		\node[state] (q2) [below  =0.5cm of q1, xshift=-1cm] {};
    		\node[state] (q3) [below  =0.5cm of q1, xshift=1cm] {};
    		\node[state] (q4) [below  =0.5cm of q2, xshift=-1cm] {};
    		\node[state] (q5-1) [below  =0.5cm of q2, xshift=0.5cm] {};
    		\node[state] (q5-2) [below  =0.5cm of q3, xshift=-0.5cm] {};
    		\node[state] (q6) [below  =0.5cm of q3, xshift=1cm] {};
    		\node[state] (q7) [below  =0.5cm of q5-1, xshift=-0.5cm] {$v_7$};
    		\node[state] (lq7) [left  =-6pt of q7, xshift=-0pt, font=\scriptsize] {$(3,3)$};
    		\node[state] (q8) [below  =0.5cm of q5-2, xshift=0.5cm] {$v_8$};
    		\node[state] (lq8) [right  =-6pt of q8, xshift=-0pt, font=\scriptsize] {$(3,3)$};
    		\node[state] (q9) [below  =0.5cm of q8, xshift=-1cm] {$v_9$};
    		\node[state] (lq9) [left  =-6pt of q9, xshift=-0pt, font=\scriptsize] {$(3,2)$};
    		
    		\node[state] (q10) [below  =0.7cm of q9, xshift=-0.5cm] {$v_{10}$};
    		\node[state] (q11) [below  =0.7cm of q9, xshift=0.5cm] {$v_{11}$};
    		\node[state] (q12) [below  =0.7cm of q9, xshift=-2cm] {$v_{12}$};
    		\node[state] (q13) [below  =0.7cm of q9, xshift=2cm] {$v_{13}$};
    		
    		\path[->] 
    		(q0) edge node[below=-4pt, xshift=-5pt, font=\scriptsize]{$e_0$} (q1)
    		(q-1) edge node[below=-4pt, xshift=7pt, font=\scriptsize]{$e_{15}$}(q1)
    		(q-2) edge (q1)

    		(q1) edge node[below=-4pt, xshift=-7pt, font=\scriptsize]{$s_8$} (q7)
    		     edge node[below=-4pt, xshift=7pt, font=\scriptsize]{$s_9$} (q8)

    		(q7) edge node[below=-4pt, xshift=-5pt, font=\scriptsize]{$e_{11}$} (q9)
    		(q8) edge node[below=-4pt, xshift=5pt, font=\scriptsize]{$e_{12}$} (q9)
    		
    		(q9) edge (q10)
    		edge (q11)
    		(q7) edge (q12)
    		(q8) edge (q13)
    		;
    		\end{tikzpicture}
    		\caption{}
    		%\label{fig:gull}
    	\end{subfigure}
    	
    }
    \resizebox{0.3\textwidth}{!}{% <------ Don't forget this %  ggg
    	\begin{subfigure}[b]{0.4\textwidth}
    		\centering
    		\begin{tikzpicture}[>=stealth,thick]
    		\usetikzlibrary{calc}
    		\tikzstyle{state}=[]
    		
    		\node[state] (q-2) [xshift=-1.5cm] {$v_{-2}$};
    		\node[state] (q-3) [xshift=2.5cm] {$v_{-3}$};
    		
    		\node[state] (q1) [below  =0.7cm of q0, xshift=0.5cm] {};
    		\node[state] (q2) [below  =0.5cm of q1, xshift=-1cm] {$v_2$};
    		\node[state] (lq2) [left  =-6pt of q2, xshift=-0pt, font=\scriptsize] {$(1,2)$};
    		\node[state] (q3) [below  =0.5cm of q1, xshift=1cm] {$v_3$};
    		\node[state] (lq3) [right  =-6pt of q3, xshift=-0pt, font=\scriptsize] {$(1,2)$};
    		\node[state] (q5) [below  =0.5cm of q3, xshift=-1cm] {$v_5$};
    		\node[state] (lq5) [left  =-8pt of q5, xshift=-0pt, font=\scriptsize] {$(2,2)$};
    		\node[state] (q7) [below  =0.5cm of q5, xshift=-1cm] {$v_7$};
    		\node[state] (lq7) [left  =-6pt of q7, xshift=-0pt, font=\scriptsize] {$(2,1)$};
    		\node[state] (q8) [below  =0.5cm of q5, xshift=1cm] {$v_8$};
    		\node[state] (lq8) [right  =-6pt of q8, xshift=-0pt, font=\scriptsize] {$(2,1)$};
    		\node[state] (q12) [below  =0.7cm of q9, xshift=-2cm] {$v_{12}$};
    		\node[state] (q13) [below  =0.7cm of q9, xshift=2cm] {$v_{13}$};
    		
    		\path[->] 

    		(q-2) edge (q2)
    		(q-3) edge (q3)

    		(q2) edge node[left=2pt, xshift=5pt, font=\scriptsize]{$s_1$} (q7)
    		     edge node[below=-4pt, xshift=-5pt, font=\scriptsize]{$e_4$} (q5)
    		(q3) edge node[below=-4pt, xshift=5pt, font=\scriptsize]{$e_5$} (q5)
    		     edge node[right=2pt, xshift=-5pt, font=\scriptsize]{$s_2$} (q8)
    		(q5) edge node[below=-4pt, xshift=5pt, font=\scriptsize]{$e_8$} (q7)
    		     edge node[below=-4pt, xshift=-5pt, font=\scriptsize]{$e_9$} (q8)

    		(q7) edge (q12)
    		(q8) edge (q13)
    		;
    		\end{tikzpicture}
    		\caption{}
    		%\label{fig:gull}
    	\end{subfigure}

    }
    \resizebox{0.3\textwidth}{!}{% <------ Don't forget this %  hhh
    	\begin{subfigure}[b]{0.4\textwidth}
    		\centering
    		\begin{tikzpicture}[>=stealth,thick]
    		\usetikzlibrary{calc}
    		\tikzstyle{state}=[]
    		
    		\node[state] (q-2) [xshift=-1.5cm] {$v_{-2}$};
    		\node[state] (q-3) [xshift=2.5cm] {$v_{-3}$};
    		
    		\node[state] (q1) [below  =0.7cm of q0, xshift=0.5cm] {};
    		\node[state] (q2) [below  =0.5cm of q1, xshift=-1cm] {$v_2$};
    		\node[state] (lq2) [left  =-6pt of q2, xshift=-0pt, font=\scriptsize] {$(1,2)$};
    		\node[state] (q3) [below  =0.5cm of q1, xshift=1cm] {$v_3$};
    		\node[state] (lq3) [right  =-6pt of q3, xshift=-0pt, font=\scriptsize] {$(1,2)$};
    		\node[state] (q5) [below  =0.5cm of q3, xshift=-1cm] {$v_5$};
    		\node[state] (lq5) [left  =-8pt of q5, xshift=-0pt, font=\scriptsize] {$(2,2)$};
    		\node[state] (q7) [below  =0.5cm of q5, xshift=-1cm] {$v_7$};
    		\node[state] (lq7) [left  =-6pt of q7, xshift=-0pt, font=\scriptsize] {$(2,2)$};
    		\node[state] (q8) [below  =0.5cm of q5, xshift=1cm] {$v_8$};
    		\node[state] (lq8) [right  =-6pt of q8, xshift=-0pt, font=\scriptsize] {$(2,2)$};
    		\node[state] (q9) [below  =0.5cm of q8, xshift=-1cm] {$v_9$};
    		\node[state] (lq9) [left  =-6pt of q9, xshift=-0pt, font=\scriptsize] {$(2,2)$};
    		
    		\node[state] (q10) [below  =0.7cm of q9, xshift=-0.5cm] {$v_{10}$};
    		\node[state] (q11) [below  =0.7cm of q9, xshift=0.5cm] {$v_{11}$};
    		
    		\path[->] 
    		(q-2) edge (q2)
    		(q-3) edge (q3)
    		
    		(q2) edge node[left=2pt, xshift=5pt, font=\scriptsize]{$s_1$} (q7)
    		     edge node[below=-4pt, xshift=-5pt, font=\scriptsize]{$e_4$} (q5)
    		(q3) edge node[below=-4pt, xshift=5pt, font=\scriptsize]{$e_5$} (q5)
    		     edge node[right=2pt, xshift=-5pt, font=\scriptsize]{$s_2$} (q8)
    		(q5) edge node[below=-4pt, xshift=5pt, font=\scriptsize]{$e_8$} (q7)
    		edge node[below=-4pt, xshift=-5pt, font=\scriptsize]{$e_9$} (q8)
    		(q7) edge node[below=-4pt, xshift=-5pt, font=\scriptsize]{$e_{11}$} (q9)
    		(q8) edge node[below=-4pt, xshift=5pt, font=\scriptsize]{$e_{12}$} (q9)
    		
    		(q9) edge (q10)
    		     edge (q11)
    		;
    		\end{tikzpicture}
    		\caption{}
    		%\label{fig:gull}
    	\end{subfigure}

    }
    \resizebox{0.3\textwidth}{!}{% <------ Don't forget this %  iii
    	\begin{subfigure}[b]{0.4\textwidth}
    		\centering
    		\begin{tikzpicture}[>=stealth,thick]
    		\usetikzlibrary{calc}
    		\tikzstyle{state}=[]
    		
    		\node[state] (q-2) [xshift=-1.5cm] {$v_{-2}$};
    		\node[state] (q-3) [xshift=2.5cm] {$v_{-3}$};
    		
    		\node[state] (q1) [below  =0.7cm of q0, xshift=0.5cm] {};
    		\node[state] (q9) [below  =0.5cm of q8, xshift=-1cm] {$v_9$};
    		\node[state] (lq9) [left  =-6pt of q9, xshift=-0pt, font=\scriptsize] {$(2,2)$};
    		
    		\node[state] (q10) [below  =0.7cm of q9, xshift=-0.5cm] {$v_{10}$};
    		\node[state] (q11) [below  =0.7cm of q9, xshift=0.5cm] {$v_{11}$};

    		\path[->] 
    		(q-2) edge (q9)
    		(q-3) edge (q9)

    		(q9) edge (q10)
    		     edge (q11)
    		;
    		\end{tikzpicture}
    		\caption{}
    		%\label{fig:gull}
    	\end{subfigure}
    	
    }

	%	}
	\caption{Multi-Roots or Multi-Terminals Factorization}
	\label{fig:multi-root-terminal-factorization}
\end{figure}
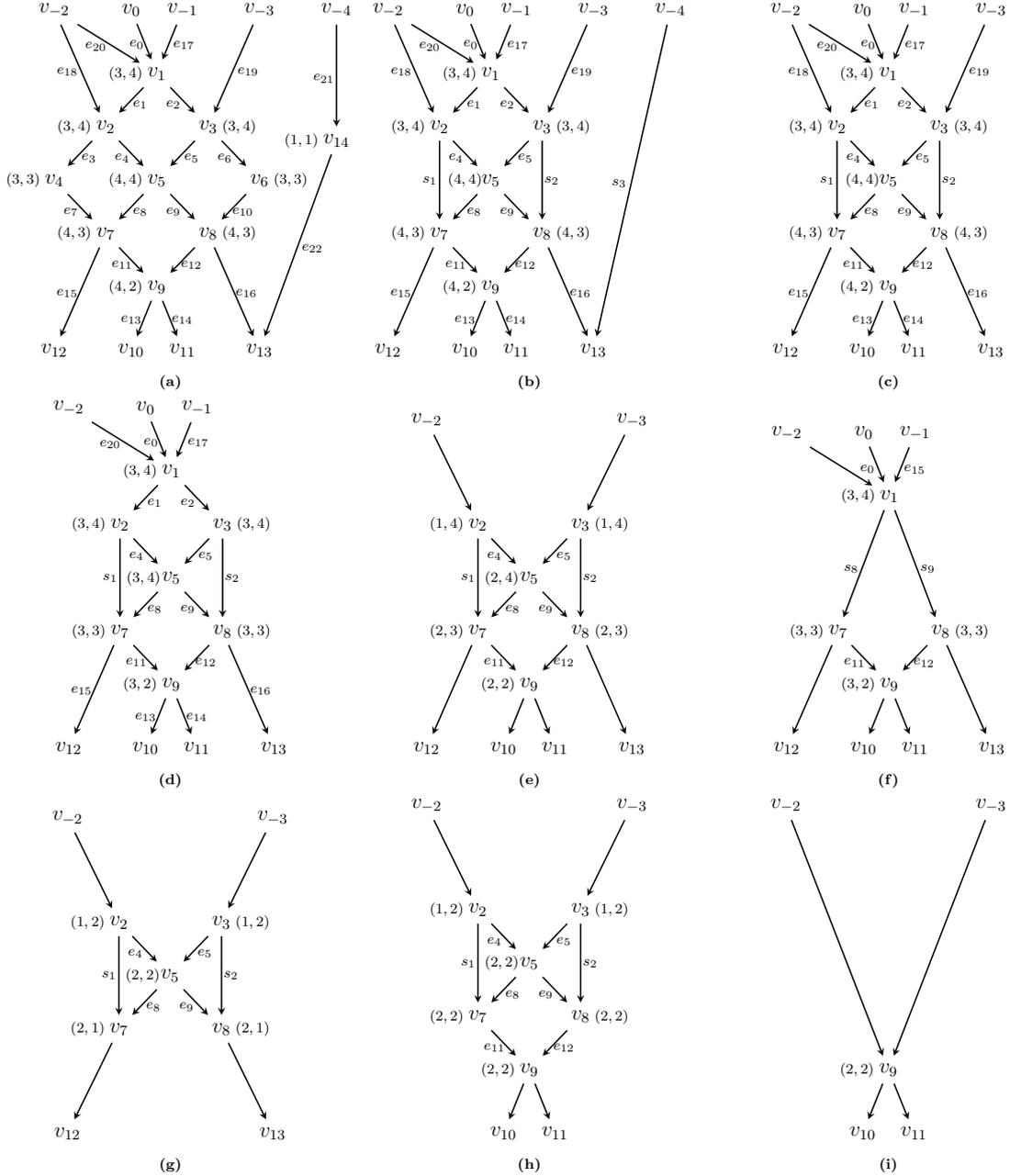

\noindent 
Our strategy of finding common sub-chains and sub-blocks: \\ 
1. If there are simple chains or simple blocks, replace them with reference edges. 
e.g., after replacing all the simple chains in \cref*{fig:multi-root-terminal-factorization} (a), 
it becomes \cref*{fig:multi-root-terminal-factorization} (b). 
\\
2. Find a vertex $v_i$ that has the largest r-degree, 
$\forall v_k\in V^{L(v_i)}\cap v_k\ne v_i, D^{-}_r(v_k)<D^{-}_r(v_i)$, 
in the depth level closest to the first level; 
Find a vertex $v_j$ in $P_{v_i\to X}$ 
that has the largest t-degree, 
$\forall v_k\in V^{L(v_j)}\cap v_k\ne v_j, D^{+}_t(v_k)<D^{+}_t(v_j)$,
in the depth level closest to the last level. 
e.g., in \cref*{fig:multi-root-terminal-factorization} (b), we find that $v_i=v_j=v_5$. 
\\
3. If $Y_i=Y$ and $X_j=X$, go to step 4. 
If $Y_i \ne Y$ or $X_j \ne X$, then 
split all the edges and intermediate vertices in $P_{Y_i\to X_j}$, i.e. $P_{Y_i^{+}\to X_j^{-}}$, 
into a new page from the current page, go to step 6. 
e.g., we split \cref*{fig:multi-root-terminal-factorization} (b) into two pages, 
one of which is \cref*{fig:multi-root-terminal-factorization} (c), 
and the edge between $v_{-4}$ and $v_{13}$ is on the other page. 
Note that we do not split roots and terminals when splitting pages. 
All the roots are still in the same line of the first depth level, 
and all the terminals are still in the same line of the last depth level. 
Different pages share the same lines of roots and terminals. 
\\ 
4. If 
$L(v_i)=L(v_j)$ or 
$(|P_{v_i\to v_j}|=1)\cap(|p_{v_i\to v_j}|=1)$, i.e., $||P_{v_i\to v_j}||=1$, 
then separate root-terminal pairs as follows: 
\\ 
If there are cross-level outgoing edges from root vertices 
or cross-level incoming edges to terminal vertices, then, 
let $E'$ be the set of all outgoing edges that cross most levels from roots, 
separate $P_{E'\to X^{-}}$ from the current page, 
or, 
let $E'$ be the set of all incoming edges that cross most levels to terminals, 
separate $P_{Y^{+}\to E'}$ from the current page. 
e.g., we separate \cref*{fig:multi-root-terminal-graph} (c) into 
\cref*{fig:multi-root-terminal-factorization} (d) 
and \cref*{fig:multi-root-terminal-factorization} (e), 
separate \cref*{fig:multi-root-terminal-factorization} (e) into 
\cref*{fig:multi-root-terminal-factorization} (g) 
and \cref*{fig:multi-root-terminal-factorization} (h). 
Otherwise, 
if $L(v_i)-L(Y)\ge L(X)-L(v_j)$, select a set of roots $Y'$, separate $P_{Y'^{+}\to X^{-}}$; 
if $L(v_i)-L(Y)\le L(X)-L(v_j)$, select a set of terminals $X'$, separate $P_{Y^{+} \to X'^{-}}$; 
Go to step 6. 
\\
5. If $L(v_i)<L(v_j)$ and $||P_{v_i\to v_j}||>1$: \\ 
split $P_{(V^{L(v_i)}_j)^{+}\to (V^{L(v_j)}_i)^{-}}$ into a new page and go to step 6
if corresponding $v_i$ and $v_j$ can be found in it, 
otherwise factorize the vertices and edges in $P_{V^{L(v_i)}_j\to V^{L(v_j)}_i}$ 
locally 
and replace simple chains or simple blocks with reference edges  
until $L(v_i)=L(v_j)$ or $||P_{v_i\to v_j}||=1$. 
e.g., we factorize the vertices and edges between $V^1$ and $V^3$ 
in \cref*{fig:multi-root-terminal-factorization} (d) 
and replace simple chains and simple blocks with reference edges, 
the result page is \cref*{fig:multi-root-terminal-factorization} (f). 
\\
6. Apply the above process to the separated pages 
until there is only one root-terminal pair or
no such $v_i$ and $v_j$ can be found in each page. 
Merge the same root-terminal pairs in different pages. 
\\ \\

\begin{figure}[H] % [H]  \usepackage{float}
	\centering
	\resizebox{.3\textwidth}{!}{% <------ Don't forget this %
		\begin{subfigure}[b]{0.5\textwidth}
			\centering
			\begin{tikzpicture}[>=stealth,thick]
			\usetikzlibrary{calc}
			\tikzstyle{state}=[]
			
			\node[state] (y2) [xshift=0cm] {$y_2$};
			\node[state] (y1) [xshift=7cm] {$y_1$};
			
			\node[state] (e18) [below  =0.7cm of y2, xshift=-0cm] {$e_{18}$};
			\node[state] (e19) [below  =0.7cm of y1, xshift=0cm] {$e_{19}$};
			
			\node[state] (e3e7) [below  =1.2cm of e1, xshift=-1cm] {$e_{3}e_{7}$};
			\node[state] (e4) [below  =0.7cm of e1, xshift=-0cm] {$e_{4}$};
			\node[state] (e5) [below  =0.7cm of e2, xshift=0cm] {$e_{5}$};
			\node[state] (e6e10) [below  =1.2cm of e2, xshift=1cm] {$e_{6}e_{10}$};
			
			\node[state] (e8) [below  =0.7cm of e4, xshift=-0cm] {$e_{8}$};
			\node[state] (e9) [below  =0.7cm of e5, xshift=0cm] {$e_{9}$};
			
			\node[state] (e11) [below  =0.7cm of e8, xshift=0.0cm] {$e_{11}$};
			\node[state] (e12) [below  =0.7cm of e9, xshift=-0.0cm] {$e_{12}$};
			
			\node[state] (e15) [below  =0.7cm of e11, xshift=-3cm] {$e_{15}$};
			\node[state] (e13) [below  =0.7cm of e11, xshift=-0cm] {$e_{13}$};
			\node[state] (e14) [below  =0.7cm of e12, xshift=0cm] {$e_{14}$};
			\node[state] (e16) [below  =0.7cm of e12, xshift=3cm] {$e_{16}$};
			
			\node[state] (x3) [below  =0.7cm of e15, xshift=-0cm] {$x_{3}$};
			\node[state] (x1) [below  =0.7cm of e13, xshift=-0cm] {$x_{1}$};
			\node[state] (x2) [below  =0.7cm of e14, xshift=0cm] {$x_{2}$};
			\node[state] (x4) [below  =0.7cm of e16, xshift=0cm] {$x_{4}$};
			
			\path[->] 
			(y2) edge (e18)
			(y1) edge (e19)
			
			(e18) edge (e3e7)
			edge (e4)
			(e19) edge (e5)
			      edge (e6e10)
			
			(e4) edge (e8)
			edge (e9)
			(e5) edge (e8)
			edge (e9)
			
			(e3e7) edge (e11)
			       edge (e15)
			(e8) edge (e11)
			     edge (e15)
			(e9) edge (e12)
			     edge (e16)
			(e6e10) edge (e12)
			      edge (e16)
			
			(e11) edge (e13)
			edge (e14)
			(e12) edge (e13)
			edge (e14)
			
			(e15) edge (x3)
			(e13) edge (x1)
			(e14) edge (x2)
			(e16) edge (x4)
			
			;
			\end{tikzpicture}
			\caption{}
			%\label{fig:1}
		\end{subfigure}
	}
	\caption{}\label{fig:line-graph-eg}
\end{figure}

\noindent

\begin{flalign}
&s_1=e_3e_7 \label{eq:s1} \\
&s_2=e_6e_{10} \label{eq:s2}\\
&s_4=(e_8e_{11}+e_9e_{12}) \label{eq:s4} \\
&s_5=e_{18}(s_1e_{11}+e_4s_4)=e_{18}(e_3e_7e_{11}+e_4(e_8e_{11}+e_9e_{12})) \label{eq:s5} \\
&s_6=e_{19}(s_2e_{12}+e_5s_4)=e_{19}(e_6e_{10}e_{12}+e_5(e_8e_{11}+e_9e_{12})) \label{eq:s6} \\
&\frac{\partial v_{-2}}{\partial v_{12}}=e_{18}(s_1+e_4e_8)e_{15}
=e_{18}(e_3e_7+e_4e_8)e_{15} \label{eq:v-2_v12}\\
&\frac{\partial v_{-2}}{\partial v_{13}}=e_{18}e_4e_9e_{16} \label{eq:v-2_v13}\\
&\frac{\partial v_{-2}}{\partial v_{10}}=s_5e_{13}
=e_{18}(e_3e_7e_{11}+e_4(e_8e_{11}+e_9e_{12}))e_{13} \label{eq:v-2_v10}\\
&\frac{\partial v_{-2}}{\partial v_{11}}=s_5e_{14}
=e_{18}(e_3e_7e_{11}+e_4(e_8e_{11}+e_9e_{12}))e_{14} \label{eq:v-2_v11}\\
&\frac{\partial v_{-3}}{\partial v_{13}}=e_{19}(e_5e_9+s_2)e_{16}
=e_{19}(e_5e_9+e_6e_{10})e_{16} \label{eq:v-3_v13}\\
&\frac{\partial v_{-3}}{\partial v_{12}}=e_{19}e_5e_8e_{15} \label{eq:v-3_v12}\\
&\frac{\partial v_{-3}}{\partial v_{10}}=s_6e_{13}
=e_{19}(e_6e_{10}e_{12}+e_5(e_8e_{11}+e_9e_{12}))e_{13} \label{eq:v-3_v10}\\
&\frac{\partial v_{-3}}{\partial v_{11}}=s_6e_{14}
=e_{19}(e_6e_{10}e_{12}+e_5(e_8e_{11}+e_9e_{12}))e_{14} \label{eq:v-3_v11}
\end{flalign}

We notice that every edge between two intermediate vertices in line-graph 
is a multiplication relation, 
however in algebraic expressions, 
there are two types of multiplication relation: direct and indirect. 
e.g., in $e_{18}(e_3e_7+e_4e_8)e_{15}$, 
there is a direct multiplication relation between $e_4$ and $e_8$, 
whereas in $e_4(e_8e_{11}+e_9e_{12})$, 
the multiplication relation between $e_4$ and $e_8$ is indirect, 
there is a parenthesis between $e_4$ and $e_8$.

Assume we have found the optimal calculation order of a differentiation graph $G$, 
which is represented by a set of algebraic expressions connected by reference variables. 
In its corresponding line-graph $\tilde{G}$, 
there may be different types of multiplication relations 
in different root-terminal pairs for the same edge, 
elimination of which may break parentheses in the optimal algebraic expressions 
of some root-terminal pairs, 
i.e. break the optimal calculation order. 
e.g., \cref*{fig:line-graph-eg} is 
the line-graph of \cref*{fig:multi-root-terminal-factorization} (e), 
we assume \cref*{eq:s1} - \cref*{eq:v-3_v11} is an optimal calculation order. 
If we eliminate edge $\langle e_{18},e_4\rangle$ directly from \cref*{fig:line-graph-eg} for \cref*{eq:v-2_v13}, 
it will break the calculation order of \cref*{eq:v-2_v12}, \cref*{eq:v-2_v10} 
and \cref*{eq:v-2_v11}. 

We can fix it by explicitly splitting edges with conflicting multiplication relations, 
separating the direct multiplication relation from the indirect, 
and only eliminating those edges that represent direct multiplication relations. 
However, this method not only copies a lot of vertices and edges, 
but also requires more decision-making about how to split and which ones to eliminate. 
Furthermore, the splitting of an edge sometimes depends on the splitting of another adjacent  edge, 
e.g., in \cref*{fig:line-graph-eg}, 
the splitting of $\langle e_{18},e_4\rangle$ 
depends on the splitting of $\langle e_{4},e_9\rangle$.

Another solution is to search for a face elimination sequence 
without explicit splitting. 
Assume we have already known a set of optimal algebraic expressions, 
we can eliminate edges in line-graph in the same order as 
the calculation order of the corresponding set of optimal algebraic expressions, 
avoiding breaking parentheses. 
Unlike parentheses elimination in an algebraic expression, 
the face elimination of a sub-expression block in line-graph 
may not only depend on the sub-expression blocks nested in it 
but also the sub-expression blocks outside it, 
because 
one edge that denotes direct multiplication relation 
in a sub-expression block of an expression 
may denote indirect multiplication relation in a sub-expression block of another expression. 
e.g., the face elimination of $(e_3e_7+e_4e_8)$ in \cref*{eq:v-2_v12} 
depends on the face elimination of $e_8e_{11}$ in \cref*{eq:s5}. 
Therefore when eliminating multiplication relations inside sub-expression blocks, 
we need to eliminate those direct multiplication relations 
that have no indirect multiplication relations in other sub-expression blocks first. 
e.g., elimination of $\langle e_{8},e_{11}\rangle$ and $\langle e_{9},e_{12}\rangle$ in \cref*{fig:line-graph-eg} 
will not break the optimal calculation order in the algebraic expression set
\cref*{eq:s1} - \cref*{eq:v-3_v11}.

\begin{lemma}\label{lemma:1}
Assume that $G(V,E)$ is a differentiation graph, 
$A$ is one of its corresponding optimal algebraic expression sets, 
$s_1$, $s_2$ are edge or reference edges in $G$ and $s_1\succ s_2$. 
If $s_1$ and $s_2$ have both direct and indirect multiplication relations in $A$, 
then there must be an edge or reference edge $s_3$ 
in the indirect multiplication relation of $s_1$ and $s_2$, 
such that $s_1(s_2s_3+\Delta s)$ or $(\Delta s+s_3s_1)s_2$, 
where $\Delta s$ is the rest sub-expressions in the block %$(s_2s_3+\Delta s)$ 
and $\Delta s\ne0\cap s_3\notin\Delta s$. 
\end{lemma}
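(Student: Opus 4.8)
The plan is to read the bracketed shape of the indirect relation directly off the structural fact, established in the preceding sections, that in an optimal algebraic expression set every parenthesised sub-expression corresponds to a simple block of the suitably factorised differentiation graph, whose summands are the edge (or reference-edge) products along the simple chains issuing from its source --- chains that, by \cref{def:simple-block,def:simple-chain}, are pairwise vertex-disjoint except at the common sink. Granting this correspondence, the lemma reduces to the anatomy of one simple block, and the hypothesis that $s_1,s_2$ also stand in a \emph{direct} relation somewhere in $A$ will be used only to locate the relevant vertex and to dispose of a single degenerate configuration.

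First I would invoke the source--sink symmetry of differentiation graphs and their expressions: reversing all edges interchanges the shapes $s_1(\cdots)$ and $(\cdots)s_1$, so it suffices to treat the case where the single parenthesis between $s_1$ and $s_2$ follows $s_1$, i.e.\ $A$ contains a sub-expression $s_1\cdot B$ with $B$ parenthesised and $s_2$ occurring inside $B$. Since $s_1$ and $s_2$ are in a multiplication relation (and, from the direct occurrence, $s_1^{+}=s_2^{-}$), $s_2$ must be the leading factor of one summand of $B$, so $B$ is a simple block whose source is $s_1^{+}=s_2^{-}$ and which has $s_2$ among the outgoing edges of that source. I would then establish the three parts of the claimed form. (i) $\Delta s\ne0$: were the $s_2$-summand $T$ the only summand of $B$, the surrounding parenthesis would be removable and the relation would be direct rather than indirect in the \emph{optimal} set $A$, a contradiction; hence $B=T+\Delta s$ with $\Delta s\ne0$. (ii) The witness $s_3$: $T$ continues past $s_2$ along the simple chain of $B$ running from $s_2^{+}$ to the sink of $B$, and the first edge or reference edge of that chain is the required $s_3$, giving $T=s_2s_3\cdots$. (iii) $s_3\notin\Delta s$: the simple chains issuing from the source of a simple block are pairwise vertex-disjoint except at the sink, hence edge-disjoint; since $s_3$ lies on the chain carrying $s_2$, it appears in no other summand of $B$, so $s_3\notin\Delta s$.

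The hard part is step (ii): pinning down exactly which parenthesis ``realises'' the indirect relation, and disposing of the degenerate case in which $s_2^{+}$ is itself the sink of $B$, so that $T=s_2$ on the nose and the naive choice of $B$ offers no candidate $s_3$. This is precisely where the direct occurrence of $s_1 s_2$ enters: it pins down $s_1^{+}=s_2^{-}$, and, together with the freedom to represent a nested simple block by a reference edge, it lets one re-choose the block so that the edge continuing past the sink is pulled inside it by distributivity, yielding a genuine $s_2s_3$ term; here one must check that this re-bracketing leaves the fma count unchanged, so that $A$ stays optimal. Once the shape $s_1(s_2s_3+\Delta s)$ --- or its mirror $(\Delta s+s_3s_1)s_2$ --- is secured, parts (i) and (iii) give $\Delta s\ne0$ and $s_3\notin\Delta s$, which is \cref{lemma:1}.
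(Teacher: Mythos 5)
Your overall strategy---reduce to a single parenthesised block following $s_1$, observe that $s_2$ heads one of its summands, and let the continuation of that summand supply $s_3$---is in the same spirit as the paper's proof, which enumerates the four candidate forms $s_1(s_2+\Delta s)$, $s_1(s_2s_3+\Delta s)$, $(\Delta s+s_1)s_2$, $(\Delta s+s_3s_1)s_2$ and kills the first and third. The genuine gap sits exactly where you flag the ``hard part'': the degenerate case in which the $s_2$-summand is $s_2$ alone, i.e.\ the form $s_1(s_2+\Delta s)$. You propose to dispose of it by re-choosing the block and re-bracketing so that a factor is ``pulled inside by distributivity,'' checking that the fma count is unchanged. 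This cannot close the argument, for two reasons. First, the lemma is a claim about the given optimal set $A$; rewriting $A$ into a differently bracketed set in which the relation happens to have the desired shape says nothing about the occurrence in $A$ itself. Second, the correct resolution is not that this case can be repaired cost-neutrally but that it cannot occur at all: the paper argues that if $s_1(s_2+\Delta s)$ appears in some $a_1\in A$ while the direct relation appears in some $a_2\in A$ as $s_i(s_1s_2+\Delta s_a)s_j$, then, since $s_2$ and $\Delta s$ share source and destination vertices, $a_2$ must also contain the sibling term $s_i(s_1\Delta s+\Delta s_b)s_j$, and the two combine into $s_i(s_1(\Delta s+s_2)+\Delta s_b)s_j$ with strictly fewer operations, contradicting optimality of $A$. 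The coexistence of the direct relation is the lever for a contradiction, not a licence to re-bracket; your version never extracts that contradiction and so never actually excludes the $\Delta s$-degenerate form from $A$.

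A secondary point: your steps (i) and (iii) rest on identifying every parenthesis of an optimal expression set with a simple block whose summand chains are pairwise edge-disjoint. The paper does not establish that correspondence for arbitrary optimal $A$ (reference edges for nested blocks complicate it), and its proof does not need it---the conditions $\Delta s\ne 0$ and $s_3\notin\Delta s$ are simply part of the two surviving forms once the other two are eliminated. If you keep your route, that structural correspondence needs its own justification.
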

\begin{proof}
There are 4 possible forms of indirect multiplication relation for $s_1$ and $s_2$:
$s_1(s_2+\Delta s)$, $s_1(s_2s_3+\Delta s)$, $(\Delta s+s_1)s_2$ and $(\Delta s+s_3s_1)s_2$. 
Assume that $a_1\in A\ni a_2$, 
and $\exists s_1(s_2+\Delta s)\in a_1\cap s_i(s_1s_2+\Delta s_a)s_j\in a_2$, 
$s_i$ is the predecessor of $(s_1s_2+\Delta s_a)$, 
$s_j$ is the successor of $(s_1s_2+\Delta s_a)$, 
$\Delta s_a$ is the sibling of $s_1s_2$, 
$s_i=1$ if it has no predecessor, 
$s_j=1$ if it has no successor, 
$\Delta s_a=0$ if it has no sibling. 
From the sub-expressions $s_1(s_2+\Delta s)$ we know that $s_2$ and $\Delta s$ 
have the same source vertex and destination vertex, 
i.e., $s_2^{-}=\Delta s^{-}\cap s_2^{+}=\Delta s^{+}$, 
therefore there must be a sibling sub-expression of $s_i(s_1s_2+\Delta s_a)s_j$, 
such that $%\frac{\partial s_i^{-}}{\partial s_j^{+}}=
s_i(s_1\Delta s+\Delta s_b)s_j+s_i(s_1s_2+\Delta s_a)s_j\in a_2$, 
where $\Delta s_b=0$ if $s_1\Delta s$ has no sibling. 
Obviously, $s_i(s_1\Delta s+\Delta s_b)s_j+s_i(s_1s_2+\Delta s_a)s_j=
s_i(s_1(\Delta s+s_2)+\Delta s_b)s_j$, 
the number of calculation can be reduced, which means $a_2$ and $A$ are not optimal. 
Therefore, $s_1(s_2+\Delta s)$ is impossible. 
Similarly, we can prove $(\Delta s+s_1)s_2$ is also impossible. 
\end{proof} 

Notice that we can omit $\Delta s$ in the indirect multiplication relations above, 
simply use $s_1|s_2s_3$ and $s_3s_1|s_2$ to denote 
$s_1(s_2s_3+\Delta s)$ and $(\Delta s+s_3s_1)s_2$ respectively. 
 
\begin{lemma}\label{lemma:2} 
Assume that $G(V,E)$ is a differentiation graph, 
$A$ is one of its corresponding optimal algebraic expression sets, 
$s_1$, $s_2$ are edge or reference edges in $G$ and $s_1\succ s_2$. 
If $s_1$ and $s_2$ have both direct and indirect multiplication relations in $A$, 
then in order to avoid breaking parentheses between $s_1$ and $s_2$ in $A$, 
the face elimination of $\langle s_1,s_2\rangle$ in line-graph 
depends on the face eliminations of all $\langle s_2,s_i\rangle$ that have $s_1|s_2s_i$ in $A$. 
\end{lemma}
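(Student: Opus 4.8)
The plan is to argue about the admissible relative orderings of face eliminations in \emph{any} sequence that realizes the optimal set $A$ without breaking parentheses, and to derive a contradiction from the wrong order. First I would apply \cref{lemma:1}: since $s_1\succ s_2$ and this pair has both a direct and an indirect multiplication relation in $A$, every $a\in A$ in which $s_1$ and $s_2$ occur indirectly contains them inside a block of the form $s_1(s_2s_i+\Delta s)$ with $\Delta s\ne 0$, $s_i\notin\Delta s$ (the mirror form $(\Delta s+s_js_1)s_2$ is explicitly left aside, the lemma speaking only of the occurrences abbreviated $s_1\mid s_2s_i$). So it suffices to show: in any realizing face-elimination sequence, for each such occurrence the face $\langle s_2,s_i\rangle$ is eliminated strictly before $\langle s_1,s_2\rangle$. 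Ranging over all occurrences then gives the asserted dependence on \emph{all} the $\langle s_2,s_i\rangle$.

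To prove that, fix an occurrence $s_1(s_2s_i+\Delta s)$ in some $a\in A$. Using the correspondence between parenthesized subexpression blocks and vertices of the (referenced) line-graph that was built up in the previous sections --- the parenthesis forces inside-out evaluation --- respecting this block means that the sum $(s_2s_i+\Delta s)$ is first accumulated into one vertex $m$ with $m^{-}=s_2^{-}$, and only afterwards is the product $s_1\cdot m$ carried out, which is precisely the face elimination $\langle s_1,m\rangle$; and that face can exist only if $s_1\in P_m$. Next I would trace how $m$ comes to be: its summand $s_2s_i$ enters $m$ through the vertex produced by eliminating the face $\langle s_2,s_i\rangle$, which inherits the predecessor set $P_{s_2}$ current at that instant, and the remaining summands --- each of which also has source $s_1^{+}=s_2^{-}$, hence the same predecessor set --- are merged in afterwards. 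Therefore $s_1\in P_m$ forces $s_1\in P_{s_2}$ at the moment $\langle s_2,s_i\rangle$ is eliminated, i.e. the line-graph edge $\langle s_1,s_2\rangle$ is still present then, i.e. it has not yet been eliminated.

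For the converse, which is the crux, I would observe that eliminating $\langle s_1,s_2\rangle$ removes that edge from $\tilde{E}$ (Naumann's step (2)) and so deletes $s_1$ from $P_{s_2}$ permanently: a face between two already-existing vertices is determined by the fixed $G$-adjacency and is never re-created, fill-in only introduces fresh vertices, and $s_2$ is not isolated at that moment because $\langle s_2,s_i\rangle$ is still present, so $s_2$ survives but with $s_1\notin P_{s_2}$ from then on. Consequently any later elimination of $\langle s_2,s_i\rangle$ produces a vertex with predecessor set missing $s_1$, the vertex $m$ built from it still has $s_1\notin P_m$, the face $\langle s_1,m\rangle$ never comes into being, and the block $s_1(s_2s_i+\Delta s)$ cannot be realized without distributing $s_1$ over the parenthesis --- contradicting that the sequence realizes $A$. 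Hence $\langle s_2,s_i\rangle$ must precede $\langle s_1,s_2\rangle$, as claimed.

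The step I expect to be the main obstacle is making the assembly argument of the second paragraph fully rigorous: one must rule out that the parenthesized block could be accumulated by some roundabout route that sidesteps the face $\langle s_2,s_i\rangle$ (for instance by splitting the vertex $s_2$, or by associating the leading product $s_2s_i\cdots$ differently), and one must pin down precisely which face-elimination events a given parenthesization compels, including fixing the left-to-right reading convention inside a bare product so that ``$\langle s_2,s_i\rangle$'' is indeed the relevant face. I would handle this by leaning on the earlier identification of inside-out elimination of simple chains and simple blocks with edge elimination in the line-graph, and on the standing assumption (stated just before this lemma) that we are after a face-elimination sequence \emph{without} explicit splitting, so that the predecessor set of the particular vertex $s_2$ evolves exactly as the line-graph bookkeeping prescribes.
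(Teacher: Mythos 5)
Your proof is correct and rests on exactly the mechanism the paper uses: the paper's entire proof is the single observation that eliminating $\langle s_2,s_i\rangle$ implicitly splits $s_2$ into a fillin vertex inheriting $P_{s_2}\ni s_1$, i.e.\ creates a new edge connecting $s_1$ to it --- which is precisely the content of your second paragraph. Your added converse argument (that eliminating $\langle s_1,s_2\rangle$ first permanently deletes $s_1$ from $P_{s_2}$ and forces the parenthesis to be distributed) is left implicit in the paper but is consistent with its framework.
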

\begin{proof}
For all the edges $\langle s_2,s_i\rangle$ in line-graph that have $s_1|s_2s_i$ in $A$, 
the elimination of $\langle s_2,s_i\rangle$ 
will implicitly split $s_2$ into another vertex 
and create a new edge connecting $s_1$ to it. 
\end{proof} 

Notice that It is possible for direct and indirect multiplication relations 
to form circular structure in an algebraic expression set,  
because there are two directions in $s_1|s_2s_3$ and $s_3s_1|s_2$. 
e.g., $s_1|s_2s_3$, $s_2|s_3s_4$, 
$s_5s_3|s_4$, $s_6s_5|s_3$, $s_7s_6|s_5$, $s_8s_7|s_6$, 
$s_8|s_7s_1$, $s_7|s_1s_2$. 
It seems impossible for us to eliminate an edge in the circular dependencies 
without breaking parentheses. 

\newpage
\begin{proposition}\label{proposition:1}
Given a differentiation graph, 
face elimination can achieve optimal Jacobian accumulation 
that requires minimum number of calculations 
if there exists at least one corresponding optimal algebraic expression set 
that doesn't contain circular elimination dependencies.  
\end{proposition}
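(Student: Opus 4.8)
The plan is to take an optimal algebraic expression set $A$ whose elimination-dependency digraph is acyclic and to build from it a legal face-elimination sequence in $\tilde{G}$ that performs exactly the same multiplications and additions as the evaluation prescribed by $A$; since $A$ realizes Bauer's formula with the minimum number of operations, the resulting face elimination attains that same minimum, which is the conclusion.

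First I would make the elimination-dependency digraph $D$ precise. Its nodes are the multiplications occurring in $A$, each realized as a face elimination in the (possibly reference-edge-augmented) line-graph, and, following \cref{lemma:2}, it carries an arc from the step eliminating $\langle s_1,s_2\rangle$ to every step eliminating $\langle s_2,s_i\rangle$ with $s_1\mid s_2 s_i$ in $A$ (and symmetrically for $s_3 s_1\mid s_2$). By \cref{lemma:1} these arcs exhaust the obstructions: whenever $s_1$ and $s_2$ have both a direct and an indirect relation in $A$, the indirect one is necessarily of the reduced form $s_1(s_2 s_3+\Delta s)$ or $(\Delta s+s_3 s_1)s_2$, so collapsing the offending parenthesised blocks into reference edges---which by \cref{lemma:2} is exactly the elimination of those faces $\langle s_2,s_i\rangle$---strips from the successor set of $s_2$ the successors belonging to those indirect contexts and turns $\langle s_1,s_2\rangle$ into a pure direct multiplication. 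I would also observe that the ordinary inner-to-outer nesting of sub-expressions within a single expression of $A$ is already contained in the reachability order of $D$ (a parenthesis $s_1(\cdots s_2\cdots)$ forces $\langle s_1,s_2\rangle$ to wait for the faces inside), so any linear extension of $D$ automatically respects ``innermost first''.

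Next, using the hypothesis that $D$ is acyclic, I would fix a linear extension $<$ of $D$ and eliminate faces in that order, adding to the line-graph a vertex for the reference edge whenever a sub-block has collapsed, exactly as in \cref{fig:factorization-with-reference} and \cref{fig:chain-block-elimination}; shared reference edges of $A$ are matched by shared substitute vertices, so the structure remains a DAG. The core of the argument is the invariant that, when a step $\langle s_1,s_2\rangle$ is reached, $s_1$ and $s_2$ are genuinely adjacent in the current line-graph and eliminating their face disturbs only the direct-relation context: every face that \cref{lemma:2} requires to precede it has already been eliminated, so each parenthesised context of $s_2$ beneath $s_1$ has collapsed to its own reference edge, and the step is a legitimate face elimination, carried out either by explicit split-then-eliminate or by Naumann's fill-in/absorption/merge rule. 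Each such step performs exactly one multiplication---a fill-in $c_{k'}=c_j c_i$, or an absorption $c_k=c_k+c_j c_i$ that additionally performs one addition---every addition of $A$ is realized by an absorption or a merge, and no fill-in is performed that $A$ does not perform, precisely because the order forbids the premature elimination that would distribute a factor over a sum. Hence the multiplications of the sequence are in bijection with those of the evaluation of $A$.

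Finally, since $A$ is an optimal algebraic expression set---a computation of Bauer's formula with minimal multiplication count---and the sequence just built has the same count, face elimination attains optimal Jacobian accumulation, which proves the proposition. The main obstacle I expect is the invariant of the previous paragraph: one must show by induction on $<$ (equivalently on the nesting structure of $A$) that a $D$-respecting order never demands the elimination of a face whose two edges are not currently adjacent, and that the evolving line-graph---with its reference vertices, fill-ins, absorptions and merges---faithfully represents every parenthesised block and every shared reference definition of $A$ without any spurious extra multiplication. This is exactly where acyclicity enters: a cycle in $D$ would be a family of blocks each of which must collapse before another, making a consistent schedule impossible, which is why the proposition must assume that some optimal $A$ is free of circular elimination dependencies.
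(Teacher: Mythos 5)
Your proposal is correct and follows essentially the same route as the paper: both use Lemma~\ref{lemma:2} to define the elimination dependencies, invoke the absence of circular dependencies to obtain a consistent order (your topological sort of $D$ is the explicit form of the paper's recursive ``eliminate the prerequisites first'' argument), and conclude that the resulting face-elimination sequence realizes $A$ without breaking parentheses. Your version is in fact more careful than the paper's sketch --- in particular the adjacency invariant and the bijection between face eliminations and the multiplications of $A$ are stated only implicitly in the paper --- but no new idea is introduced.
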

\begin{proof} 
Assume that $G(V,E)$ is a differentiation graph, 
$A$ is one of its corresponding optimal algebraic expression sets 
that doesn't contain circular elimination dependencies, 
$s_1$, $s_2$ are edge or reference edges in $G$ and $s_1\succ s_2$. 
If $s_1$ and $s_2$ have both direct and indirect multiplication relations in $A$, 
then for all the edges $\langle s_2,s_i\rangle$ in line-graph that have $s_1|s_2s_i$ in $A$, 
if $\langle s_2,s_i\rangle$ has no indirect multiplication relation in $A$, 
we can eliminate it first, 
if all $\langle s_2,s_i\rangle$ that have $s_1|s_2s_i$ in $A$ are eliminated, 
we can safely eliminate the edge $\langle s_1,s_2\rangle$ in line-graph 
without worrying about accidentally breaking parentheses in any $s_1|s_2s_i$. 
If $\langle s_2,s_i\rangle$ also has indirect multiplication relations $s_2|s_is_j$ in $A$, 
we can apply this process to $s_2s_i$ and $s_2|s_is_j$ recursively. 
\end{proof}

We don't yet know if it is impossible to form circular dependencies 
in the optimal algebraic expression sets of a differentiation graph. 
If all the optimal algebraic expression sets of a differentiation graph 
contain circular dependencies, 
we may not be able to achieve optimal by performing face elimination. 
\\

\section{Conclusion}

There is close correspondence between differentiation graph and algebraic expression, 
simple differentiation graph and algebraic expression are interconvertible 
provided that the multiplication relations in the algebraic expressions 
converted from differentiation graphs are noncommutative. 
Differentiation graph is more flexible, 
it can merge the same intermediate variable or subexpression 
that cannot be factored out in algebraic expression. 
There is no direct representation or corresponding counterpart of complex block 
in algebraic expression. 
We can factorize overly merged complex blocks or chains, 
and transform them into simple differentiation graphs only with simple chains and simple blocks. 
For the simple blocks and simple chains, 
elimination of paths is equivalent to elimination of corresponding edges and vertices 
as long as it follows the optimal calculation order. 
Local Jacobian multiplication 
can automatically factorize complex blocks
but it is up to us to find a better accumulation order. 
The purpose of multiple blocks or chains being intertwined together is 
to merge the same common sub-chains or sub-blocks, 
if we know the optimal common sub-chain or sub-block combination set, we can compute 
them separately. 
In line-graph there may be different types of multiplication relations for the same edge, 
elimination of which may break the optimal calculation order. 
Unlike parentheses elimination in an algebraic expression, 
the face elimination of a sub-expression block in line-graph 
may not only depend on the sub-expression blocks nested in it 
but also the sub-expression blocks outside it. 
It is possible for direct and indirect multiplication relations to form circular dependencies 
in an algebraic expression set. 
If all the optimal algebraic expression sets of a differentiation 
graph contain circular dependencies, 
we may not be able to achieve optimal by performing face elimination.

\newpage
\nocite{*} % Show all unused Bib-entries

\bibliography{chain-rule.bib}
\bibliographystyle{plain} % ieeetr IEEEtran plain
\end{document}